\documentclass[bj]{imsart}   
\usepackage{color,graphicx,verbatim,amsmath,amsthm,verbatim, amsfonts, url,cases,tikz,epsfig}
\usepackage[numbers,sort&compress]{natbib}

\startlocaldefs

\newcommand{\PP}{\mathbb{P}}
\newcommand{\sid}[1]{{\color{black} #1}}
\newcommand{\EE}{\mathbb{E}}
\newcommand{\RR}{\mathbb{R}}
\newcommand{\dd}{\mathrm{d}}
\newcommand{\ind}{\boldsymbol{1}}
\newcommand{\bZ}{\boldsymbol{Z}}

\newcommand{\bD}{\boldsymbol{D}}

\newcommand{\br}{\boldsymbol{r}}
\newcommand{\bv}{\boldsymbol{v}}
\newcommand{\bx}{\boldsymbol{x}}
\newcommand{\by}{\boldsymbol{y}}
\newcommand{\bu}{\boldsymbol{u}}

\newcommand{\bs}{\boldsymbol{s}}

\newcommand{\btheta}{\boldsymbol{\theta}}

\newcommand{\bolde}{\boldsymbol{e}}
\newcommand{\bxi}{\boldsymbol{\xi}}
\newcommand{\origin}{\boldsymbol{0}}
\newcommand{\bzero}{\boldsymbol{0}}
\numberwithin{equation}{section}
\numberwithin{table}{section}
\numberwithin{figure}{section}

\newcommand{\Din}{D^\text{in}}
\newcommand{\Dout}{D^\text{out}}

\newcommand{\convas}{\stackrel{\text{a.s.}}{\longrightarrow}}
\newcommand{\convp}{\stackrel{p}{\longrightarrow}}

\newtheorem{Theorem}{Theorem}[section]

\newtheorem{Definition}{Definition}[section]

\endlocaldefs

\allowdisplaybreaks

\begin{document}           
\bibliographystyle{imsart-number}

\begin{frontmatter}
\title{Preferential Attachment with Reciprocity: Properties and Estimation}
\runtitle{PA with Reciprocity}

\begin{aug}
\author[A]{\inits{D.}\fnms{Daniel} \snm{Cirkovic}\ead[label=e1,mark]{cirkovd@stat.tamu.edu}},
\author[A]{\inits{T.}\fnms{Tiandong} \snm{Wang}\ead[label=e2,mark]{twang@stat.tamu.edu}}
\and
\author[B]{\inits{S. I.}\fnms{Sidney} \snm{Resnick}\ead[label=e3,mark]{sir1@cornell.edu}}
\address[A]{Department of 
	Statistics, Texas 
	A\&M University, College 
		Station, TX 77843, U.S.. \printead{e1,e2}}

\address[B]{School of Operations Research and Information Engineering,
	 Cornell University, Ithaca, NY 14853, U.S.. \printead{e3}}
\end{aug}

\begin{abstract}
Reciprocity in social networks helps understand information exchange between two individuals, and 
indicates interaction patterns between pairs of users.
A recent study \citep{wang:resnick:2021a} indicates the reciprocity coefficient of a classical directed preferential attachment (PA) model does not match empirical evidence.
 In this paper, we extend the classical 3-scenario directed PA model by adding an additional parameter that controls the probability of creating a reciprocal edge. Unlike the model in \cite{wang:resnick:2021b}, our proposed model also allows edge creation between two existing nodes, making it a more realistic choice for fitting to real datasets. In addition to analysis of the theoretical properties of this PA model with reciprocity, we provide and compare two estimation procedures for the fitting of the extended model to both simulated and real datasets. 
The fitted models provide a good match with the empirical tail distributions of both in- and out-degrees. Other mismatched diagnostics suggest that further generalization of
the model is warranted.
\end{abstract}

\begin{keyword}
\kwd{Reciprocity}
\kwd{Multi-type branching process}
\kwd{Estimation}
\kwd{Regular variation}
\kwd{Preferential attachment}
\end{keyword}

\end{frontmatter}

\section{Introduction}

Reciprocal edges in a directed network correspond to mutual links between a pair of nodes, and
they describe the information exchange among users on social networks (cf. \cite{kilduff2006paradigm, molm2007building}). 
Consider, for instance, the interaction of users through wall posts on Facebook: when a user receives a message on his/her Facebook wall from a friend, then he/she is likely to reply to the post, thus forming a pair of mutual directed links between the two users. 
One classic quantitative measure of reciprocity is \emph{the reciprocity coefficient}, which is defined as
the proportion of reciprocated edges out of the total number of edges in a given network (cf. \cite{newman:etal:2001,jiangetal:2015, wasserman:faust:1994}), i.e.
for a directed graph $G=(V,E)$ with node set $V$ and edge set $E$, 
\begin{align}
\label{eq:def_recip}
r(G):= \frac{\bigl|\{(w,v)\in E: (v,w)\in E\}\bigr|}{|E|},
\end{align}
is the reciprocity coefficient.
Several empirical studies have used $r(G)$ as a summary statistic to
describe the reciprocity level for different types of networks, such
as the world trade web \cite{WorldTradeWeb:2003, WorldTradeWeb:2002},
neural networks \cite{White1986TheSO}, email networks
\cite{newman:etal:2001, Email:2002}, and social networks
\cite{Jeong:2000}. \sid{Examples of values for the reciprocity
  coefficient for different networks are given at
  \url{http://konect.cc/statistics/} and show a wide range of values in the
interval $[0,1]$; actual values depend on the dataset and the type of
network being sampled.}

In this paper, we consider social networks as our motivating examples.
The study in \cite{jiangetal:2015} compares eight types of networks, and concludes that online social networks, e.g. \cite{facebook:2009, flickr:2009, googleplus:2012}, tend to have a higher proportion of reciprocal edges than biological networks, communication networks, software call graphs and peer-to-peer networks.
Therefore, to model the dynamics of a social network, it is important to 
take into account the feature of having a high reciprocity coefficient.

When modeling directed social networks, the preferential attachment (PA) model (cf. \cite{bollobas:borgs:chayes:riordan:2003, krapivsky:redner:2001}) is an appealing choice in that it captures the scale-free property for complex networks, where both in- and out-degree distributions have Pareto-like tails
(cf. \cite{resnick:samorodnitsky:towsley:davis:willis:wan:2016, resnick:samorodnitsky:2015, wan:wang:davis:resnick:2017,wang:resnick:2019}). 
However, the analysis in \cite{wang:resnick:2021a} shows that the asymptotic behavior of the reciprocity coefficient in a classical directed PA model for
certain choices of the model parameters is close to 0.
This indicates a lack of fit for the classical PA model when the given network has a large proportion of reciprocal edges; for instance, the reciprocity coefficient for the network in \cite{facebook:2009} is 0.615. 
With such discrepancy in mind, a simple PA model with reciprocity
is proposed in \cite{wang:resnick:2021b}, which, however, does not allow edges between two existing nodes, but requires adding a new node at each step of the network evolution. This is an unrealistic assumption, making the model in \cite{wang:resnick:2021b} hardly applicable to real datasets. 
Therefore, in this paper, we consider a more realistic version of the
reciprocal PA model by taking into account edge creations between two
existing nodes. We also discuss the fitting of this model 
\sid{to both simulated and real network datasets.}

Our proposed model extends the classical 3-scenario directed PA model (cf. \cite{bollobas:borgs:chayes:riordan:2003,wan:wang:davis:resnick:2017}) in a way that whenever a new directed edge $(v,w)$ is formed following the PA rule, we add its reciprocal counterpart $(w,v)$ simultaneously with probability $\rho\in (0,1)$. However, the allowance of edge creation between existing nodes makes the derivation of theoretical properties more challenging, compared with the one given in \cite{wang:resnick:2021b}. Here we generalize methods in \cite{wang:resnick:2021b} by carefully embedding the in- and out-degree sequences into a sequence of two-type branching processes with immigration, and analyze the asymptotic dependence structure between large in- and out-degrees using embedding results. 

Furthermore, we provide two estimation approaches to fit the proposed model, namely likelihood based and extreme-value based methods. There are two major concerns when it comes to the likelihood based approach. 
First, the model suggests instantaneous creation of reciprocal edges and this is unlikely to be reflected in the data,
since, for example, there may have been already a number of wall posts among Facebook users during the time one post is sent and replied to.
Second, timestamp information does not
distinguish between an edge created by reciprocity or by the standard preferential attachment rule. Because of these two concerns, 
we combine the maximum likelihood method with a window estimator to produce parameter estimates.
Our second estimation method uses the derived asymptotic properties of large in- and out-degrees coupled with extreme value methods. 
Based on prior studies in \cite{wan:wang:davis:resnick:2017b}, \sid{we
expect} the extreme-value estimation method to be more robust against
model \sid{error} compared to likelihood-based approaches;
\sid{furthermore, it}
is  
applicable even when timetamps are coarse.
Both methods are also applied to a real dataset, the Facebook wall post data \citep{facebook:2009} available at
\url{http://konect.cc/networks/facebook-wosn-wall/}, and our fitted models provide reasonable fits for the marginal
in- and out-degree distributions.

The rest of the paper is organized as follows. We start with a description of the proposed model in Section~\ref{sec:model}, and Section~\ref{sec:embed} gives details on the multi-type branching process, which lays the foundation for the derivation of theoretical results stated in Section~\ref{sec:conv}.
We also present two estimation procedures in Section~\ref{sec:est}, both of which are then applied to simulated and real network datasets in Sections~\ref{sec:sim} and \ref{sec:real}, respectively. We provide important concluding remarks based on the fitting results in Section~\ref{sec:rmk}. 
Details on technical proofs are collected in the appendix.

\subsection{The Proposed PA Model with Reciprocity}\label{sec:model}
Initialize the model with graph $G(0)$, which consists of one node (labeled as Node 1) and a self-loop.
Let $G(n)$ denote the graph  after $n$ steps and
 $V(n)$ be the set of nodes in $G(n)$ with $V(0) = \{1\}$ and $|V(0)| = {1}$.
Denote the set of directed edges in $G(n)$ by $E(n)$ such that
an ordered pair $(w_1,w_2)\in E(n)$, $w_1,w_2\in V(n)$, 
represents a directed edge $w_1\mapsto w_2$. 
When $n=0$, we have $E(0) = \{(1,1)\}$.

Set $\bigl(\Din_w(n), \Dout_w(n)\bigr)$ to
be the in- and out-degrees of node $w\in V(n)$. 
We use the convention that $\Din_w(n) = \Dout_w(n) = 0$ if $w\notin V(n)$.
From $G(n)$ to $G(n+1)$, one of the following scenarios happens:
\begin{enumerate}
\item[(i)] With probability $\alpha$, 
we add a new node $|V(n)|+1$ with a directed edge $(|V(n)|+1,w)$, where $w\in V(n)$ is chosen with probability
\begin{equation}\label{eq:Din}
\frac{\Din_w(n)+\delta}{\sum_{w\in V(n)} (\Din_w(n)+\delta)}
= \frac{\Din_w(n)+\delta}{|E(n)|+\delta |V(n)|},
\end{equation}
and update the node set $V(n+1)=V(n)\cup\{|V(n)|+1\}$.
If, with probability $\rho\in (0,1)$, a reciprocal edge $(w, |V(n)|+1)$ is added, we
 update the edge set as
$E(n+1) = E(n)\cup \{(|V(n)|+1,w), (w,|V(n)|+1)\}$.
If the reciprocal edge is not created, set $E(n+1) = E(n)\cup \{(|V(n)|+1,w)\}$.
\item[(ii)] With probability $\gamma$, 
we add a new node $|V(n)|+1$ with a directed edge $(w,|V(n)|+1)$, where $w\in V(n)$ is chosen with probability
\begin{equation}\label{eq:Dout}
\frac{\Dout_w(n)+\delta}{\sum_{w\in V(n)} (\Dout_w(n)+\delta)}
= \frac{\Dout_w(n)+\delta}{|E(n)|+\delta |V(n)|},
\end{equation}
and update the node set $V(n+1)=V(n)\cup\{|V(n)|+1\}$.
If, with probability $\rho\in (0,1)$, a reciprocal edge $(|V(n)|+1,w)$ is added, we
update the edge set as 
$E(n+1) = E(n)\cup \{(|V(n)|+1,w), (w,|V(n)|+1)\}$.
If the reciprocal edge is not created, set $E(n+1) = E(n)\cup \{(w,|V(n)|+1)\}$.
\item[(iii)] With probability $\beta\in (0,1)$, we add an edge $(v,w)$ between two existing nodes $v,w\in V(n)$, 
with probability
\[
\frac{\Din_{w}(n)+\delta}{\sum_{w\in V(n)} (\Din_{w}(n)+\delta)}\frac{\Dout_{v}(n)+\delta}{\sum_{v\in V(n)} (\Dout_{v}(n)+\delta)}
= \frac{\Din_{w}(n)+\delta}{|E(n)|+\delta |V(n)|}\frac{\Dout_{v}(n)+\delta}{|E(n)|+\delta |V(n)|}. 
\]
Then with probability $\rho\in (0,1)$, we add a reciprocal edge $(w,v)$. We then update 
the edge set as
$E(n+1) = E(n)\cup \{(v,w), (w,v)\}$.
If the reciprocal edge is not created, then $E(n+1) = E(n)\cup \{(v,w)\}$.
\end{enumerate}
We assume the offset parameter $\delta$ takes the same value for both in- and out-degrees, and
we do not assign different step indices when the reciprocal edge is also added from $G(n)$ to $G(n+1)$.

By the definition of reciprocity coefficient in \eqref{eq:def_recip}, we see that a.s.
\[
\lim_{n\to\infty} r(G(n))\ge \frac{2\rho}{1+\rho}.
\]
Therefore, by introducing a reciprocal component, we obtain a lower bound for the reciprocity coefficient, overcoming the drawback in the classical directed PA model where the reciprocity coefficient may be close to 0 for certain choices of parameters.

\section{Markov Branching with Immigration}\label{sec:embed}
\sid{For a PA model with reciprocity, the derivation of asymptotic results in
Section~\ref{sec:conv} depends on embedding the in- and out-degree
sequences into a family of independent multi-type \sid{Markov} 
branching processes with immigration (MBI).
The required embedding framework is more elaborate
than the one in \cite{wang:resnick:2021b} which lacked  the $\beta$-scenario,
and thus details 
are deferred to Appendix \ref{append:embed}. However,
asymptotic results on degree counts have limits expressed in terms of
an MBI process, so in preparation for Section \ref{sec:conv}, 
we give a brief
description of the MBI process.}

\subsection{MBI Ingredients}\label{subsec:MBI}
The two-type MBI process
$\{\bxi_{\delta}(t) = (\xi^{(1)}_{\delta}(t),\xi^{(2)}_{\delta}(t)):
t\ge0\}$ is a Markov branching process.  The general setup
of continuous-time multitype branching processes without immigration
is reviewed in \cite[Chapter V]{athreya:ney:1972} and discussions on
the MBI process are included, for instance, in
\cite{rabehasaina:2021, wang:resnick:2021b}. The process
$\bxi_{\delta} (\cdot)$ is designed to mimic evolution of  in- and
out-degrees of a fixed node.
 Life time parameters of 
$\xi^{(1)}_{\delta}(\cdot)$ and $\xi^{(2)}_{\delta}(\cdot)$ are
$a_1=\alpha+\beta, a_2=\beta+\gamma$, respectively and the branching
structure is given by the offspring generating functions
\begin{align}
f^{(1)}(\bs) &= (1-\rho)s_1^2 + \rho s_1^2s_2 ,\label{eq:pgf1}\\
f^{(2)}(\bs) &= (1-\rho)s_2^2 + \rho s_1 s_2^2 ,\label{eq:pgf2}
\end{align}
for $\bs=(s_1,s_2)\in [0,1]^2$.
Equation~\eqref{eq:pgf1} gives that at the end of the life time of a type 1 particle, with 
probability $1-\rho$, it will split into two type 1 particles, increasing the total number of
type 1 particles by 1.
With probability $\rho$, a type 1 particle will give birth to 2 type 1 particles and 1 type 2 particle upon its death, which increases the total numbers of type 1 and 2 particles both by 1. 
Similar interpretations also apply to \eqref{eq:pgf2}.
Immigration events occur at Poisson rate
$(1+\beta)\delta$ which gives the immigration parameter of $\bxi_\delta
(\cdot)$. When an immigration event occurs, $\bxi_\delta(\cdot)$
changes by $(1,0), (0,1)$ or $(1,1) $ according to the distribution
\begin{align}
\label{eq:p0r}
p_0(\br) \equiv \left(\frac{(\alpha+\beta)(1-\rho)}{1+\beta}\right)^{\ind_{\{\br = (1,0)\}}}\left(\frac{(\beta+\gamma)(1-\rho)}{1+\beta}\right)^{\ind_{\{\br = (0,1)\}}}\rho^{\ind_{\{\br = (1,1)\}}},
\end{align}
and the branching structure for immigrants is the same as in
\eqref{eq:pgf1} and \eqref{eq:pgf2}. 
The initial values of $\{\bxi_{\delta}(\cdot)$ are specified 
in \eqref{eq:pgfxi0}.

Conditioning on the current state $\bx\equiv (x_1,x_2)$,
the jump probability of $\bxi_{\delta}(\cdot)$ from $\bx$ to
$\bx+(1,0)$ is a result of exponential competitions between the death
of a type 1 particle and the arrival of a new immigration event
$(1,0)$. 
Therefore, we have 
\begin{align}
P&(\bx, \bx+(1,0)) = \PP(\text{One type 1 particle dies first, giving birth to 2 type 1 particles})\nonumber\\
&\quad+\PP(\text{One immigration event $(1,0)$ arrives first})\nonumber\\
&=\frac{(1-\rho)(\alpha+\beta)x_1}{(\alpha+\beta) x_1+ (\beta+\gamma) x_2+(1+\beta)\delta}
+\frac{(\alpha+\beta)(1-\rho)}{1+\beta} \frac{(1+\beta)\delta}{(\alpha+\beta) x_1+ (\beta+\gamma) x_2+(1+\beta)\delta}\nonumber\\
&= \frac{(\alpha+\beta)(1-\rho)(x_1+\delta)}{(\alpha+\beta) x_1+ (\beta+\gamma) x_2+(1+\beta)\delta}.
\label{eq:jump1}
\end{align}
Following a similar reasoning, we see that
\begin{align}
&P(\bx, \bx+(0,1)) = \PP(\text{One type 2 particle dies first, giving birth to 2 type 2 particles})\nonumber\\
&\quad+\PP(\text{One immigration event $(0,1)$ arrives first})\nonumber\\
&=\frac{(1-\rho)(\beta+\gamma)x_1}{(\alpha+\beta) x_1+ (\beta+\gamma) x_2+(1+\beta)\delta}
+\frac{(\beta+\gamma)(1-\rho)}{1+\beta} \frac{(1+\beta)\delta}{(\alpha+\beta) x_1+ (\beta+\gamma) x_2+(1+\beta)\delta}\nonumber\\
&= \frac{(\beta+\gamma)(1-\rho)(x_1+\delta)}{(\alpha+\beta) x_1+ (\beta+\gamma) x_2+(1+\beta)\delta},
\label{eq:jump2}
\end{align}
and 
\begin{align}
&P(\bx, \bx+(1,1)) \nonumber\\
&= \PP(\text{One type 1 particle dies first, giving birth to 2 type 1 particles and 1 type 2 particle})\nonumber\\
&\,+ \PP(\text{One type 2 particle dies first, giving birth to 2 type 2 particles and 1 type 1 particle})\nonumber\\
&\,+\PP(\text{One immigration event $(1,1)$ arrives first})\nonumber\\
&=\frac{\rho}{(\alpha+\beta) x_1+ (\beta+\gamma) x_2+(1+\beta)\delta}
\left((\alpha+\beta) x_1+ (\beta+\gamma) x_2+(1+\beta)\delta\right)
=\rho.\label{eq:jump3}
\end{align}

Using the generating functions in \eqref{eq:pgf1} and \eqref{eq:pgf2}, we follow
\cite[Chapter V.7.2]{athreya:ney:1972} to define a matrix
\begin{equation}\label{eq:defA}
A= 
\begin{bmatrix}
\alpha+\beta & (\alpha+\beta)\rho\\
(\beta+\gamma)\rho & \beta+\gamma
\end{bmatrix},
\end{equation}
which specifies the branching structure of the MBI process $\bxi_{\delta}(\cdot)$.
By the Perron--Frobenius theorem (cf. \cite[Theorem V.2.1]{athreya:ney:1972}),
the matrix $A$ has a largest positive eigenvalue with multiplicity 1, which is
\begin{align}\label{eq:lambda1}
\lambda_1 = \frac{1}{2}\left(1+\beta + \sqrt{(\alpha-\gamma)^2+4(\alpha+\beta)(\beta+\gamma)\rho^2}\right) =: \frac{1}{2}\left(1+\beta + \sqrt{D_0}\right).
\end{align}
Let $\bv,\bu$ be the left and right eigenvectors of $\lambda_1$ respectively, with all coordinates strictly positive, and $\bu^T\ind =1$, $\bu^T\bv =1$.
Applying \cite[Theorem~1]{wang:resnick:2021b} gives that
there exists some finite positive random variable $L$ such that
\begin{align}\label{eq:conv_MBI}
e^{-\lambda_1 t}\bxi_{\delta}(t)\convas L \bv, 
\end{align}
where
\begin{align}
\bv &= \left(\frac{\alpha+2(\beta+\gamma)\rho-\gamma+\sqrt{D_0}}{2\sqrt{D_0}}\right)\begin{bmatrix}
1\medskip\\
\frac{\gamma-\alpha+\sqrt{D_0}}{2\rho(\beta+\gamma)}
\end{bmatrix}.\label{eq:defv}
\end{align}
When $\lambda_1\ge\log 2$, applying Proposition 1 in \cite{wang:resnick:2021b}, we also have
that for an integer $q\ge 1$,
\begin{align}\label{eq:MBI_moment}
\sup_{t\ge 0}e^{-\lambda_1 qt} \EE\left[\left({\xi}^{(i)}_{\delta}(t)\right)^q\right]
<\infty,\qquad  i=1,2.
\end{align}
Later in Theorem~\ref{thm:MRV}, Equation~\eqref{eq:MBI_moment} is an important condition to prove the asymptotic behavior of limiting in- and out-degrees in the reciprocal PA model.

\section{Convergence of Degree Counts and Multivariate Regular Variation}
\label{sec:conv}

\sid{Based on the description of the MBI process, 
we provide the convergence of the joint in- and out-degree counts}
in a reciprocal PA model.
We then study the asymptotic
dependence structure for large in- and out-degrees.

\subsection{Convergence of Degree Counts}
In the current section, we focus on the limiting behavior of joint degree counts:
\begin{align*}
N_{m,l}(n) = \sum_{w=1}^{|V(n)|} \ind_{\left\{(\Din_w(n),\Dout_w(n))=(m,l)\right\}}.
\end{align*}
Using the embedding result, the following theorem shows the convergence
of $N_{m,l}(n)/n$.
\begin{Theorem}\label{thm:limitNij}
Suppose that $\{\widetilde{\bxi}_\delta(t):t\ge 0\}$ is a two-type MBI
process with the
\sid{branching mechanism described in Section \ref{subsec:MBI} and
initialization described in Equation~\eqref{eq:pgfxi0}.}
Then as $n\to\infty$, we have
for $m,l\ge 0$,
\begin{align}\label{eq:Nij}
\frac{N_{m,l}(n)}{n}&\convp (1-\beta)\int_0^\infty (1+\rho+(1-\beta)\delta)e^{-t(1+\rho+(1-\beta)\delta)}
\PP\left(\widetilde{\bxi}_\delta(t)=(m,l)\right)\dd t \nonumber \\
&= : (1-\beta)\PP\left(\big(\mathcal{I},\mathcal{O}\bigr) = (m,l)\right).
\end{align}
\end{Theorem}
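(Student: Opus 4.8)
Write $\Lambda:=1+\rho+(1-\beta)\delta$ and $h_{m,l}(t):=\PP\bigl(\widetilde{\bxi}_\delta(t)=(m,l)\bigr)$, so the asserted limit is $(1-\beta)\int_0^\infty\Lambda e^{-\Lambda t}h_{m,l}(t)\,\dd t=(1-\beta)\PP\bigl((\mathcal I,\mathcal O)=(m,l)\bigr)$. Since at each step the scenario is drawn independently with probabilities $(\alpha,\gamma,\beta)$ and a reciprocal edge is added independently with probability $\rho$, the strong law of large numbers gives $|V(n)|/n\convas 1-\beta$, $|E(n)|/n\convas 1+\rho$, hence $\bigl(|E(n)|+\delta|V(n)|\bigr)/n\convas\Lambda$. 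Writing $N_{m,l}(n)/n=\bigl(|V(n)|/n\bigr)\bigl(N_{m,l}(n)/|V(n)|\bigr)$, it therefore suffices to prove that the empirical distribution of node configurations converges, i.e.\ $N_{m,l}(n)/|V(n)|\convp\int_0^\infty\Lambda e^{-\Lambda t}h_{m,l}(t)\,\dd t$.

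For the main step I would pass to the continuous-time picture of Appendix~\ref{append:embed}: it realizes $\{G(n)\}$ as $\{\mathcal G(T_n)\}$ for a continuous-time process with step times $T_0<T_1<\cdots$, in which the total instantaneous step rate at time $\tau$ is $|E(\mathcal G(\tau))|+\delta|V(\mathcal G(\tau))|$, a new node is born at rate $(1-\beta)\bigl(|E(\mathcal G(\tau))|+\delta|V(\mathcal G(\tau))|\bigr)$, and after its birth the in/out-degree pair of each node runs as a copy of $\widetilde{\bxi}_\delta(\cdot)$ started from the law~\eqref{eq:pgfxi0}. Taking expectations in these rate identities shows $\EE\bigl[|E(\mathcal G(\tau))|+\delta|V(\mathcal G(\tau))|\bigr]$ solves a linear ODE with rate $\Lambda$, so $\Lambda$ is the Malthusian parameter and $e^{-\Lambda\tau}|V(\mathcal G(\tau))|\convas\mathcal W$ for some finite, strictly positive $\mathcal W$ (here one verifies $\Lambda>\lambda_1$, which follows from $\alpha\gamma(1-\rho)+\rho(1-\beta)>0$, together with an $x\log x$-type moment check for which \eqref{eq:conv_MBI}--\eqref{eq:MBI_moment} are the relevant inputs). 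Consequently the $k$-th node is born at time $\sigma_k=\Lambda^{-1}\log k-\Lambda^{-1}\log\mathcal W+o(1)$ and $T_n=\Lambda^{-1}\log|V(n)|-\Lambda^{-1}\log\mathcal W+o(1)$, so the age $T_n-\sigma_K$ of a uniformly chosen node $K$ equals $\Lambda^{-1}\log(|V(n)|/K)+o(1)$; since $K/|V(n)|$ is asymptotically uniform on $(0,1)$, the empirical age distribution converges to $\mathrm{Exp}(\Lambda)$ — crucially the random factor $\mathcal W$ cancels, so the limit is deterministic.

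Combining the limiting age law with the embedding's conditional law of a node's degree given its age, and using that $h_{m,l}$ is bounded, gives
\[
\EE\!\left[\frac{N_{m,l}(n)}{|V(n)|}\right]
=\EE\!\left[\frac{1}{|V(n)|}\sum_{w\in V(n)}h_{m,l}\bigl(T_n-\sigma_w\bigr)\right]
\longrightarrow\int_0^\infty\Lambda e^{-\Lambda t}h_{m,l}(t)\,\dd t .
\]
Equivalently, this integral is exactly what one obtains by counting the general branching process of nodes — in which an individual of age $a$ reproduces at rate $\alpha(\xi^{(1)}_\delta(a)+\delta)+\gamma(\xi^{(2)}_\delta(a)+\delta)$ and carries the random characteristic $\ind\{\widetilde{\bxi}_\delta(\cdot)=(m,l)\}$ — against the characteristic $1$, so one may alternatively invoke the Nerman--Jagers theory for general branching processes counted by random characteristics.

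It remains to upgrade mean convergence to convergence in probability. A single step changes at most two node degrees and creates at most one node, so $|N_{m,l}(n)-N_{m,l}(n-1)|$ is bounded by an absolute constant; moreover a single step's outcome influences $N_{m,l}(n)$ by only $O(1)$ via a coupling controlling how a discrepancy propagates, so an Azuma-type inequality for the Doob martingale $\EE[N_{m,l}(n)\mid\mathcal F_k]$ gives $N_{m,l}(n)/n-\EE[N_{m,l}(n)]/n\convp 0$, and with the previous paragraph and $|V(n)|/n\convas 1-\beta$ this yields the theorem. I expect the real work to be in the two places where the $\beta$-scenario enters: (i) making rigorous the passage between the continuous clock and discrete time $n$ — showing the random population limit $\mathcal W$ genuinely cancels in the age law and that the ``degree given age'' statement survives the cross-node coupling induced by $\beta$-edges; and (ii) the bounded-influence estimate underlying the concentration step.
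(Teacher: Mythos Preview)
Your overall strategy---embed into continuous time, show that node ages are asymptotically $\mathrm{Exp}(\Lambda)$, then integrate $h_{m,l}$ against that age law---is exactly the paper's route. The paper also replaces the random clock difference $T^*_n-T^*_{S_v}$ by the deterministic surrogate $\Lambda^{-1}\log(n/v)$ (your ``$\mathcal W$ cancels'' observation), and then reduces the remaining sum to a Riemann integral. So at the structural level you are on target.

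Where your proposal diverges, and where there is a genuine gap, is the concentration step. The paper does \emph{not} use Azuma or any bounded-influence coupling. Instead it exploits the crucial feature of the embedding: the processes $\{\bxi_{k,\delta}(\cdot)\}_{k\ge 1}$ are \emph{independent} MBI processes. Once the random times have been replaced by the deterministic $\Lambda^{-1}\log(n/v)$, the centered summands
\[
X_v=B_v\,\ind_{\bigl\{\bxi_{N(v),\delta}\bigl(\Lambda^{-1}\log(n/v)\bigr)=(m,l)\bigr\}}-(1-\beta)\,h_{m,l}\bigl(\Lambda^{-1}\log(n/v)\bigr)
\]
are genuinely uncorrelated (indeed, all mixed moments with at least one factor appearing to the first power vanish), so a direct fourth-moment bound plus Borel--Cantelli gives $\tfrac1n\sum_v X_v\to 0$ almost surely. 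Your Azuma proposal requires that resampling step $k$ changes $\EE[N_{m,l}(n)\mid\mathcal F_k]$ by $O(1)$ uniformly in $n$; in preferential attachment models this is \emph{not} automatic, because an extra early edge makes a node more attractive and the discrepancy can propagate. Bollob\'as--Riordan style couplings that control such propagation exist for some PA variants, but establishing one for the present directed, three-scenario model with reciprocity and $\beta$-edges would be substantial new work---and it is precisely what the paper's independence-based argument is designed to avoid. You flag this yourself as ``where the real work is,'' but you should recognize that the paper's embedding already hands you the tool (independence of the $\bxi_{k,\delta}$) that eliminates the need for that work.

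A smaller point: your displayed identity $\EE[N_{m,l}(n)/|V(n)|]=\EE\bigl[|V(n)|^{-1}\sum_w h_{m,l}(T_n-\sigma_w)\bigr]$ already presumes that, conditional on the birth schedule, each node's degree at age $a$ has law $h_{m,l}(a)$ \emph{independently across nodes}. That is again a consequence of the paper's embedding (and of the ``effective time'' bookkeeping that handles the $\beta$-scenario's discarded jumps), not of the raw PA dynamics; in the raw model the degree of one node affects the attachment probabilities of all others. So both your mean-convergence step and your concentration step lean on the embedding's independence in a way your sketch does not make explicit.
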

Since $|V(n)|/n\convas 1-\beta$ as $n\to\infty$, then it follows from \eqref{eq:Nij} that 
\begin{align}\label{eq:Nij_scale}
\frac{N_{m,l}(n)}{|V(n)|}\convp\PP\left(\big(\mathcal{I},\mathcal{O}\bigr) = (m,l)\right)
= \PP\left(\widetilde{\bxi}_\delta(E)=(m,l)\right),
\end{align}
where 
$E$ is an exponential random variable with rate $1+\rho+(1-\beta)\delta$, independent from $\widetilde{\bxi}_\delta(\cdot)$.

The proof of Theorem~\ref{thm:limitNij} is given in
Appendix~\ref{proof2} \sid{and is based on embedding the in- and
  out-degree sequences in a family of MBI processes and using
  knowledge of asymptotics of MBI processes.
}



\subsection{Multivariate Regular Variation of $(\mathcal{I},\mathcal{O})$}
Based on Theorem~\ref{thm:limitNij}, we now show that the limiting in-
and out-degree counts, $(\mathcal{I},\mathcal{O})$ in \eqref{eq:Nij},
are jointly heavy tailed. To formalize our analysis, we provide some
useful definitions related to \emph{multivariate regular variation}
(MRV) \sid{of measures.}

Suppose that $\mathbb{C}_0\subset\mathbb{C}\subset\mathbb{R}_+^2$ are two closed cones, and we first give the definition of $\mathbb{M}$-convergence in Definition~\ref{def:Mconv}
(cf. \cite{lindskog:resnick:roy:2014,hult:lindskog:2006a,das:mitra:resnick:2013,kulik:soulier:2020,basrak:planinic:2019}) on $\mathbb{C}\setminus \mathbb{C}_0$, which lays the theoretical foundation of regularly varying measures (cf. Definition~\ref{def:MRV}).

\begin{Definition}\label{def:Mconv}
Let $\mathbb{M}(\mathbb{C}\setminus \mathbb{C}_0)$ be the set of Borel
measures on $\mathbb{C}\setminus \mathbb{C}_0$ which are finite on
sets bounded away from $\mathbb{C}_0$, and
$\mathcal{C}(\mathbb{C}\setminus \mathbb{C}_0)$ be the set of
continuous, bounded, non-negative functions on $\mathbb{C}\setminus
\mathbb{C}_0$ whose supports are bounded away from  $\mathbb{C}_0$. Then for $\mu_n,\mu \in \mathbb{M}(\mathbb{C}\setminus
\mathbb{C}_0)$, we say $\mu_n \to \mu$ in
$\mathbb{M}(\mathbb{C}\setminus \mathbb{C}_0)$, if $\int f\dd
\mu_n\to\int f\dd \mu$ for all $f\in \mathcal{C}(\mathbb{C}\setminus
\mathbb{C}_0)$. 
\end{Definition} 
 
\begin{Definition}\label{def:MRV}
The distribution of a
 random vector $\bZ$ on $\mathbb{R}_+^2$, i.e.
$\PP(\bZ\in\cdot)$, 
  is (standard) regularly varying on $\mathbb{C}\setminus \mathbb{C}_0$ with index $c>0$ (written as $\PP(\bZ\in\cdot)\in \text{MRV}(c, b(t), \nu, \mathbb{C}\setminus \mathbb{C}_0)$) if there exists some scaling function $b(t)\in \text{RV}_{1/c}$ and a limit measure $\nu(\cdot)\in \mathbb{M}(\mathbb{C}\setminus \mathbb{C}_0)$ such that
as $t\to\infty$,
\begin{equation}\label{eq:def_mrv}
t\PP(\bZ/b(t)\in\cdot)\rightarrow \nu(\cdot),\qquad\text{in }\mathbb{M}(\mathbb{C}\setminus \mathbb{C}_0).
\end{equation}
\end{Definition}

When analyzing the asymptotic dependence between components of a
bivariate random vector
$\bZ$ satisfying \eqref{eq:def_mrv},  it is often informative  
to make a polar coordinate transform and
consider the transformed points located on the $L_1$ unit sphere
\begin{align}
\label{eq:map_L1}
(x,y)\mapsto\left(\frac{x}{x+y},\frac{y}{x+y}\right),
\end{align}
after thresholding the data according to the
$L_1$ norm.
The plot of the transformed points is referred to as the diamond plot,
which provides a visualization of dependence. 
Also, provided that $x+y$ is larger than some predetermined threshold, the density plot of thresholded values $x/(x+y)$ is called the angular density plot.
These plots characterize the asymptotic dependence structure for extremal observations.

The next theorem states that the limiting pair
$(\mathcal{I},\mathcal{O})$ \sid{has a distribution that} is jointly regularly varying, and 
applying the transformation in \eqref{eq:map_L1} to nodes with large
in- and out-degrees, we find that the angular density plot concentrates around 
some particular value. In the terminology of \cite{das:resnick:2017}, this indicates that the limiting in- and out-degree pair has \emph{full asymptotic dependence}.

\begin{Theorem}\label{thm:MRV}
Let $(\mathcal{I},\mathcal{O})$ be as in \eqref{eq:Nij} and $\lambda_1$ as in \eqref{eq:lambda1}.
If $\lambda_1\ge \log2$, then
\begin{align}
\label{eq:MRV_IO}
\PP\bigl((\mathcal{I},\mathcal{O})\in\cdot\bigr) \in \text{MRV}\left(\frac{1+\rho+\delta (1-\beta)}{\lambda_1}, t^{\lambda_1/(1+\rho+\delta(1-\beta))}, \mu, \mathbb{R}_+^2\setminus\{\origin\}\right),
\end{align}
where the 
limit measure $\mu \in \mathbb{M} (\RR_+^2\setminus \{\origin\})$
satisfies for any $f\in \mathcal{C} (\RR_+^2\setminus \{\origin\})$,
\begin{align}\label{eq:limit_mu}
  \mu(f) =\int_0^\infty \EE \bigl( f(y\widetilde L
  \bv) \bigr)\nu_{(1+\rho+\delta(1-\beta))/\lambda_1} (\dd y),
\end{align}
\sid{and}
$\widetilde{L}$ satisfies
$e^{-\lambda_1t}\widetilde{\bxi}_\delta(t)\convas \widetilde{L}\bv$.
Also,
 $(v_1,v_2)^T\equiv \bv$ is given in \eqref{eq:defv}. Since
$\widetilde L$ is one-dimensional and $\bv$ is deterministic, the
distribution of $\widetilde L \bv$ concentrates on a one-dimensional
subspace and therefore
$\mu(\cdot)$ concentrates Pareto mass on the line $y=bx$ where
\begin{equation}\label{e:defa}
  b=\frac{v_2y\widetilde L}{v_1y\widetilde L} =\frac{v_2}{v_1} =
   \frac{  \gamma-\alpha+\sqrt D_0}{2(\beta+\gamma)\rho},
\end{equation}
with $D_0$ as defined in \eqref{eq:lambda1}.
In addition, there exists some constant $C>0$ such that
$\mu\bigl((x,\infty)\times [0,\infty)\bigr) = C
x^{-(1+\rho+\delta(1-\beta))/\lambda_1}$, $x>0$.

Switching to $L_1$-polar coordinates via the transformation
$$V:(x,y)\mapsto \Bigl(\frac{(x,y)}{x+y}, (x+y) \Bigr) =: (\btheta, r)$$
from $\RR_+^2 \setminus \{\bzero\} \mapsto  \{(x,y) \in
\RR_+^2\setminus \{\bzero\} : x+y=1\} \times
(0,\infty)=:\aleph_0\times (0,\infty)$, we find with
\begin{align}
\label{eq:angle}
\btheta_0=\Bigl( \frac{v_1}{v_1+v_2},\frac{v_2}{v_1+v_2} \Bigr)
\end{align}
that
$$\mu\circ V^{-1} (\dd\btheta, \dd r)=
\epsilon_{\btheta_0} (d\btheta) \widetilde{C} \nu_{(1+\rho+\delta(1-\beta))/\lambda_1}
(\dd r)$$
where $\epsilon_{\btheta_0}(\cdot)$ is the Dirac probabilty measure
concentrating all mass on $\btheta_0$ and
\[
\widetilde{C}= \int_0^\infty \frac{(1+\rho+\delta(1-\beta))}{\lambda_1}z^{-1-(1+\rho+\delta(1-\beta))/\lambda_1}
\times\PP\left(\widetilde{L}>\frac{1}{z(v_1+v_2)}\right)\dd z.
\]
\end{Theorem}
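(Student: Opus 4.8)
The plan is to combine the representation $(\mathcal I,\mathcal O)\stackrel{d}{=}\widetilde{\bxi}_\delta(E)$ from \eqref{eq:Nij_scale}, with $E$ independent of $\widetilde{\bxi}_\delta(\cdot)$ and exponential with rate $\kappa:=1+\rho+\delta(1-\beta)$, together with the a.s.\ growth $e^{-\lambda_1 t}\widetilde{\bxi}_\delta(t)\convas\widetilde L\bv$ from \eqref{eq:conv_MBI}. Writing $c:=\kappa/\lambda_1$, the heuristic is $\widetilde{\bxi}_\delta(E)\approx e^{\lambda_1 E}\widetilde L\bv$; since $\PP(e^{\lambda_1 E}>x)=x^{-c}$ exactly, the randomly stopped process behaves like a Pareto$(c)$ radius times the deterministic direction $\bv$, which already accounts for the tail index $c$, the concentration on the line $y=(v_2/v_1)x$, and the Dirac angular component $\epsilon_{\btheta_0}$.

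To make this precise I would verify $\mathbb M$-convergence directly through Definition~\ref{def:Mconv}. Fix $f\in\mathcal C(\RR_+^2\setminus\{\origin\})$ with $\mathrm{supp}\,f\subseteq\{\|\bx\|\ge\eta\}$ and set $b(t)=t^{1/c}\in\text{RV}_{1/c}$. Conditioning on $E$ and substituting $s=\lambda_1^{-1}\log y+\kappa^{-1}\log t$, so that $e^{\lambda_1 s}=y\,b(t)$ and $t\kappa e^{-\kappa s}\,\dd s=c\,y^{-c-1}\,\dd y$, gives
\begin{align*}
t\,\EE\bigl[f\bigl(\widetilde{\bxi}_\delta(E)/b(t)\bigr)\bigr]
=\int_{1/b(t)}^{\infty}c\,y^{-c-1}\,\EE\Bigl[f\bigl(y\,e^{-\lambda_1 s(y,t)}\widetilde{\bxi}_\delta(s(y,t))\bigr)\Bigr]\,\dd y.
\end{align*}
For each fixed $y>0$ we have $s(y,t)\to\infty$ as $t\to\infty$, so by \eqref{eq:conv_MBI} the argument of $f$ converges a.s.\ to $y\widetilde L\bv\neq\origin$; as $f$ is bounded and continuous on $\RR_+^2\setminus\{\origin\}$, the inner expectation tends to $\EE[f(y\widetilde L\bv)]$. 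Once the limit is passed under the integral we obtain
\begin{align*}
\lim_{t\to\infty}t\,\EE\bigl[f\bigl(\widetilde{\bxi}_\delta(E)/b(t)\bigr)\bigr]
=\int_0^\infty\EE\bigl[f(y\widetilde L\bv)\bigr]\,c\,y^{-c-1}\,\dd y
=\int_0^\infty\EE\bigl[f(y\widetilde L\bv)\bigr]\,\nu_c(\dd y)=\mu(f),
\end{align*}
which is exactly \eqref{eq:limit_mu}; the right side is finite for every such $f$ (by the same estimate used for the tail of $\mu$ below), so $\mu\in\mathbb M(\RR_+^2\setminus\{\origin\})$, and since convergence holds for all $f\in\mathcal C$, \eqref{eq:MRV_IO} follows with this $\mu$.

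The interchange of limit and integral is where the hypothesis $\lambda_1\ge\log2$ enters, via \eqref{eq:MBI_moment}. Bounding $f$ by $\|f\|_\infty\ind_{\{\|\bx\|\ge\eta\}}$ and applying Markov's inequality with the $q$-th moment, together with $C_q:=\sup_{t\ge0}e^{-\lambda_1 q t}\EE[\|\widetilde{\bxi}_\delta(t)\|^q]<\infty$ from \eqref{eq:MBI_moment}, yields
\[
\EE\Bigl[f\bigl(y\,e^{-\lambda_1 s(y,t)}\widetilde{\bxi}_\delta(s(y,t))\bigr)\Bigr]\le\|f\|_\infty\,\PP\bigl(\|\widetilde{\bxi}_\delta(s(y,t))\|\ge\eta\,b(t)\bigr)\le\|f\|_\infty\min\!\bigl(1,\,C_q\eta^{-q}y^{q}\bigr),
\]
uniformly in $t$ on the integration range $y\ge1/b(t)$, and $\min(1,y^q)\,y^{-c-1}$ is integrable on $(0,\infty)$ as soon as the integer $q$ is taken with $q>c$. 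I expect this uniform domination — controlling both $y\downarrow0$, where $s(y,t)$ is near $0$ and the initialization \eqref{eq:pgfxi0} must be handled, and $y\uparrow\infty$, while the lower limit $1/b(t)$ shrinks to $0$ — to be the main technical obstacle; the remainder is the a.s.\ limit \eqref{eq:conv_MBI} plus bookkeeping.

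The structural claims follow from \eqref{eq:limit_mu} by elementary calculations. Because $f(y\widetilde L\bv)$ depends only on points $y\widetilde L(v_1,v_2)$, $\mu$ is carried by the ray $\{(x,y):y=(v_2/v_1)x\}$, and inserting \eqref{eq:defv} gives $b=(\gamma-\alpha+\sqrt{D_0})/(2(\beta+\gamma)\rho)$. Approximating $\ind_{(x,\infty)\times[0,\infty)}$ by $\mathcal C$-functions (its boundary is $\mu$-null) and using \eqref{eq:limit_mu},
\[
\mu\bigl((x,\infty)\times[0,\infty)\bigr)=\int_0^\infty\PP\!\Bigl(\widetilde L>\tfrac{x}{y\,v_1}\Bigr)c\,y^{-c-1}\,\dd y=v_1^{\,c}\,\EE\bigl[\widetilde L^{\,c}\bigr]\,x^{-c}=:C\,x^{-c}
\]
after the change of variables $u=x/(yv_1)$, with $C<\infty$ since $\EE[\widetilde L^{\,c}]<\infty$ by Fatou's lemma applied to $e^{-\lambda_1 t}\widetilde{\bxi}_\delta(t)\convas\widetilde L\bv$ together with \eqref{eq:MBI_moment}. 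Finally, under $V$ each point $y\widetilde L\bv$ has $L_1$-angle $\bv/(v_1+v_2)=\btheta_0$ and $L_1$-radius $y\widetilde L(v_1+v_2)$, so \eqref{eq:limit_mu} factorizes as $\mu\circ V^{-1}(\dd\btheta,\dd r)=\epsilon_{\btheta_0}(\dd\btheta)\,\widetilde C\,\nu_c(\dd r)$, where $\widetilde C\,r^{-c}=\int_0^\infty\PP\bigl(\widetilde L>r/(y(v_1+v_2))\bigr)c\,y^{-c-1}\,\dd y$; the substitution $y=rz$ identifies $\widetilde C$ exactly as in the statement, completing the proof.
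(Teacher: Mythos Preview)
Your argument is correct and runs on the same underlying mechanism as the paper's proof---the product structure $\widetilde{\bxi}_\delta(E)=e^{\lambda_1 E}\cdot\bigl(e^{-\lambda_1 E}\widetilde{\bxi}_\delta(E)\bigr)$ with a Pareto-tailed scalar $e^{\lambda_1 E}$ and an a.s.\ convergent vector factor---but the packaging differs. The paper applies the generalized Breiman theorem (Theorem~\ref{th:extendBrei}) as a black box after making the identifications $\bxi(t)=t^{-1}\widetilde{\bxi}_\delta(\lambda_1^{-1}\log t)$, $X=e^{\lambda_1 \widetilde T}$, $\bxi_\infty=\widetilde L\bv$; the moment bound \eqref{eq:MBI_moment} is then used only to verify hypothesis \eqref{e:extraCond}, and the structural claims about $\mu$ are declared ``standard''. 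You instead unpack what is essentially the proof of Theorem~\ref{th:extendBrei} in this particular case: your exponential change of variables $e^{\lambda_1 s}=y\,b(t)$ and the dominated-convergence step with envelope $\min(1,C_q\eta^{-q}y^q)\,y^{-c-1}$ are exactly the ingredients that drive Breiman-type results. Your route is self-contained and makes the role of \eqref{eq:MBI_moment} (as the source of a $t$-uniform integrable majorant) fully explicit, whereas the paper's route is shorter on the page but outsources the analysis to the cited lemma. One small item the paper handles that you pass over: it first checks $\PP(\widetilde L>0)=1$ (via \cite[Theorem~V.7.2]{athreya:ney:1972} and the argument in \cite{wang:resnick:2021b}), which is what ensures $\mu$ is nonzero and the angular mass is genuinely $\epsilon_{\btheta_0}$; you assert $y\widetilde L\bv\neq\origin$ without justification.
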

\begin{proof}
We will prove Theorem~\ref{thm:MRV} by
 applying Theorem~\ref{th:extendBrei}, so we first need to check $\PP(\widetilde{L}>0)=1$.
Based on Theorem~V.7.2 and Equation~(V.25) in \cite{athreya:ney:1972}, we apply a similar reasoning as in the proof of \cite[Theorem 3]{wang:resnick:2021b} to conclude that $\widetilde{L}>0$ a.s..
 
From \eqref{eq:Nij}, we see that
$
\big(\mathcal{I},\mathcal{O}\bigr) 
\stackrel{d}{=} \widetilde{\bxi}_\delta\left(\widetilde{T}\right)$,
where $\widetilde{T}$ is an exponential random variable with rate
$1+\rho+\delta(1-\beta)$, independent from the $\widetilde{\bxi}_\delta(\cdot)$
process.  The proof of 
 \eqref{eq:MRV_IO} and \eqref{eq:limit_mu} is an application of Theorem
 \ref{th:extendBrei} after making the identifications
 \begin{align*}
&\bxi (t)=t^{-1} \widetilde{\bxi}_\delta\left(\frac{1}{\lambda_1} \log t\right), 
                                                                      &&\bxi_\infty =\widetilde{L}\bv,
   &&X=e^{\lambda_1 \widetilde{T}},\\
   &b(t)=t^{\lambda_1/(1+\rho+\delta(1-\beta))},&&
c=(1+\rho+\delta(1-\beta))/\lambda_1 && {}.
 \end{align*}
 The remaining piece is to show \eqref{e:extraCond} in this
 context. In fact, by \eqref{eq:MBI_moment}, we see that for
any $\delta \ge 0$ and any $q=1,2,\ldots$, there exists some constant $K(\delta,q)>0$ such that 
\begin{align*}
\sup_{t\ge 0}e^{-\lambda_1 qt} \EE\left[\left(\widetilde{\xi}^{(1)}_\delta(t)\right)^q\right]
\le K(\delta, q),
\end{align*}
thus verifying the condition in \eqref{e:extraCond}.

The comments about where $\mu(\cdot)$
concentrates and the representation of $\mu $ in polar coordinates is
standard; see, for example, \cite[p. 292]{lindskog:resnick:roy:2014}
and this 
completes the proof of Theorem~\ref{thm:MRV}.
\end{proof}

The MRV results in Theorem~\ref{thm:MRV} provide the theoretical foundation for the proposed estimation approaches in Section~\ref{sec:est}, and we will apply these approaches to both simulated and real datasets in Sections \ref{sec:sim} and \ref{sec:real}, respectively.

\section{Estimation}\label{sec:est}
In this section, we discuss estimation methods for the PA model with reciprocity.
Suppose we start with an arbitrary initial graph $G(n_0) = (V(n_0),E(n_0))$ with $|V(n_0)|\ge 1$ nodes and $|E(n_0)|\ge 1$ edges. Let the graph
 evolve according to the PA with reciprocity rule, denoting $G(n)$ to be the graph at step $n \geq n_0$. Define $e_n := E(n)\setminus E(n - 1)$ as the newly added edge(s) as the graph evolves from $G(n-1)$ to $G(n)$. Note that either $e_{n} = (s_n, t_n)$ or $e_{n} = \{ (s_n, t_n), (t_n, s_n) \}$, depending on whether or not a reciprocal edge is created  at step $n$. 
For $e_n = \{(s_n, t_n), (t_n, s_n)\}$, we 
denote $(s_n, t_n)$ as the parent edge.

Suppose that we observe $G(k)$, $k=n_0,\dots, n$, and that each edge is accompanied by a timestamp. 
For $k = n_0 + 1,\ldots, n$, and a non-reciprocal edge $e_k$, let $J_k=1,2,3$ specify whether $e_k$ is created under $\alpha$-, $\beta$-, and $\gamma$-scenarios, respectively. 
For a pair of reciprocal edges $e_k = \{(s_k, t_k), (t_k, s_k)\}$,
use $J_k=1,2,3$ to describe the three edge creation scenarios of the
parent edge correspondingly.  
Adopting notations from \sid{the construction in Appendix~\ref{append:embed}},
we use Bernoulli random variables $\{R_k: k = n_0 + 1, \ldots, n\}$ to denote whether a reciprocal edge is created at step $k$. 
The set $\mathcal{R} = \{k : R_k = 1 \}$ collects all steps at which a reciprocal edge is generated by the proposed model. 
Then for each $k\in \mathcal{R}$, two edges, $(s_k, t_k)$ and $(t_k, s_k)$, are generated at the same step in the reciprocal PA model. 
The likelihood function becomes:
\begin{align}
\label{eq:likelihood}
& L(\alpha, \beta, \rho, \delta ; G(n_0), (e_k)_{k = n_0 + 1}^n)\nonumber \\
& = \prod_{k = n_0 + 1}^n \left( \alpha \frac{\Din_{t_k}(k-1) + \delta}{|E(k - 1)| + \delta |V(k - 1)|} \right)^{\mathbf{1}_{\{ J_k = 1 \}}} \nonumber\\
& \ \ \ \ \ \times \prod_{k = n_0 + 1}^n \left( \beta \left( \frac{\Din_{t_k}(k-1) + \delta}{|E(k - 1)| + \delta |V(k- 1)|} \right) \left( \frac{\Dout_{s_k}(k-1) + \delta}{|E(k - 1)| + \delta |V(k - 1)|} \right) \right)^{\mathbf{1}_{\{ J_k = 2 \}}}\nonumber \\
& \ \ \ \ \ \ \ \ \ \ \times \prod_{k = n_0 + 1}^n \left( ( 1 - \alpha - \beta) \frac{\Dout_{s_k}(k-1) + \delta}{|E(k- 1)| + \delta |V(k- 1)|} \right)^{\mathbf{1}_{\{ J_k = 3 \}}} \nonumber\\
& \ \ \ \ \ \ \ \ \ \ \ \ \ \ \ \ \times  \prod_{k = n_0 + 1}^n \rho^{\mathbf{1}_{\{ R_k = 1 \}}} (1 - \rho)^{\mathbf{1}_{\{ R_k = 0 \}}},
\end{align}
with log-likelihood 
\begin{align}
\label{loglikelihood}
& \ell(\alpha, \beta, \rho, \delta ; G(n_0), (e_k)_{k = n_0 + 1}^n) \nonumber\\
&= \log \alpha \sum_{k = n_0 + 1}^n \mathbf{1}_{\{J_k = 1 \}} + \log \beta \sum_{k = n_0 + 1}^n \mathbf{1}_{\{J_k = 2\}} + \log (1 - \alpha -\beta) \sum_{k = n_0 + 1}^n \mathbf{1}_{\{J_k = 3 \}}\nonumber \\
& \ \ \ \ \ + \log \rho \sum_{k = n_0 + 1}^n \mathbf{1}_{\{R_k = 1 \}} + \log ( 1 - \rho) \sum_{k = n_0 + 1}^n \mathbf{1}_{\{R_k = 0 \}}\nonumber  \\
& \ \ \ \ \ \ \ \ \ \ + \sum_{k = n_0 + 1}^n \log \left( \Din_{t_k}(k-1) + \delta \right)\mathbf{1}_{\{J_k \in \{ 1, 2 \} \}}\nonumber \\
& \ \ \ \ \ \ \ \ \ \ \ \ \ \ \ \ + \sum_{k = n_0 + 1}^n \log \left( \Dout_{s_k}(k-1) + \delta \right)\mathbf{1}_{\{J_k \in \{ 2, 3 \} \}} \nonumber\\
& \ \ \ \ \ \ \ \ \ \ \ \ \ \ \ \ \ \ \ \ \ - \sum_{k = n_0 + 1}^n \log \left( |E(k - 1)| + \delta |V(k - 1)| \right)(1 + \mathbf{1}_{\{J_k = 2 \}}).
\end{align}
The score equations for $\alpha$ and $\beta$ 
give the corresponding MLEs: 
\begin{align}\label{eq:MLE_init}
\hat{\alpha}^{MLE} = \frac{1}{n - n_0} \sum_{t = n_0 + 1}^n \mathbf{1}_{\{J_t = 1 \}}, 
\qquad
\hat{\beta}^{MLE} = \frac{1}{n - n_0} \sum_{t = n_0 + 1}^n \mathbf{1}_{\{J_t = 2\}}.
\end{align}
The portion of the likelihood in which $\rho$ contributes is a typical Bernoulli likelihood and can be maximized independently to obtain $\hat{\rho}^{MLE} = \frac{1}{n - n_0} \sum_{t = n_0 + 1}^n \mathbf{1}_{\{R_t = 1 \}}$. By the strong law of large numbers, $\hat{\alpha}^{MLE}, \hat{\beta}^{MLE}$ and  $\hat{\rho}^{MLE}$ are all strongly consistent for their respective targets. 
In addition, the score equation for $\delta$ is
\begin{equation}
\label{eq:deltascore}
\begin{split}
&\frac{\partial}{\partial \delta} \ell(\alpha, \beta, \rho, \delta \vert G(n_0), (e_k)_{k = n_0 + 1}^n) \\
&= \sum_{k = n_0 + 1}^n \frac{\mathbf{1}_{\{J_k \in \{ 1, 2 \} \}}}{\Din_{t_k}(k-1) + \delta} + \sum_{k = n_0 + 1}^n \frac{\mathbf{1}_{\{J_k \in \{ 2, 3 \} \}}}{\Dout_{s_k}(k-1) + \delta} - \sum_{k = n_0 + 1}^n \frac{(1 + \mathbf{1}_{\{J_k = 2 \}})|V(k - 1)|}{|E(k - 1)| + \delta |V(k - 1)|}.
\end{split}
\end{equation}
We then set \eqref{eq:deltascore} to zero and solve the equation numerically to obtain
$\hat{\delta}^{MLE}$. 
It is worthwhile noting that unlike the discussion in \cite{wan:wang:davis:resnick:2017}, the extra reciprocal component in the PA model makes theoretical analyses on the consistency and the asymptotic normality of $\hat{\delta}^{MLE}$ less tractable. 

In Section~\ref{subsec:est_real}, we couple the likelihood-based method with a window-based estimator, and also give an extreme-value based estimation approach in Section~\ref{subsec:ev_est}. Then we discuss properties of these estimators through a simulation study in Section~\ref{sec:sim}.

\subsection{Window Estimators}\label{subsec:est_real}

The proposed reciprocal PA model assumes an instant coin flip associated with each newly created edge to determine whether the reciprocal edge will be added. 
However, in large social networks such as Facebook and Twitter, during the time a message is sent and replied to between two users, there may have been multiple interactions among other users. 
Hence, when we observe the creation of a reciprocal edge based on the edge list obtained, it is difficult to know whether the edge is created due to reciprocity 
or the $\beta$-scenario. 

In what follows, we propose a window-based estimation approach so that whenever a reciprocated edge is observed within a window of step length $w$, we characterize it as a reciprocal event and move it back to the step at which its parent edge is added. After shifting back reciprocal edges within each window, we obtain a modified edge list which agrees with the model assumption required by the reciprocal PA model, thus making likelihood-based estimation methods plausible.

To give a detailed proposal of our window estimator, we first fix a window of length $w$.
If a parent edge in $e_k$ has a reciprocal counterpart in an edge set $e_{k + 1}, \dots, e_{k + w}$, we attribute the event $\{ R_k = 1 \}$ to $e_k$ and reallocate the first occurrence of a reciprocal edge to $e_k$. This is done for every $k = n_0 + 1, \dots, n$, and we set $\{R_k=0\}$ if $k+w>n$. Note that once an edge is labeled as reciprocated, neither it nor its parent can be used to denote another reciprocal edge. The rest of the non-reciprocated edges are then labeled via $J_k$ according to the PA evolution. 
Upon relabeling, we have a new edge list, $\{e^w_k\}$, which is in alignment with the structure required in Section \ref{sec:model} and we use the maximum likelihood estimates (MLE's) to obtain the parameter estimates, 
$\hat{\boldsymbol{\theta}}_w= (\hat{\alpha}_w, \hat{\beta}_w,\hat{\rho}_w,\hat{\delta}_w)$.
We provide one example in Table~\ref{tab:time_index}, where the observed edge matrix has 10 rows.
By choosing windows of fixed length $w=2$, 
we give the interpreted step indices in Table~\ref{tab:time_index} as well as the corresponding $\{e^w_k\}$.

\begin{table}
\centering
\begin{tabular}{c|ccc||cc}
\hline
Edges & Interpreted $k$ & Interpreted $J_k$ & Interpreted $R_k$ & $k$ & $e^w_k$\\
\hline
$(1,1)$ & 0 & - & - & 0 & $(1,1)$\\
$(1,1)$ & 1 & 2 & 0 & 1 & $(1,1)$\\
$(1,2)$ & 2 & 3 & 0 & 2 & $(1,2)$\\
$(3,2)$ & 3 & 1 & 1 & 3 & $(3,2)$\\
$(1,4)$ & 4 & 3 & 1 & 3 & $(2,3)$\\
$(2,3)$ & 3 & - & - & 4 & $(1,4)$\\
$(5,1)$ & 5 & 1 & 1 & 4 & $(4,1)$\\
$(4,1)$ & 4 & - & - & 5 & $(5,1)$\\
$(2,6)$ & 6 & 3 & 0 & 5 & $(1,5)$\\
$(1,5)$ & 5 & - & - & 6 & $(2,6)$\\
\hline
\end{tabular}
\caption{Given an edge matrix of 10 rows, 
we use a window of fixed length $w=2$ to obtain $\{e^w_k\}$, which is in alignment with the structure required in Section \ref{sec:model}.}
\label{tab:time_index}
\end{table}

Since the true value of $w$ is unknown for real datasets, we proceed
by first specifying a possible range, $\mathcal{W}$, for the window
length, then offering two \sid{selection} criteria to decide an optimal
$w$. The first proposed criterion is based on the likelihood
principle, where we choose the optimal window length, $w_L^\star$ (the
subscript $L$ stands for ``Likelihood''), as: 
\begin{align}
\label{eq:Wstar}
w_L^\star = \underset{w\in \mathcal{W}}{\mathrm{argmax}}\,\ell\left(\hat{\boldsymbol{\theta}}_w; G(n_0), (e^w_k)_{k = n_0 + 1}^n\right),
\end{align}
and $\{e^w_k\}$ denotes the interpreted edge list
using a window of length $w$ \sid{that conforms to the model.}

The second \sid{selection} criterion  choose\sid{s} the $w$ that best reflects the
theoretical MRV index $\iota$ as given in \eqref{eq:MRV_IO}. 
\sid{We start by estimating $\iota$ using 
the the minimum distance method given in  \cite{clauset:shalizi:newman:2009}
applied to the in- and out-degree sequences.}  \sid{Then} for a given $w$,
we compute the corresponding window estimates,
$\hat{\boldsymbol{\theta}}_w$, and  \sid{compute} $\iota$ by plugging
$\hat{\boldsymbol{\theta}}_w$ into the $\iota$ expression in
\eqref{eq:MRV_IO} \sid{thereby yielding} the estimate $\hat{\iota}_w$.
The optimal window length  minimizes the discrepancy between these two estimates, i.e. 
\begin{align}
\label{eq:WstarTail}
w_T^\star = \underset{w\in \mathcal{W}}{\mathrm{argmin}} \bigl\lvert \hat{\iota} - \hat{\iota}_w  \bigr\rvert,
\end{align}
where the subscript $T$ stands for ``Tail''.
Then $\hat{\boldsymbol{\theta}}_{w_L^\star}$ and
$\hat{\boldsymbol{\theta}}_{w_T^\star}$ are the two optimal window
estimates corresponding to the two \sid{selection} criteria in
\eqref{eq:Wstar} and \eqref{eq:WstarTail}, respectively. 
Later in Section~\ref{sec:sim}, we will examine the validity of these
two criteria through simulation studies.

We remark that \sid{for programming ease}  the window length $w$ in
our method is an integer and steps
refer to edge evolution.
\sid{However, our methods could be adapted so  $w\in \RR_+$ and the window refers to timestamps.}


\subsection{Extreme-Value Estimation Method}\label{subsec:ev_est}

While the window estimator is based on likelihood principles, most genuine datasets require modifications to be plausibly generated by the proposed PA model with reciprocity. 
If the timestamp of a given dataset is coarse (i.e. multiple edges are labeled with the same timestamp), then the window estimator becomes inapplicable since we cannot decide the exact order of edge creation. 
Therefore, it is useful to have a second method, and we propose an extreme-value based estimation approach, motivated by Theorem~\ref{thm:MRV}. This approach relies less heavily on the knowledge of the complete timestamp information. 

Assuming a reciprocal PA model, the empirical scenario proportions satisfy 
\begin{equation}
\label{eq:proportions}
\begin{split}
\hat{\alpha}_n &= \frac{1}{|E(n)|} \sum_{k = 1}^n \mathbf{1}_{\{J_k = 1\}} \xrightarrow{\text{a.s.}}  \frac{\alpha}{1 + \rho}, \\
\hat{\gamma}_n &= \frac{1}{|E(n)|} \sum_{k = 1}^n \mathbf{1}_{\{J_k = 3 \}} \xrightarrow{\text{a.s.}}  \frac{\gamma}{1 + \rho}, 
\end{split}
\end{equation}
where $\hat{\alpha}_n$ and $\hat{\gamma}_n$ are identifiable from the
\sid{edge list}, even when the timestamp is coarse.  
Considering a re-parametrization for the reciprocal PA model through $(\alpha,\beta,\iota,b)$, we give details on the extreme-value estimation approach.

By Theorem~\ref{thm:MRV}, we see that $(\mathcal{I},\mathcal{O})$ has
a distribution with standard regularly varying tails so that
$\mathcal{I}+\mathcal{O}$ has the same tail index $\iota$ as given in
\eqref{eq:MRV_IO}. 
Given in- and out-degree sequences \sid{$\{(\Din_v(n),\Dout_v(n)):
  v\in V(n)\}$}
from a reciprocal PA model with $n$
steps of evolution,  we apply
the minimum distance method (cf. \cite{clauset:shalizi:newman:2009})
to  
$\{\Din_v(n)+\Dout_v(n): v\in V(n)\}$ to obtain the tail index estimate $\widehat{\iota}$.
Further, recall from \eqref{e:defa} that 
the limit measure of the regularly varying measure $\PP[(\mathcal{I}, \mathcal{O})\in \cdot]$ concentrates on.
\begin{equation}
\label{eq:slope}
\left\lbrace (x, y) \in \mathbb{R}_+^2\setminus \{\origin\} : y = bx = \frac{\gamma-\alpha + \sqrt{D_0}}{2(\beta + \gamma)\rho} x \right\rbrace.
\end{equation}
We then estimate $b$ using the the angular density plot as follows. 

Using the selected threshold for $\{\Din_v(n)+\Dout_v(n): v\in V(n)\}$ produced by the minimum distance method along with necessary sanity checks using graphical
tools such as altHill plots (cf. \cite{drees:dehaan:resnick:2000}), we construct the
angular density plot for large in- and out-degree pairs, from which we find 
the estimate of the empirical mode, $\widehat{m}$.
We use the \verb6locmodes()6 function in R's \verb6multimode6 package to decide 
the bandwidth. 
Then it follows from the 
$L_1$-polar transform in \eqref{eq:angle} that
$$\widehat{b} =\frac{1}{\widehat{m}} - 1.$$
Equating the theoretical results in \eqref{eq:MRV_IO}, \eqref{eq:proportions} and \eqref{eq:slope} with their respective estimators $\hat{\iota}$, $\widehat{\alpha}_n$, $\widehat{\beta}_n$ and $\widehat{b}$, we solve a system of four equations with unknowns $(\alpha, \beta, \rho, \delta)$ to obtain the estimator $\widehat{\theta}_{EVT}$,
which is then compared with 
window estimators through simulation studies in Section~\ref{sec:sim}. Both estimation methods are also applied to a real dataset in Section~\ref{sec:real}.


\section{Simulation Studies}\label{sec:sim}

In this section, we apply the two estimation methods presented in Section \ref{sec:est} to simulated data.
We first assume the observed edge list is generated directly from the PA model with reciprocity so that the true value of window length is 0. Second, we assess the robustness of our estimation approaches by shifting the reciprocal edge added at step $k$ to step $k+W_k$, where $\{W_k:k\ge 1\}$ is a sequence of iid non-negative integer-valued random variables.


\subsection{Simulation of the PA model with Reciprocity}
\label{subsec:sim_recip}

We start by simulating 100 reciprocal PA networks where reciprocal edges are added at the same time step as their parent edge, and we set $\boldsymbol{\theta} = (0.2, 0.7, 0.2, 1)$ and $n = 10^5$. The resulting datasets conform completely with the assumptions in Theorem \ref{thm:MRV} and the clear correct choice of $w^\star$ is $0$. For the window estimators with optimal criteria in \eqref{eq:Wstar} and \eqref{eq:WstarTail}, we search for the optimal $w$ over the set $\mathcal{W} = \left\lbrace 0, 1, \dots, 50 \right\rbrace$. The absolute errors for the four parameters are displayed in the left panel of Figure \ref{fig:nopermutesims}, and 
we report the RMSE for both window and extreme-value based estimators in the first row of 
Table \ref{tab:RMSE}. 

\begin{table}[h]
\centering
\hspace{-1cm}
\begin{tabular}{c c c c}
\hline 
Law($W_k$) & $\text{RMSE}(\hat{\boldsymbol{\theta}}_{w_L^\star})$ (\%) & $\text{RMSE}(\hat{\boldsymbol{\theta}}_{w_T^\star})$ (\%) & $\text{RMSE}(\hat{\boldsymbol{\theta}}_\text{EVT})$ (\%)\\ 
\hline
$W_k =0$ & $(0.29, 0.14, 0.12, 2.82)$ & $(0.29, 0.14, 0.13, 2.87)$ & $(2.45, 3.97, 15.87, 31.99)$  \\
Geometric($0.3$) & $(0.15, 0.14, 0.14, 1.43)$ & $(0.57, 0.81, 3.30, 9.57)$  & $(2.83, 4.36, 17.39, 31.40)$ \\
\hline
\end{tabular}
\caption{RMSE (in \%) of $\hat{\boldsymbol{\theta}}_{w_L^\star}$, $\hat{\boldsymbol{\theta}}_{w_T^\star}$ and $\hat{\boldsymbol{\theta}}_{EVT}$ across 100 datasets of size $n = 10^5$ simulated from PA data across a variety of parameter choices and reciprocal edge shifts.}
\label{tab:RMSE}
\end{table}

\begin{figure}[h]
\centering
\includegraphics[scale = 0.45]{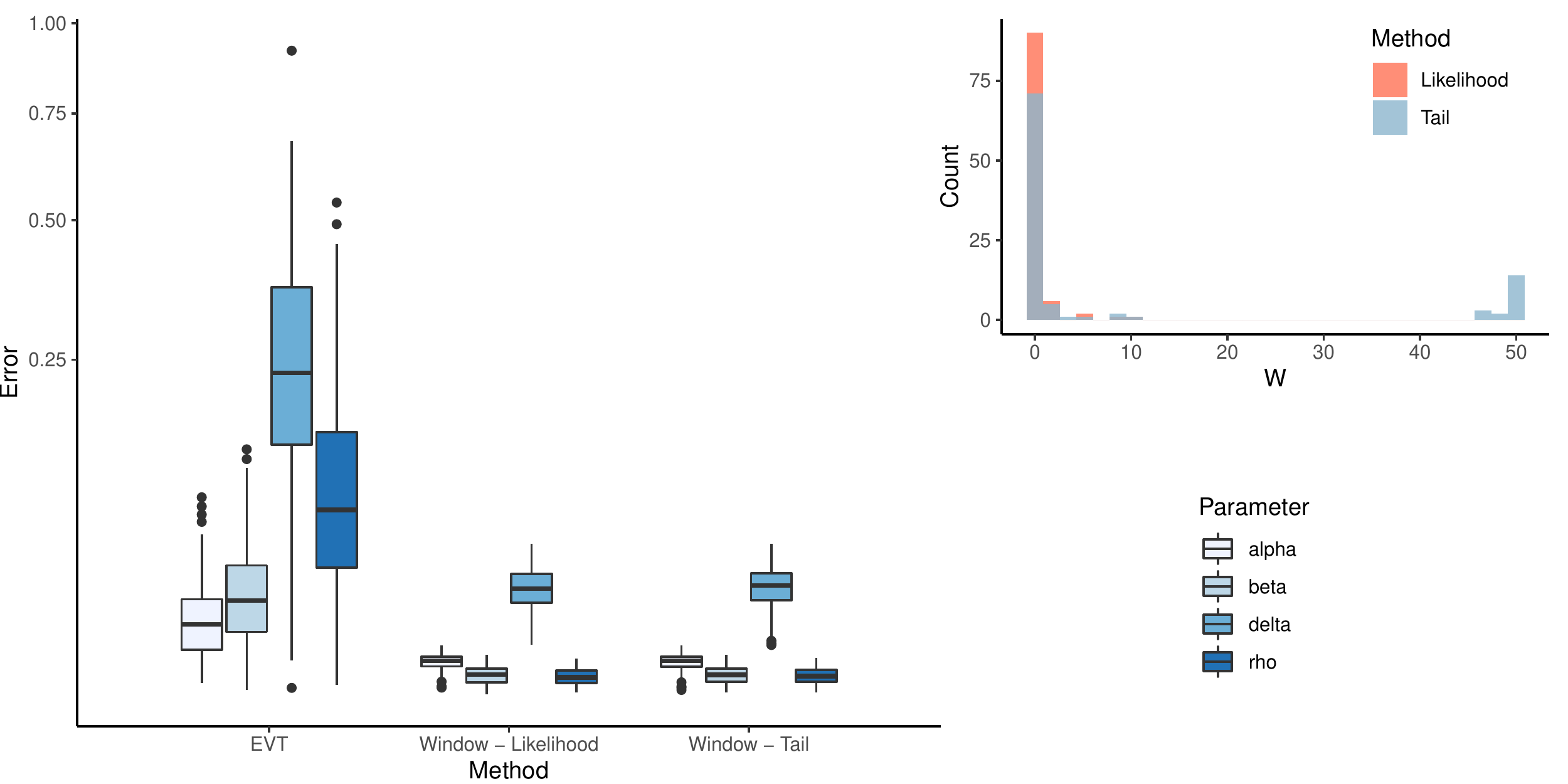}
\caption{Left: Absolute error of $\hat{\boldsymbol{\theta}}_{EVT}$, $\hat{\boldsymbol{\theta}}_{w_L^\star}$ and $\hat{\boldsymbol{\theta}}_{w_T^\star}$ (square-root scale) for 100 datasets simulated from a preferential attachment model with $\boldsymbol{\theta} = (0.2, 0.7, 0.2, 1)$ and $n = 10^5$.  
Right: The histogram for the optimal $w^\star$ chosen by \eqref{eq:Wstar} and \eqref{eq:WstarTail}. We allow $w$ to range from $0$ to $50$.}
\label{fig:nopermutesims}
\end{figure}

\begin{figure}[h]
\centering
\includegraphics[scale = 0.4]{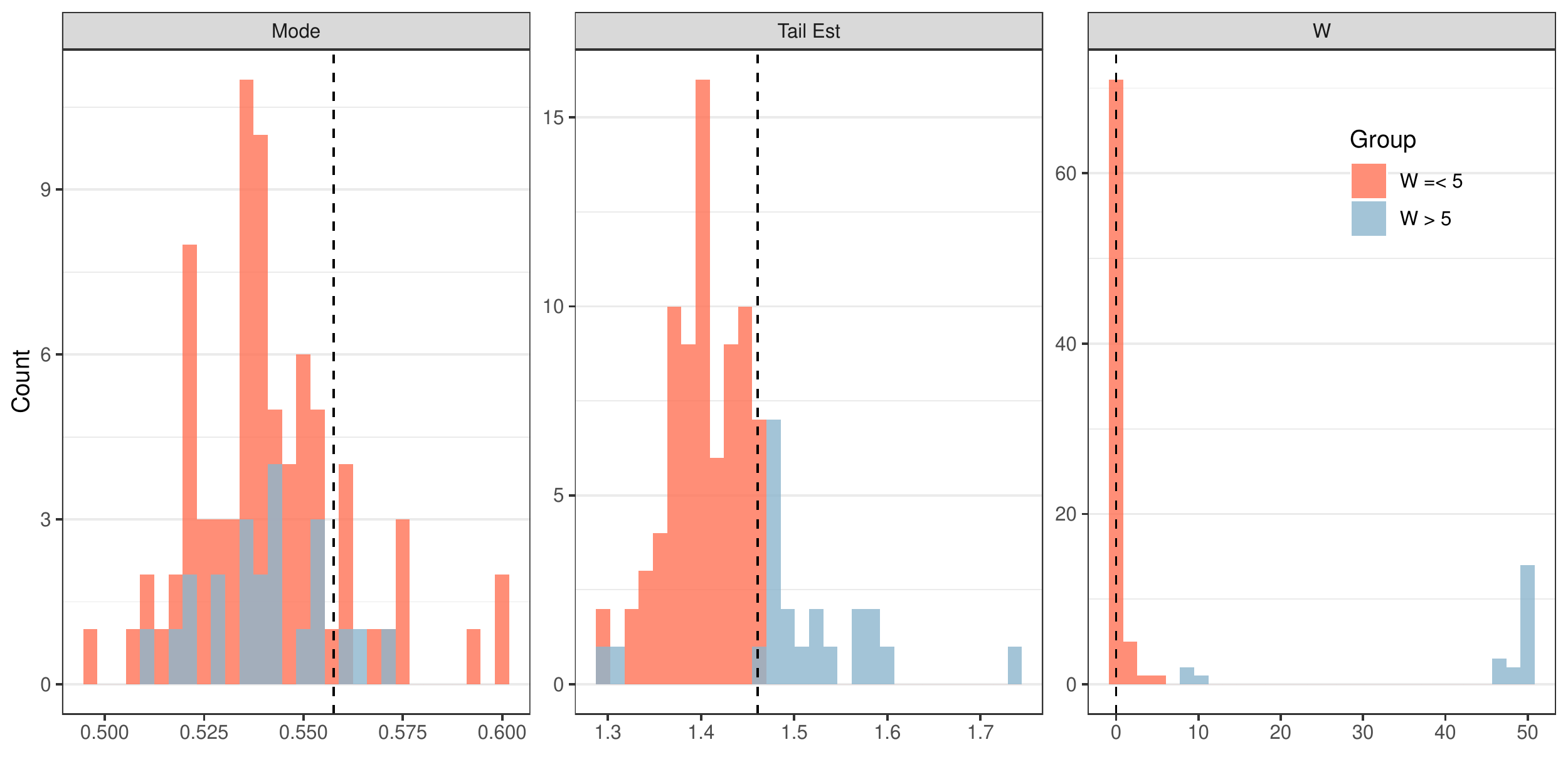}
\caption{Estimates of $m$, $\iota$ and the corresponding choice of $w^\star$ from \eqref{eq:Wstar} and \eqref{eq:WstarTail} for 100 datasets simulated from a preferential attachment model with $\boldsymbol{\theta} = (0.2, 0.7, 0.2, 1)$ and $n = 10^5$. Note that overestimates of $\iota$ often correspond with overestimates of $w_T^\star$.}
\label{fig:nopermutestats}
\end{figure}

As expected when the data comes from the actual model, both boxplots and RMSE values show that the window estimators outperform the extreme-value based estimator by having much lower absolute errors for all four parameters, regardless of the choice of criterion \eqref{eq:Wstar} or \eqref{eq:WstarTail}.
To understand why the extreme-value method performs poorly,
we track the estimated mode $\hat{m}$ and tail index $\hat{\iota}$ in
the left and middle panels of Figure~\ref{fig:nopermutestats},
respectively, and the vertical dashed lines correspond to the
theoretical values of $m$ and $\iota$. We see from these two panels
that the estimates are variable, and not always close to the
theoretical values.

\sid{There are several possible reasons for} the \sid{comparatively} poor performance \sid{of}
the extreme-value estimators.
According to the analysis in
\cite{drees:janssen:resnick:wang:2020}, even for iid data with
Pareto-like tails, the minimum distance method \sid{that we are
  relying on} tends to choose too
high a threshold. Simulation studies in
\cite{drees:janssen:resnick:wang:2020} for \sid{PA models} suggest
that 
performance of the minimum distance method in the standard PA regime
\sid{may depend} on the choice of parameters.   
Thus relying on the minimum distance method to determine the threshold
and therefore estimates of the slope and MRV index requires caution
and for real datasets, we need to visually consult Hill or altHill plots to further validate the threshold chosen by the minimum distance method. 
Additionally, unlike the window estimator which uses the entire network history, the
extreme-value approach makes inference  using  only a small
proportion of the data.

\sid{Features that recommend the extreme value method include} the fact that
when the timestamp information is coarse, i.e. multiple edges are
annotated with the same timestamp, one may still identify the
$\alpha$- and $\gamma$-scenarios so that the extreme-value approach
remains valid, whereas the proposed window method is not applicable.
Also, based on the evidence in \cite{wan:wang:davis:resnick:2017b}, 
when there is model error, the extreme-value method should
do better than the model dependent likelihood-based approach.

To compare the two window estimators, 
we plot a histogram for the chosen optimal $w^\star$ for each realization in the right panel of Figure \ref{fig:nopermutesims}, where 90 of the 100 trials give us an optimal $w_L^\star$ of $0$.
However, the selected $w_T^\star$ is not as accurate as $w_L^\star$, assigning only 71 of the 100 trials a $w_T^\star$ of $0$.
\sid{Figure \ref{fig:nopermutestats} highlights simulation} realizations with selected $w_T^\star>5$
and compares the corresponding
$\hat{m}$ and $\hat{\iota}$. We see from Figure
\ref{fig:nopermutestats} that the poor choices of $w_T^\star$ often
correspond to poor estimates of $\iota$ produced by the minimum
distance method. 
Despite the inaccuarcy of the optimal window length,
$\hat{\boldsymbol{\theta}}_{w_T^\star}$ performs as well as
$\hat{\boldsymbol{\theta}}_{w_L^\star}$ in terms of the RMSE,
indicating that the parameter estimates are not \sid{very} sensitive to the
choice of $w^\star$.  

\sid{We conclude that} the two criteria proposed in \eqref{eq:Wstar} and \eqref{eq:WstarTail} are both helpful tools to produce reasonable parameter estimates when reciprocal edges are created simultaneously.




\subsection{Robustness of Estimators Against Random Edge Shifts}

Based on the illustration in Table~\ref{tab:time_index} and the simulation results in the preceding section, we strongly suspect that the window estimators provide consistent estimates for the four parameters in the reciprocal PA model. 
However, suppose that we do not observe the reciprocal edge in $e_k$ until step $k+w_k$, and $w_k$ differs for each $k$. 
Will a constant optimal window length $w^\star$ be able to provide consistent parameter estimates?

To further examine the robustness of the proposed estimation procedures against random edge shifts, we assume the reciprocal edge created at step $k\in \mathcal{R}$ will not be observed until step $k+W_k$,
where $\{W_k:k\ge 1\}$ is a sequence of iid geometric random variables (also independent from the evolution of the reciprocal PA network) with pmf
\[
\PP(W_k=m) = 0.7^m 0.3,\qquad m=0,1,2,\ldots. 
\] 
Applying the two \sid{selection} criteria given in \eqref{eq:Wstar} and
\eqref{eq:WstarTail}, we compare the corresponding optimal window
lengths, as well as the performance of
$\hat{\boldsymbol{\theta}}_{w^\star_L}$ and
$\hat{\boldsymbol{\theta}}_{w^\star_T}$. 


\begin{figure}[h]
\centering
\includegraphics[scale = 0.45]{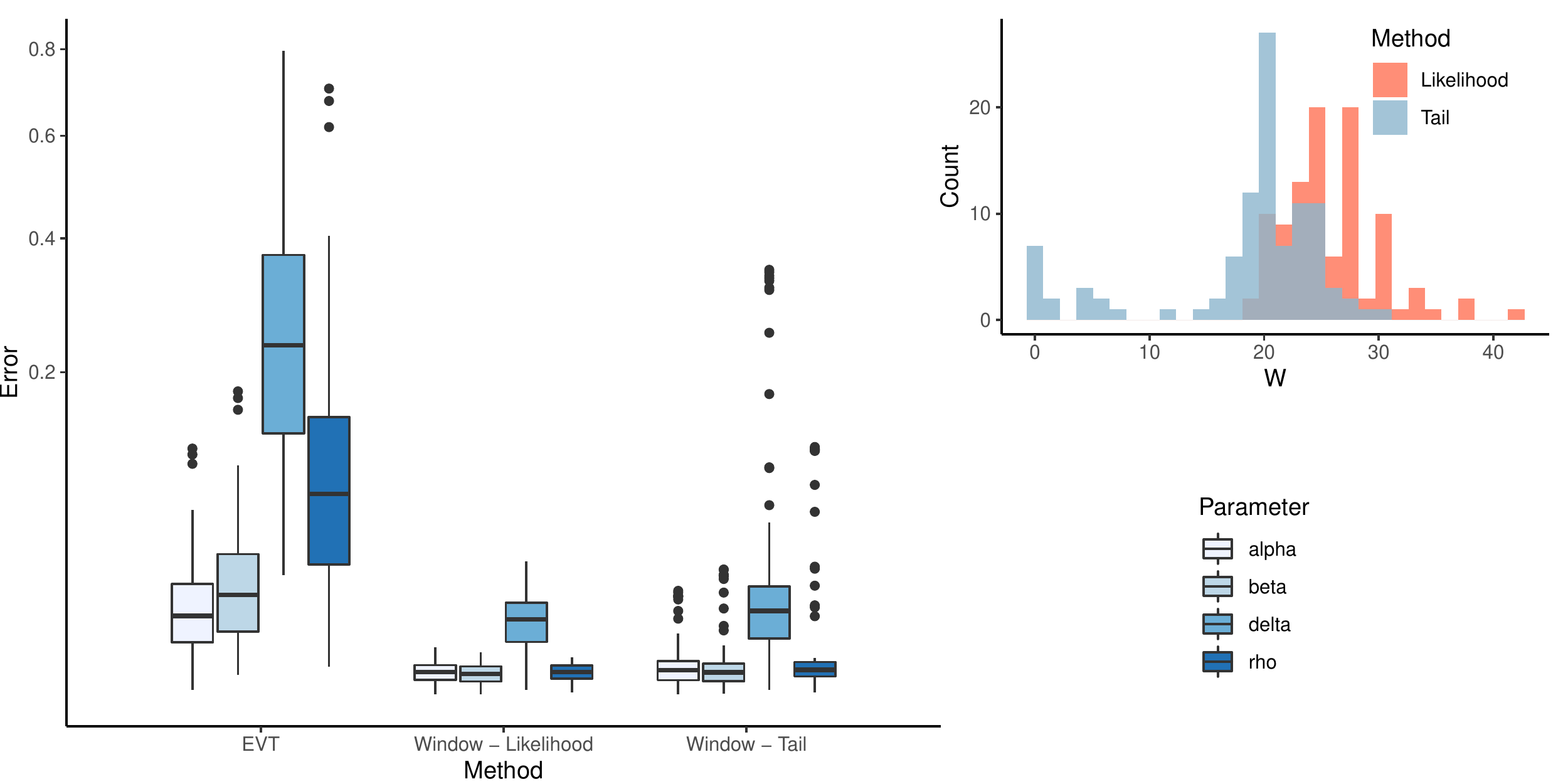}
\caption{Left: Absolute error of $\hat{\boldsymbol{\theta}}_{EVT}$, $\hat{\boldsymbol{\theta}}_{w^\star_L}$ and $\hat{\boldsymbol{\theta}}_{w^\star_T}$ (square-root scale) for 100 datasets simulated from the same setup as Figure \ref{fig:nopermutesims}, but now reciprocal edges are allocated at random times from birth according to a geometric distribution with mean $7/3$. 
Right: 
The histogram for the optimal $w^\star$ chosen by \eqref{eq:Wstar}. We allow $w$ to range from $0$ to $50$.}
\label{fig:permutesims}
\end{figure}

Keeping the same parameters $\boldsymbol{\theta} = (0.2, 0.7, 0.2, 1)$, $n=10^5$, and $\mathcal{W} = \left\lbrace 0, 1, \dots, 50 \right\rbrace$ as in the previous section, we 
apply both window (with two different optima\sid{lity} criteria) and extreme-value based approaches to 100 simulated reciprocal PA networks after random edge shifts. The absolute errors for the four parameters are displayed in the left panel of Figure \ref{fig:permutesims}, and 
the RMSE's for all three estimators are reported in the second row of 
Table \ref{tab:RMSE}. Similar to the results in
Figure~\ref{fig:nopermutesims},  the window estimators
 outperform the extreme-value based ones, and the RMSE's for the
window estimates remain small.  
Although the extreme-value estimates have larger RMSE's (especially for $\rho$ and $\delta$) compared to the window estimates, the random edge shifts do not degrade significantly the performance of the extreme-value estimators.
In contrast, even though
medians of the absolute errors for $\hat{\boldsymbol{\theta}}_{w_T^\star}$ and $\hat{\boldsymbol{\theta}}_{w_L^\star}$ are nearly identical, the absolute errors for $\hat{\boldsymbol{\theta}}_{w_T^\star}$ tend to take on extreme values more often.
The larger variation of $\hat{\boldsymbol{\theta}}_{w_T^\star}$ is presumably due to the variability of the estimated $\hat\iota$ by the minimum distance method as revealed in the middle panel of Figure~\ref{fig:nopermutestats}.


The right panel of Figure~\ref{fig:permutesims} gives the histogram for the selected $w^\star$ for all 100 realizations.
Although a reciprocal edge is, on average, added $7/3$ steps later than its parent edge, the choice of $w^\star$ determined by \eqref{eq:Wstar} and \eqref{eq:WstarTail}  is larger in order to capture the appropriate number of reciprocal edges that maximizes the likelihood for the relabeled data. 
Overall, selected window lengths, $w_L^\star$, are larger than $w_T^\star$, indicating that the criterion in \eqref{eq:Wstar} tends to favor a larger $\rho$. 
Figure \ref{fig:contour} displays a contour plot of $\ell\left((0.2, \beta, \rho, 1); (e_k)_{k = 1}^n\right)$ based on a single dataset $(e_k)_{k = 1}^n$ 
generated from the original reciprocal PA model without edge shifts. 
Then after applying the random edge shifts, we obtain the interpreted edge list using a variety of window lengths,
$w\in \mathcal{W}$, and 
black dots in Figure \ref{fig:contour} correspond to the log-likelihoods associated with
$\left\lbrace \hat{\boldsymbol{\theta}}_w: w \in \mathcal{W} \right\rbrace$. 
The estimated $\hat{\boldsymbol{\theta}}_{w^\star_L}$ with optimal $w_L^\star$ chosen by \eqref{eq:Wstar} is colored in red, and the true parameter $\boldsymbol{\theta}$ is colored in green. 
As the window length $w$ increases, 
we have a higher estimated $\hat{\rho}_w$ but a lower $\hat{\beta}_w$.
The window estimator $\hat{\boldsymbol{\theta}}_{w^\star_L}$ thus employs the likelihood based on the relabeled dataset $(e^w_k)_{k = 1}^n$ as a proxy for the likelihood of the original dataset $(e_k)_{k = 1}^n$ to optimally choose a $\hat{\boldsymbol{\theta}}_w$ near $\boldsymbol{\theta}$. 

\begin{figure}[h]
\centering
\includegraphics[scale = 0.45]{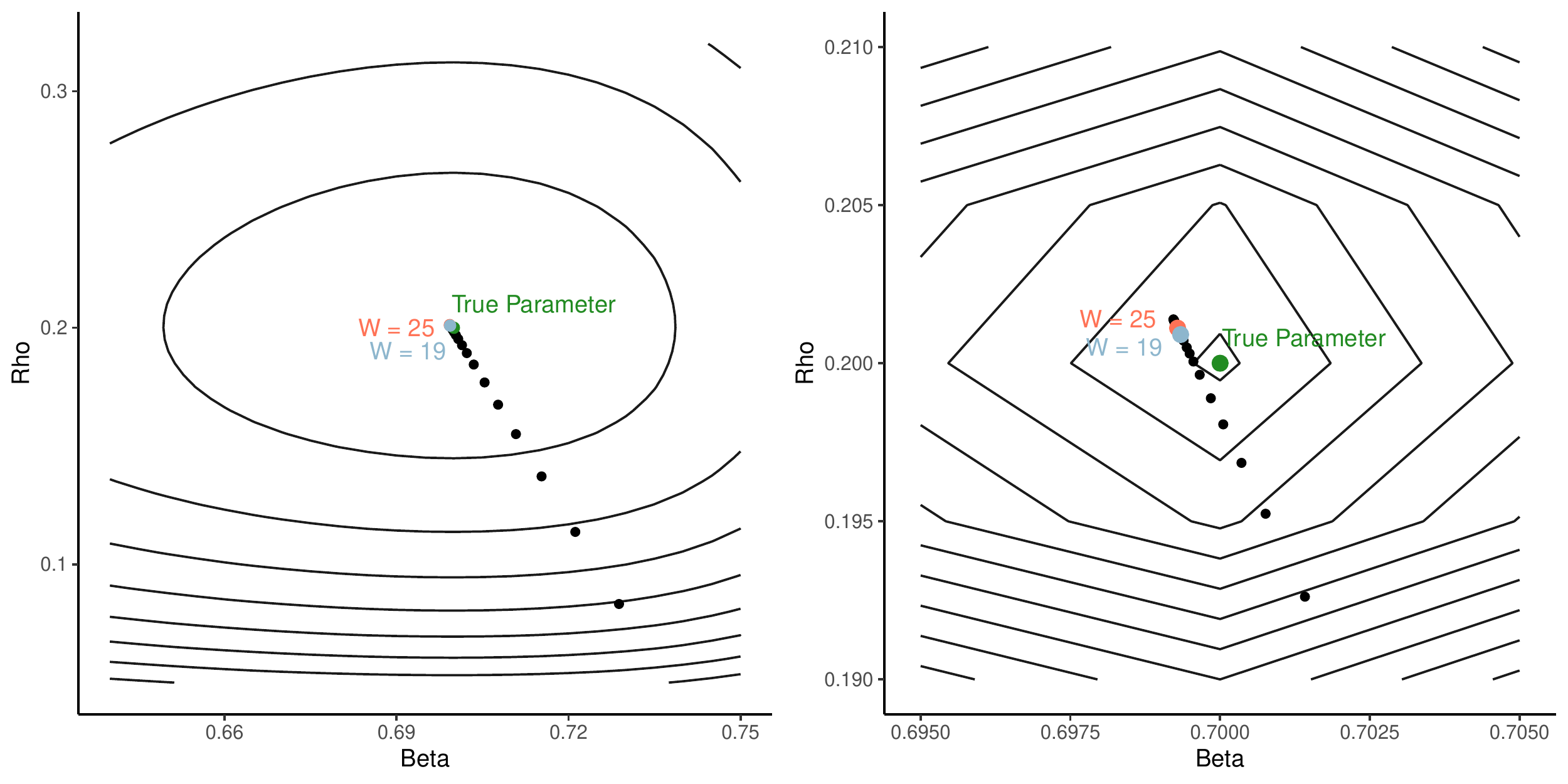}
\caption{Contour plot of $\ell\left((0.2, \beta, \rho, 1); (e_k)_{k = 1}^n\right)$ for a PA simulated dataset $(e_k)_{k = 1}^n$ with $\boldsymbol{\theta} = (0.2, 0.7, 0.2, 1)$ (green) and known events $(J_t)_{k = 1}^n$ and $(R_t)_{k =1}^n$. Plotted in black are estimates $\hat{\boldsymbol{\theta}}_w$
based on relabeled datasets $(e^{w}_k)_{k = 1}^n$ with $w^\star_L = 25$ (red) and $w^\star_T = 19$ (blue).}
\label{fig:contour}
\end{figure}

The good performance of $w^\star_L$ is further supported by the upper
panel of Figure \ref{fig:likelihood_comp}, where we record values of
$\ell\left(\hat{\boldsymbol{\theta}}_w; (e_k)_{k = 1}^n\right)$ and
$\ell\left(\hat{\boldsymbol{\theta}}_w; (e^w_k)_{k = 1}^n\right)$ as
$w$ increases and $\hat{\boldsymbol{\theta}}_w$ slices through the
parameter space. For  appropriately chosen $w$,
$\ell\left(\hat{\boldsymbol{\theta}}_w; (e^w_k)_{k = 1}^n\right)$ is a
good approximation of $\ell\left(\hat{\boldsymbol{\theta}}_w; (e_k)_{k
    = 1}^n\right)$. The chosen $w^\star_L$ is near the $w$ that
maximizes $\ell\left(\hat{\boldsymbol{\theta}}_w; (e_k)_{k =
    1}^n\right)$ (the green dot), which are 25 and 24,
respectively. In addition, beyond a $w$ of 17, both likelihoods are
relatively stable, indicating that the window method is also robust to
poor choices of $w$. As the window size increases, the estimates
$\hat{\boldsymbol{\theta}}_w$ coalesce in areas around the optimum
where the likelihood exhibits increased curvature. Hence, small
deviations in $w$ will not result in drastically different
$\hat{\boldsymbol{\theta}}_w$ estimates, 
which validates the application of the window estimation method to real datasets.

To check the performance of the \sid{selection} criterion in
\eqref{eq:WstarTail}, we use the same simulated dataset as in
Figure~\ref{fig:contour} and get $w^\star_T=19$. We mark the position
of $\hat{\boldsymbol{\theta}}_{w^\star_T}$ as the blue dot in the
contour plots of Figure~\ref{fig:contour}, 
and the blue dot in the upper panel of Figure~\ref{fig:likelihood_comp} gives the value of
$\ell\left(\hat{\boldsymbol{\theta}}_{w^\star_T}; (e^{w^\star_T}_k)_{k = 1}^n\right)$. Even though the selected $w^\star_T$ is smaller than $w^\star_L$, we do not observe much differences in the estimated parameters or the maximized likelihood. We also track values of $\hat{\iota}_w$ for different $w$ in the lower panel of Figure~\ref{fig:likelihood_comp}, and the tail estimate $\hat{\iota}$ chosen by the minimum distance method is given as the red line, which is slightly lower than the true value of $\iota$ (green line). Estimated $\hat{\iota}_{w^\star_L}$ and $\hat{\iota}_{w^\star_T}$ are colored in red and blue, respectively, both of which are close to $\iota$. Hence, we conclude that the criterion in \eqref{eq:WstarTail} provides a reasonable resolution in deciding the optimal window size, but its performance will depend on the accuracy of the minimum distance method.

\begin{figure}[h]
\centering
\includegraphics[scale = 0.45]{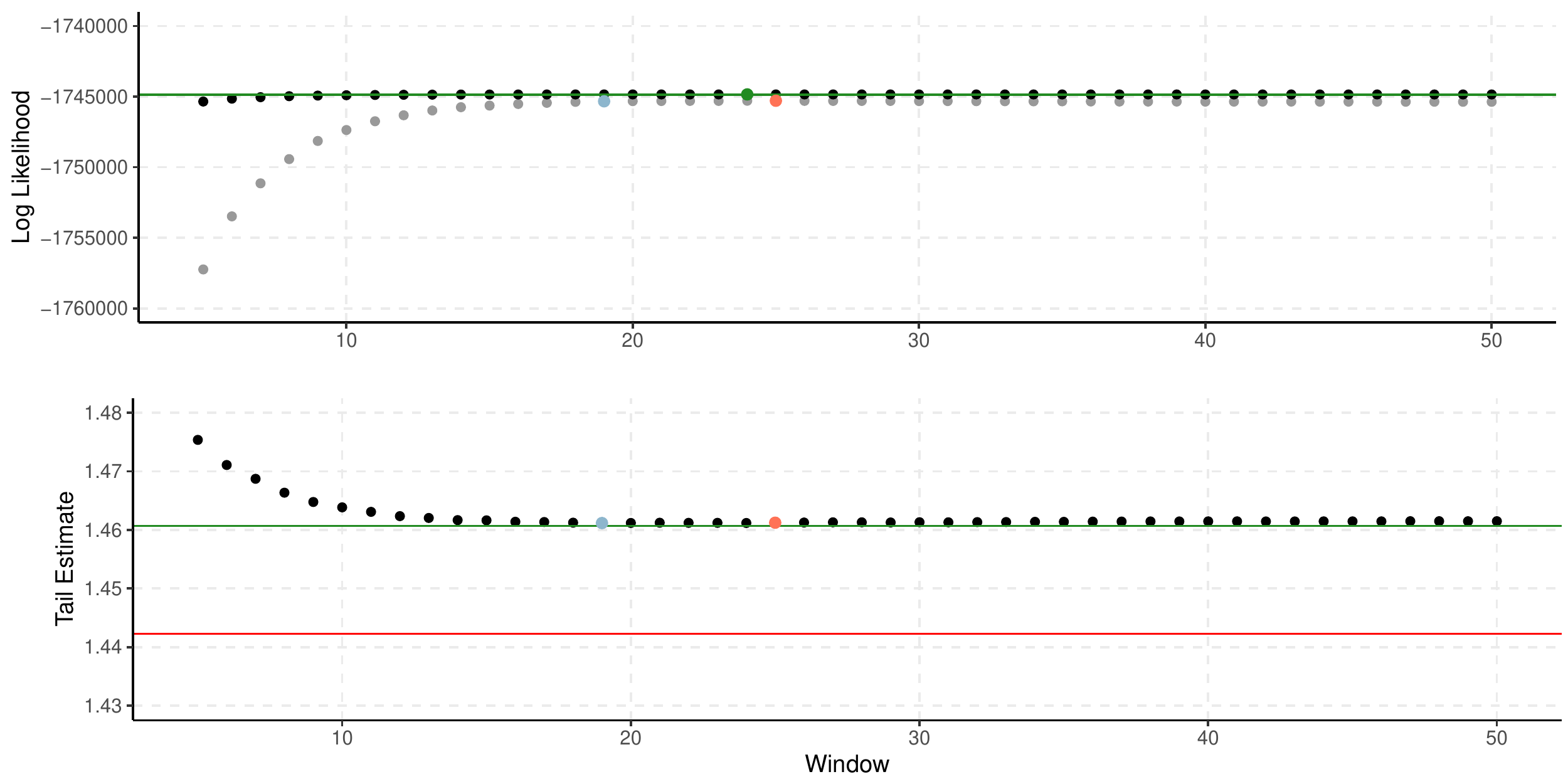}
\caption{Upper: Plot of $\ell\left(\hat{\boldsymbol{\theta}}_w; (e_k)_{k = 1}^n\right)$ (in black) and  $\ell\left(\hat{\boldsymbol{\theta}}_w; (e^w_k)_{k = 1}^n\right)$ (in gray). The selected $w^\star_L$ and $w^\star_T$ are colored in red and blue, respectively, while $\ell\left( \boldsymbol{\theta}; (e_k)_{k = 1}^n \right)$ is given by the green line. Lower: Plot of $\hat{\iota}_w$. The true value of $\iota$ is marked by the green line, and the tail estimate given by the minimum distance method is given by the red line. Values of $\hat{\iota}_{w^\star_L}$ and $\hat{\iota}_{w^\star_T}$ are colored in red and blue, respectively.}
\label{fig:likelihood_comp}
\end{figure}

\section{Real Data: Facebook Wall Posts}\label{sec:real}

\sid{This section demonstrates the mechanics} of fitting a reciprocal PA model to a
non-synthetic social network, namely the
Facebook wall post data on KONECT \footnote{The dataset is available
  at \url{http://konect.cc/networks/facebook-wosn-wall/}}
\cite{kunegis:2013}. \sid{In this dataset, the} network consists of 46,952 nodes and
876,993 edges where each node represents a user and each edge
represents a post to another user's wall. The users are based in New
Orleans and their posts are monitored over a period from 09/14/2004 to
01/22/2009. Note that user\sid{s} can post to their own wall and other
users' walls multiple times. 
We treat the edges added before April 7th, 2007 as the seed
graph. This date is chosen since it lies near the beginning of an
interval of time where the network experiences steady, almost linear
growth in the number of edges added from day to day, and we know from
\eqref{eq:proportions} that growth of the number of vertices and edges
should be linear. 
Details on different phases of growth for the Facebook data have been studied in \cite{wang:resnick:2019b}. 

\paragraph*{Data pre-processing.}
Here we also make two additional adjustments to the Facebook
data. (i) We remove nodes with in-degree 0 and out-degree greater
than 33 (the 80th percentile of nodes with in-degree 0). These nodes
\sid{exhibit} behavior that cannot be well-modeled by the reciprocal PA
model. \sid{Incorporating such nodes requires
the modeling of two  distinct populations with
different reciprocity features.} We leave the study of
modeling heterogeneous reciprocity levels in networks to future
work. (ii)  Several nodes with large in- and out-degrees in the
seed graph become inactive during our observation period \sid{from}
04/08/2007 to 01/22/2009. If we ignore their inactivity, then the
estimated $\hat{\iota}$ given by the minimum distance method will be
biased by the presence of these large but inactive nodes. Therefore,
we need to minimize the influence of the seed graph. When applying the
extreme-valued based approach, 
we delete nodes (together with their associated edges) in the seed graph which: (1) have not grown to at least 10 times their original size in the sum of in- and out-degrees and (2) have total degrees that are larger than the threshold selected by the minimum distance procedure. The latter adjustment is made since nodes with degrees that are below the threshold chosen by the minimum distance procedure will have no impact on the estimation of $\iota$ and $b$, assuming the same threshold is used for both estimators. We call the nodes that have undergone sufficient growth during the observation period active. Using the minimum distance method on the resulting $\Din_v(n) + \Dout_v(n)$ observations for active degrees corresponds to a tail index estimate of $\hat{\iota} = 1.309$.

\begin{figure}[h]
\centering
\includegraphics[scale = 0.5]{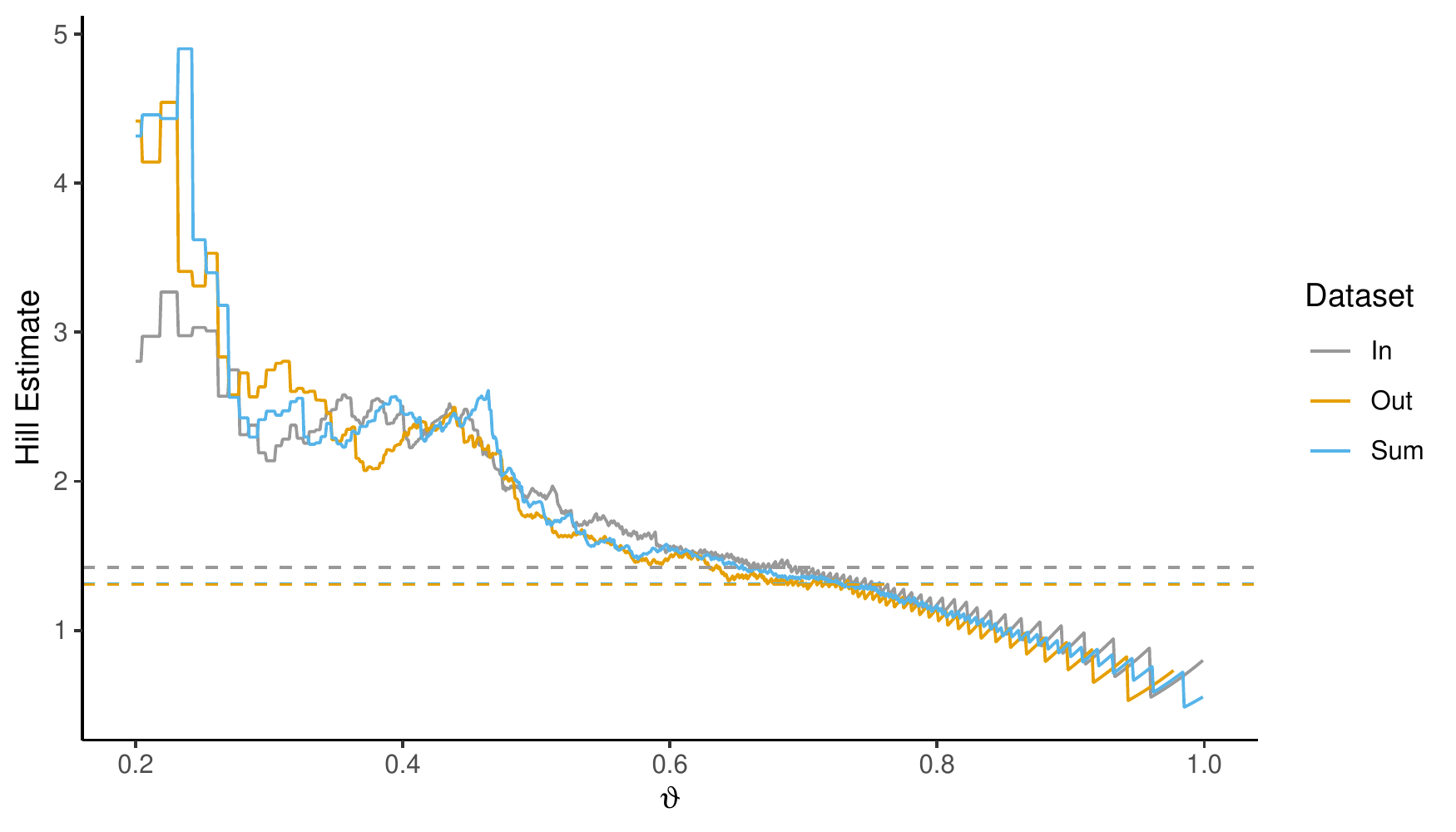}
\caption{AltHill plot of $\iota$. The dotted lines represents the empirical tail estimates of the in- and out-degrees along with their sum given by the minimum distance procedure. 
It is reasonable to assume the standard MRV result as in Theorem~\ref{thm:MRV}.}
\label{fig:AltHill}
\end{figure}

Before we fit the PA model with reciprocity, we first
check whether the standard MRV result given in Theorem~\ref{thm:MRV} is satisfied by the Facebook data. 
By \cite{drees:dehaan:resnick:2000}, one graphical tool to examine the appropriateness of 
the estimated $\hat{\iota}$ is the \emph{altHill plot},
which plots
$
\left\{(\theta, H^{-1}_{\lceil n^\vartheta\rceil, n}): 0\le \vartheta <1\right\}
$
with $H_{k, n}$ being the canonical Hill estimator using the upper
$k+1$ order statistics \citep{hill:1975}.
We give the altHill plot (cf. \cite{drees:dehaan:resnick:2000}) of $\hat{\iota}$ and the estimated in- and out-tail indices in Figure \ref{fig:AltHill}. 

Through visual inspections on the altHill plot, one might choose a threshold corresponding to $\vartheta \in (0.25, 0.5)$, but we find that this segment of order statistics mostly corresponds to nodes in the seed graph. 
In addition, the consistency result for Hill estimators based on undirected PA models \citep{wang:resnick:2018} requires the number of upper order statistics used being at least of order $O(\sqrt{n\log n})$.
Despite the lack of theoretical justifications for the reciprocal PA
model, we \sid{presume} such restriction is still needed and conclude that a
tail estimate corresponding to $\vartheta \in (0.25, 0.5)$ may not
\sid{accurately represent the asymptotic regime.}

The marginal in- and out-degree tail indices given by the minimum
distance procedure are 1.436 and 1.452 \sid{which is consistent with the}
standard MRV \sid{assumption}. Figure
\ref{fig:AltHill} also shows the the choice of threshold for
$\hat{\iota}$ is appropriate, as $\hat{\iota}$ estimates are stable
near the chosen threshold.

\paragraph*{Model fitting.}
We now apply the proposed estimation procedures to the pre-processed Facebook data. For the two window estimators, we search over $\mathcal{W} = \{1, 2, \dots, 300000 \}$. We return optimal window lengths of $w_L^\star = 222500$ and $w_T^\star = 219$ which correspond to an average real-time periods of 436 days and 8.5 hours, respectively. The choice of $w_L^\star$ is large since the likelihood function becomes rather flat beyond $w = 2000$, and the difference in the likelihood,
$\ell\left(\hat{\boldsymbol{\theta}}_{222500}; (e^{222500}_k)_{k = 1}^n\right)-\ell\left(\hat{\boldsymbol{\theta}}_{2000}; (e^{2000}_k)_{k = 1}^n\right)$, is small.
In addition, we expect the reciprocal posts on Facebook walls to happen within a short real-time interval, so we continue our analysis with the choice of $w_T^\star = 219$. 
The corresponding window estimates are
$$
\hat{\boldsymbol{\theta}}_{w_T^\star} = (\hat{\alpha}_{w_T^\star}, \hat{\beta}_{w_T^\star}, \hat{\rho}_{w_T^\star}, \hat{\delta}_{w_T^\star}) = (0.008, 0.952, 0.280, 4.247).
$$ 

To apply the extremes-value method, we first estimate $b$ by constructing the angular density for the in- and out- degrees of active nodes using the same threshold of  $\{ \Din_v(n) + \Dout_v(n) \}$ selected by the minimum distance procedure ($45$). This returns an estimate of $\hat{b} = 1.051$. Then solving for the unknown parameters gives 
$$
\hat{\boldsymbol{\theta}}_{EVT} = (\hat{\alpha}_{EVT}, \hat{\beta}_{EVT}, \hat{\rho}_{EVT}, \hat{\delta}_{EVT}) = (0.009, 0.953, 0.231, 7.385).
$$ 
Here both methods return similar estimates for $\alpha,\beta$, but
the window method gives a higher $\rho$ estimate than the extreme-value method.
Also, we see a large difference in the estimated values of $\delta$.
Here we speculatively attribute the small differences in $\alpha,\beta,\rho$ estimates to the robustness 
of the extreme-value method against modeling errors, since real datasets are never as clean as simulated ones where the ground truth is known. 

\begin{figure}[h]
\centering
\includegraphics[width = \textwidth]{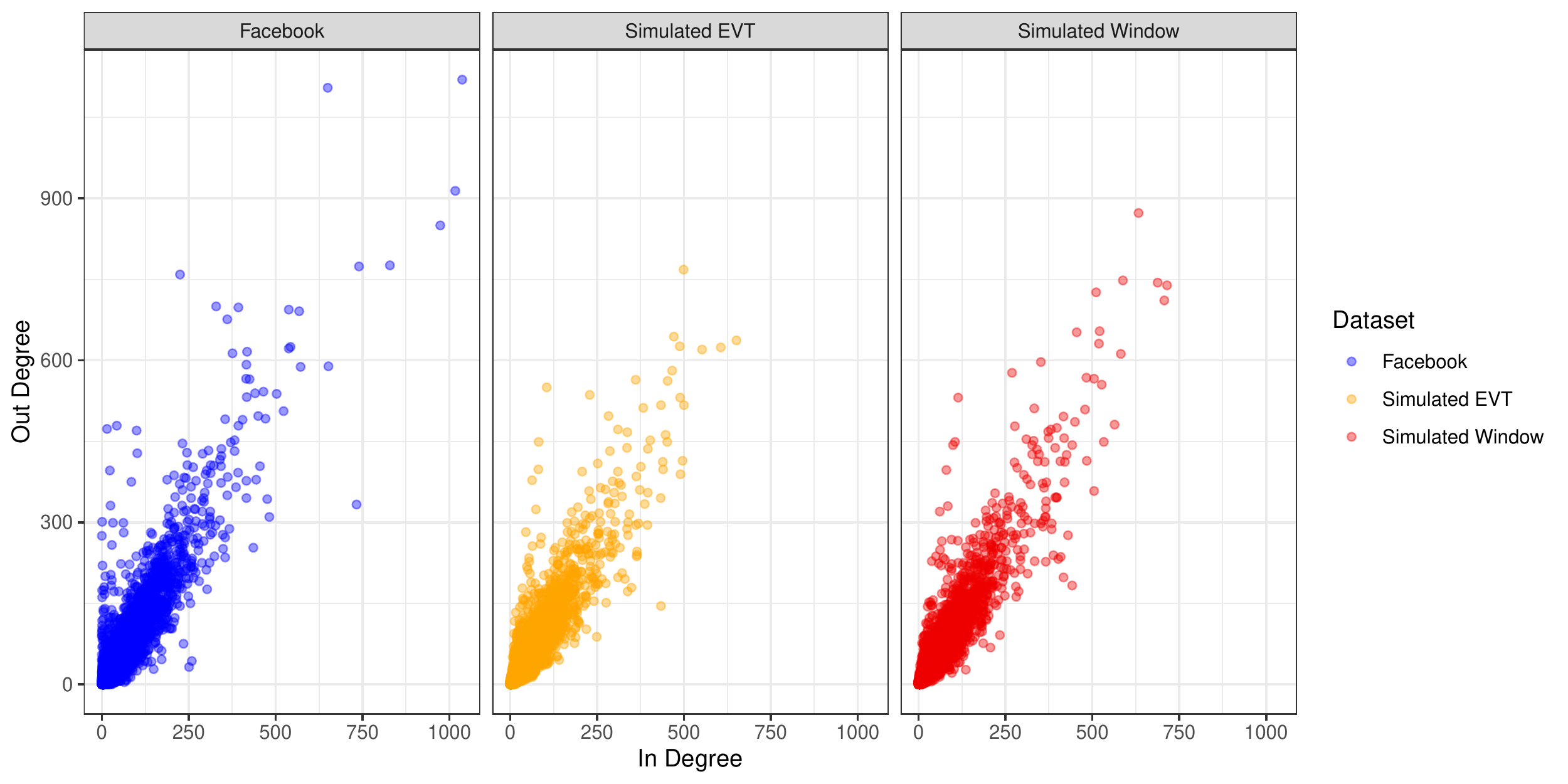}
\caption{Plots of in- and out-degrees for nodes from the pre-processed Facebook data (left) as well as PA-simulated datasets with parameters $\hat{\boldsymbol{\theta}}_{EVT}$ (middle) and $\hat{\boldsymbol{\theta}}_{w_L^\star}$ (right). The dependence between the in- and out- degrees is stronger for the window-simulated data in accordance with the higher estimate of $\rho$. }\label{fig:scatter_fb}
\end{figure}

Next, we assess the fit of the reciprocal PA model to the Facebook data based on the two methods. 
The middle and right panels of Figure~\ref{fig:scatter_fb} display two PA-simulated datasets of approximately the same number of edges as the Facebook data 
with parameters $\hat{\boldsymbol{\theta}}_{EVT}$ and $\hat{\boldsymbol{\theta}}_{w^\star_T}$, respectively. 
Note that both simulated datasets use the same initial graph resulting from accumulation until 04/07/2007.
For comparison, we also plot the pre-processed Facebook data in the left panel of Figure~\ref{fig:scatter_fb}.
The resulting graphs both capture the degree structure in the Facebook data, including the noteworthy dependence between in- and out- degrees. 
However, we also notice that the Facebook data still exhibits stronger concentration close to the y-axis, even after necessary pre-processing, compared to the two simulated datasets.
We speculate the potential existence of heterogeneous reciprocity levels for certain group of users,
e.g. different classes of users with different obsessive habits on social networks, which cannot be explained by the single parameter $\rho$.  
We leave the analysis of such heterogeneity as our future work.

\begin{figure}[h]
\centering
\includegraphics[width = \textwidth]{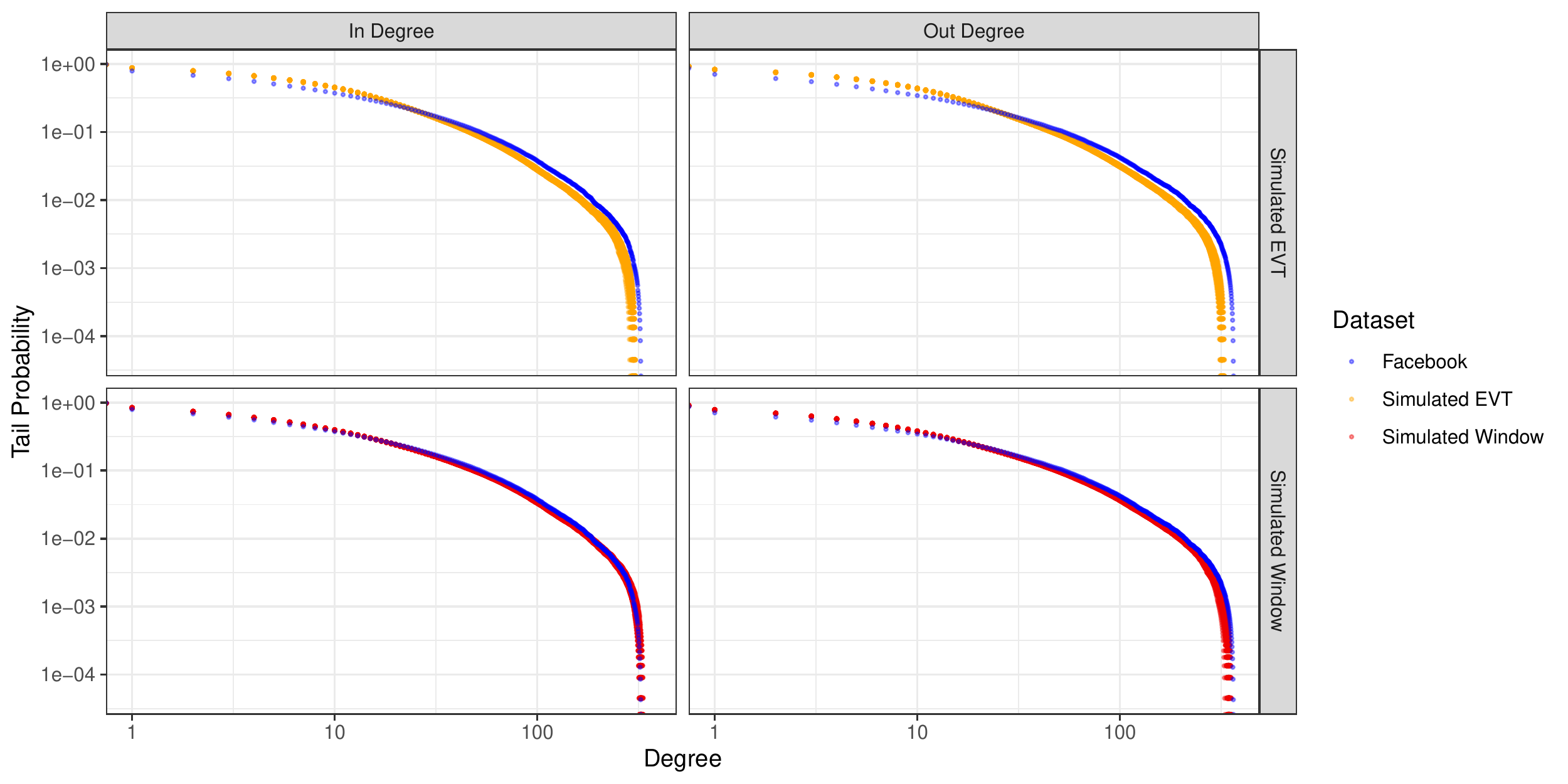}
\caption{The empirical marginal tail distributions for in- and out-degrees from the Facebook data (blue) and 50 PA-simulated datasets with parameters $\hat{\boldsymbol{\theta}}_{EVT}$ (yellow) and $\hat{\boldsymbol{\theta}}_{w_T^\star}$ (red). }
\label{fig:empiricaltail}
\end{figure}

In addition, for each set of estimated parameters $\hat{\boldsymbol{\theta}}_{EVT}$ and $\hat{\boldsymbol{\theta}}_{w^\star_T}$, we simulate 50 replications of the reciprocal PA model with the same seed graph, and examine the empirical marginal tail distributions for in- and out-degrees in Figure~\ref{fig:empiricaltail}. The blue dots represent the the tail distributions from the pre-processed Facebook data, and we see from Figure~\ref{fig:empiricaltail} that 
the window method provides a better fit in the marginal tails than the extreme-value method.
We also track the reciprocity coefficient for the Facebook graph along with 50 simulated datasets. The Facebook graph has a reciprocity coefficient of 0.711, while the simulated graphs with $\hat{\boldsymbol{\theta}}_{EVT}$ and $\hat{\boldsymbol{\theta}}_{w^\star_T}$ have average reciprocity coefficients of $0.440$ and $0.489$ with standard deviations of $0.00076$ and $0.00085$, respectively. 

We speculatively attribute the under-estimated reciprocity coefficient to the potential heterogeneous reciprocal behaviors for different classes of users. On one hand, 
users who always reply to wall posts may have higher personal
reciprocation levels than the estimated $\rho$ in our proposed model,
thus increasing the reciprocity coefficient for the entire graph. 
On the other hand, some users who always respond to their wall posts
may still log onto their Facebook account less frequently, so that
such reciprocal edges are not captured by the selected window, leading
to a lower estimated reciprocity coefficient.

Overall, despite the smaller reciprocity coefficients in the fitted
reciprocal PA models \sid{compared with the actual data},
the previously displayed simulated degree
distributions indicate that inclusion of the \sid{reciprocity feature in
  the model}  has \sid{enhanced}
the model's capability to accurately model \sid{evolution of in- and out-degree.}

\section{Concluding Remarks}\label{sec:rmk}

In this paper, we propose a PA model with reciprocity, which allows edge addition between two existing nodes. By embedding the in- and out-degree sequences into a sequence of multi-type branching processes with immigration,
we derive the convergence of the joint in- and out-degree counts, and study the asymptotic dependence between large in- and out-degrees. 
The allowance of adding edges between existing nodes leads to
challenges in the fitting of the proposed model. We propose two
different estimation approaches: (i) A window-based method which
makes use of the likelihood function, and (ii) an
extreme-value based method relying on the asymptotic dependence
structures between large in- and out-degrees.  
With full knowledge of the network evolution, the window method
provides more accurate parameter estimates, whereas the extreme-value
based method gives a \sid{solution} when the timestamp information is
coarse. 

When applied to the \sid{real} Facebook wall posts dataset,
both methods produce reasonable parameter estimates \sid{that}
capture the dependence between in- and out-degrees. 
However, \sid{the data analysis stimulates us }to speculate
about the existence of heterogeneous reciprocity levels among
different users.  
In future work, we plan to
assume a personalized $\rho$ parameter for each node in the network,
and discuss the theoretical properties as well as the model fitting. 

\bibliography{./bibfile_recip.bib}

\begin{appendix}
\section{Details on the Embedding Framework}\label{append:embed}
      
\sid{ Now we explain how to embed in- and out-degree sequences leading to
  Theorem \ref{thm:embed_MBI}. Assume $\{\bxi_{k,\delta} (\cdot),k\geq
  1\} $ are
  independent MBI processes with the same parameters but possibly
  different initializations.}
Suppose that $\bxi_{1,\delta}(0)=(1,1)$, and 
let $T_1$ be the first time when the $\bxi_{1,\delta}(\cdot)$ process jumps.
This can occur because $\xi^{(1)}_{1,\delta}(\cdot)$ jumps first, $\xi^{(2)}_{1,\delta}(\cdot)$ jumps first or because an immigration event occurs first. Therefore,
for $t\ge 0$, 
\begin{align*}
\PP\left(T_1>t\right)
&= \exp\left\{-t\left((\alpha+\beta)\bxi^{(1)}_{1,\delta}(0)+(\beta+\gamma)\bxi^{(2)}_{1,\delta}(0)
+(1+\beta)\delta\right)\right\}
=  e^{-t(1+\beta)(1+\delta)}.
\end{align*}
By Equations \eqref{eq:jump1} -- \eqref{eq:jump3}, we have
\begin{align*}
\PP\left(\bxi_{1,\delta}(T_1) = \bxi_{1,\delta}(0)+(1,0)\right)&=
(1-\rho)\frac{(\alpha+\beta)(\xi^{(1)}_{1,\delta}(0)+\delta)}{(1+\beta)(1+\delta)}
=(1-\rho)\frac{\alpha+\beta}{1+\beta},\\
\PP\left(\bxi_{1,\delta}(T_1) = \bxi_{1,\delta}(0)+(0,1)\right)&=
(1-\rho)\frac{(\beta+\gamma)(\xi^{(2)}_{1,\delta}(0)+\delta)}{(1+\beta)(1+\delta)}
= (1-\rho)\frac{\beta+\gamma}{1+\beta},\\
\PP\left(\bxi_{1,\delta}(T_1) = \bxi_{1,\delta}(0)+(1,1)\right)&=
\rho\left(\frac{(\alpha+\beta)\xi^{(1)}_{1,\delta}(0)}{(1+\beta)(1+\delta)}+
\frac{(\beta+\gamma)\xi^{(2)}_{1,\delta}(0)}{(1+\beta)(1+\delta)}
\right) + \rho \frac{(1+\beta)\delta}{(1+\beta)(1+\delta)}\\
&= \rho\left(\frac{(\alpha+\beta)(\xi^{(1)}_{1,\delta}(0)+\delta)}{(1+\beta)(1+\delta)}+
\frac{(\beta+\gamma)(\xi^{(2)}_{1,\delta}(0)+\delta)}{(1+\beta)(1+\delta)}\right)
=\rho.
\end{align*}
Additionally, if an immigration event of $(1,1)$ arrives at $T_1$, 
we assign it a type based on a coin flip. 
We label such a $(1,1)$ immigration event as type I with probability $(\alpha+\beta)/(1+\beta)$, and as type II with probability $1-(\alpha+\beta)/(1+\beta) = (\beta+\gamma)/(1+\beta)$.
Note also that at $T_1$, one of the following events happens: 
\begin{align*}
A_1(1)&:=\{\text{$\bxi_{1,\delta}(\cdot)$ is augmented by $(1,0)$ at $T_1$}\},\\
A_2(1)&:=\{\text{$\bxi_{1,\delta}(\cdot)$ is augmented by $(0,1)$ at $T_1$}\},\\
M_1(1) &:= \{\text{A type 1 particle in $\bxi_{1,\delta}(\cdot)$ splits into two type 1 particles and one type 2 particle at $T_1$}\}\\
&\quad\cup\{\text{A type I immigration event of $(1,1)$ arrives at $T_1$}\},\\
M_2(1) &:= \{\text{A type 2 particle in $\bxi_{1,\delta}(\cdot)$ splits into two type 2 particles and one type 1 particle at $T_1$}\}\\
&\quad\cup\{\text{A type II immigration event of $(1,1)$ arrives at $T_1$}\}.
\end{align*}
Here events $A_1(1), A_2(1)$ correspond to edge addition at step 1 of the network evolution, and $M_1(1), M_2(1)$ correspond to the creation of reciprocal/mutual edges at step 1.

Let $Z^{(1)}_1$ and $Z^{(2)}_1$ be two independent Bernoulli random variables, which are also independent from the MBI processes $\{\bxi_{k,\delta}(\cdot)\}$, and satisfy 
\begin{align}
\label{eq:prob_Z}
\PP(Z^{(1)}_1 = 1) &= \frac{\alpha(1+\beta)}{\alpha+\beta}=1-\PP(Z^{(1)}_1 = 0),\quad
\PP(Z^{(2)}_1 = 1) = \frac{\gamma(1+\beta)}{\beta+\gamma}=1-\PP(Z^{(2)}_1 = 0).
\end{align}
In order to decide whether a new MBI process needs to be initiated at $T_1$, we introduce a 
Bernoulli random variable $B_1$ such that
\[
B_1 :=  Z^{(1)}_1\left(\ind_{A_1(1)}+\ind_{M_1(1)}\right) + Z^{(2)}_1\left(\ind_{A_2(1)}+\ind_{M_2(1)}\right),
\]
which gives
\begin{align}
\PP&\left(B_1=1\middle|A_1(1)\right)
=\PP\left(B_1=1\middle|M_1(1)\right)
= \frac{\alpha(1+\beta)}{\alpha+\beta}\nonumber\\
& =1- \PP\left(B_1=0\middle| M_1(1)
\right)=1- \PP\left(B_1=0\middle| A_1(1)\right),\label{eq:B1_alpha}\\
\PP&\left(B_1=1\middle| A_2(1)\right)=\PP\left(B_1=1\middle|M_2(1)\right)
= \frac{\gamma(1+\beta)}{\beta+\gamma}\nonumber\\
&= 1- \PP\left(B_1=0\middle| M_2(1)\right)
 = 1- \PP\left(B_1=0\middle| A_2(1)\right).
\label{eq:B1_gamma}
\end{align}
Combining \eqref{eq:B1_alpha} and \eqref{eq:B1_gamma} shows that
$\PP(B_1=1)=1-\beta = 1-\PP(B_1=0)$. 
If $B_1=1$, we will initiate a new MBI process at $T_1$, corresponding to a new node being added.
Also, $\{B_1=0\}$ corresponds to the scenario where we add a self loop for node 1 at step 1, and no new node is created.

We now have the following cases to consider:
\begin{enumerate}
\item If event $A_1(1)$ happens at $T_1$ and $B_1 =1$, then 
we initiate the process, $\{\bxi_{2,\delta}(t-T_1):t\ge T_1\}$ with 
$\bxi_{2,\delta}(0)=(0,1)$.
\item If event $A_2(1)$ happens at $T_1$ and $B_1 =1$, then 
we initiate the process, $\{\bxi_{2,\delta}(t-T_1):t\ge T_1\}$ with 
$\bxi_{2,\delta}(0)=(1,0)$.
\item If either $M_1(1)$ or $M_2(1)$ happens at $T_1$ and $B_1 =1$, then 
we initiate the process, $\{\bxi_{2,\delta}(t-T_1):t\ge T_1\}$ with 
$\bxi_{2,\delta}(0)=(1,1)$.
\item If event $A_1(1)$ happens at $T_1$ and $B_1 =0$, then 
we do not count all following jumps of $\bxi_{1,\delta}(\cdot)$ until $\bxi_{1,\delta}(\cdot)$ is
increased by $(0,1)$.
\item If event $A_2(1)$ happens at $T_1$ and $B_1 =0$, then 
we do not count all following jumps of $\bxi_{1,\delta}(\cdot)$ until $\bxi_{1,\delta}(\cdot)$ is
increased by $(1,0)$.
\item If event $M_1(1)$ happens at $T_1$ and $B_1 =0$, then 
we do not count all following jumps of $\bxi_{1,\delta}(\cdot)$ until one type 2 particle in $\bxi_{1,\delta}(\cdot)$ splits into two type 2 particles and one type 1 particle, or a type II immigration event of $(1,1)$ arrives.
\item If event $M_2(1)$ happens at $T_1$ and $B_1 =0$, then 
we do not count all following jumps of $\bxi_{1,\delta}(\cdot)$ until one type 1 particle in $\bxi_{1,\delta}(\cdot)$ splits into two type 1 particles and one type 2 particle, or a type I immigration event of $(1,1)$ arrives.
\end{enumerate}
The first three cases show that when the new MBI process $\{\bxi_{2,\delta}(t-T_1):t\ge T_1\}$ is initiated, its initial value, $\bxi_{2,\delta}(0)$, satisfies
\begin{align*}
\EE\left(\bs^{\bxi_{2,\delta}(0)}\middle|B_1=1 \right) = \frac{\alpha(1-\rho)}{\alpha+\gamma}s_2 + \frac{\gamma(1-\rho)}{\alpha+\gamma}s_1 +\rho s_1s_2,\qquad \bs\in [0,1]^2.
\end{align*}
We set $\bxi_{2,\delta}(0)=(0,0)$ if $B_1=0$, then $\bxi_{2,\delta}(0)$ is a random vector with generating function
\begin{align}
\label{eq:pgfxi0}
\EE\left(\bs^{\bxi_{2,\delta}(0)}\right) = {\alpha(1-\rho)}s_2 + {\gamma(1-\rho)}s_1 
+(\alpha+\gamma)\rho s_1s_2 +\beta,\qquad \bs\in [0,1]^2.
\end{align}
Set $T_1^*=T_1$ when $B_1=1$.

When $B_1=0$, we
let $T^*_1$ be the first time after $T_1$ when a jump of the desired type occurs.
For case 4, we see that 
\begin{align*}
&\PP\left(A_1(1), B_1=0,\bxi_{1,\delta}(T^*_1) = \bxi_{1,\delta}(0)+(1,0) + (0,1)\right)\\
&= \frac{(1-\alpha)\beta}{\alpha+\beta}(1-\rho)\frac{(\alpha+\beta)(\xi^{(1)}_{1,\delta}(0)+\delta)}{(1+\beta)(1+\delta)}\\
&\qquad\times\sum_{k=0}^\infty\left(1-(1-\rho)\frac{(\beta+\gamma)(\xi^{(2)}_{1,\delta}(0)+\delta)}{(1+\beta)(1+\delta)}\right)^k (1-\rho)\frac{(\beta+\gamma)(\xi^{(2)}_{1,\delta}(0)+\delta)}{(1+\beta)(1+\delta)}\\
&= \frac{(1-\alpha)\beta}{1+\beta}(1-\rho).
\end{align*}
Similarly, we have for case 5,
\begin{align*}
&\PP\left(A_2(1), B_1=0,\bxi_{1,\delta}(T^*_1) = \bxi_{1,\delta}(0)+(0,1) + (1,0)\right)\\
&= \frac{(1-\gamma)\beta}{\beta+\gamma}(1-\rho)\frac{(\beta+\gamma)(\xi^{(2)}_{1,\delta}(0)+\delta)}{(1+\beta)(1+\delta)}\\
&\qquad\times\sum_{k=0}^\infty\left(1-(1-\rho)\frac{(\alpha+\beta)(\xi^{(1)}_{1,\delta}(0)+\delta)}{(1+\beta)(1+\delta)}\right)^k (1-\rho)\frac{(\alpha+\beta)(\xi^{(1)}_{1,\delta}(0)+\delta)}{(1+\beta)(1+\delta)}\\
&= \frac{(1-\gamma)\beta}{1+\beta}(1-\rho).
\end{align*}
Hence, combining cases 4 and 5 gives
\[
\PP\left(B_1=0,\bxi_{1,\delta}(T^*_1) = \bxi_{1,\delta}(0)+(0,1) + (1,0)\right)
=\beta(1-\rho),
\]
which corresponds to adding a self loop for node 1 at step 1 under the $\beta$-scenario with no reciprocal edge.
Following a similar reasoning, we have for cases 6 and 7 that
\begin{align*}
&\PP\left(M_1(1), B_1=0,\bxi_{1,\delta}(T^*_1) = \bxi_{1,\delta}(0)+(1,1) + (1,1)\right)\\
&= \frac{(1-\alpha)\beta}{\alpha+\beta}\rho\frac{(\alpha+\beta)(\xi^{(1)}_{1,\delta}(0)+\delta)}{(1+\beta)(1+\delta)}\sum_{k=0}^\infty\left(1-\rho\frac{(\beta+\gamma)(\xi^{(2)}_{1,\delta}(0)+\delta)}{(1+\beta)(1+\delta)}\right)^k \rho\frac{(\beta+\gamma)(\xi^{(2)}_{1,\delta}(0)+\delta)}{(1+\beta)(1+\delta)}\\
&= \frac{(1-\alpha)\beta}{1+\beta}\rho,
\end{align*} 
and 
\begin{align*}
&\PP\left(M_2(1), B_1=0,\bxi_{1,\delta}(T^*_1) = \bxi_{1,\delta}(0)+(1,1) + (1,1)\right)\\
&=\frac{(1-\gamma)\beta}{\beta+\gamma}\rho\frac{(\beta+\gamma)(\xi^{(2)}_{1,\delta}(0)+\delta)}{(1+\beta)(1+\delta)}\sum_{k=0}^\infty\left(1-\rho\frac{(\alpha+\beta)(\xi^{(1)}_{1,\delta}(0)+\delta)}{(1+\beta)(1+\delta)}\right)^k \rho\frac{(\alpha+\beta)(\xi^{(1)}_{1,\delta}(0)+\delta)}{(1+\beta)(1+\delta)}\\
&= \frac{(1-\gamma)\beta}{1+\beta}\rho.
\end{align*}
Therefore, combining cases 6 and 7 gives
\[
\PP\left(B_1=0,\bxi_{1,\delta}(T^*_1) = \bxi_{1,\delta}(0)+(1,1) + (1,1)\right)
=\beta\rho,
\]
which corresponds to adding two self loops for node 1 at step 1, under the $\beta$-scenario with a reciprocal edge created.

When $B_1=0$, let $T_1'$ denote the last time before $T_1^*$ when a jump with an undesired type occurs, then for $\Delta T_1:= (1-B_1)(T_1^*-T_1')$, we have
\[
\PP(\Delta T_1>t|B_1=0) = \left(\frac{(\beta+\gamma)}{1+\beta}(1-\rho)+\frac{(\alpha+\beta)}{1+\beta}(1-\rho)+\rho\right)e^{-(1+\beta)(1+\delta)t}
= e^{-(1+\beta)(1+\delta)t}.
\]
Since we do not count all jumps until the desired type of jump appears, then up to time $T_1^*$, the effective amount of evolution time for $\bxi_{1,\delta}$ is $T_1+ \Delta T_1$, and
\[
\EE(T_1+ \Delta T_1) = \frac{1+\beta}{(1+\beta)(1+\delta)} = \frac{1}{1+\delta}.
\]
Define $R_1:= \ind_{\{\bxi_{1,\delta}(T_1)=\bxi_{1,\delta}(0)+(1,1)\}}$, then
by cases 1--7, we see that
 $\PP(R_1=1)=\rho=1-\PP(R_1=0)$.
With $T^*_0=0$, we also set
$$
\mathcal{F}_{T^*_1}:= \sigma\left(B_1,R_1; \left\{\bxi_{k,\delta}(t-T^*_{k-1}):t\in [T^*_{k-1},T^*_1]\right\}_{k=1,1+B_1}\right).
$$

Set $S_1=0$, and $S_k:= \min\left\{s\ge 1: 1+\sum_{l=1}^s B_l = k\right\}$, $k\ge 2$,
then $T^*_{S_k}$
denotes the birth time of the $k$-th MBI process.
In general, for $n\ge 1$, suppose that we have initiated $N(n):=1+\sum_{k=1}^n B_k$ MBI processes at time $T^*_n$, i.e.
\begin{align}\label{eq:MBIs}
\{\bxi_{k,\delta}(t-T^*_{S_k}): t\ge T^*_{S_k}\}_{1\le k\le N(n)}.
\end{align}
Let $T_{n+1}$ be the first time after $T_n^*$ when one of the processes in \eqref{eq:MBIs} jumps,
and 
$$R_{n+1}:= \ind_{\left\{\bxi_{k,\delta}(T_{n+1}-T^*_{S_k})=\bxi_{k,\delta}(T^*_{n}-T^*_{S_k})+(1,1),\,\text{for some } 1\le k\le N(n)\right\}}.$$
Define the $\sigma$-algebra:
\begin{align*}
\mathcal{F}_{T^*_n}&:= \sigma\left(\{B_l,R_l: l=1,\ldots,n\}; \left\{\bxi_{k,\delta}(t-T^*_{k-1}):t\in [T^*_{k-1},T^*_n]\right\}_{k=1,\ldots, N(n)}\right),
\end{align*}
and we have for $t\ge 0$,
\begin{align*}
\PP^{\mathcal{F}_{T_n^*}}(T_{n+1}-T_n^*>t) &= 
e^{-t\left[(\alpha+\beta)\sum_{k=1}^{N(n)}\xi^{(1)}_{k,\delta}(T_n^*-T_{S_k}^*)
+(\beta+\gamma)\sum_{k=1}^{N(n)}\xi^{(2)}_{k,\delta}(T_n^*-T_{S_k}^*) + (1+\beta)\delta N(n)\right]}\\
&= e^{-t(1+\beta)\left(n+1+\sum_{k=1}^n R_k+\delta N(n)\right)}.
\end{align*}

At $T_{n+1}$, one of the following events takes place:
\begin{align*}
A_1(n+1)&:=\{\text{One of $\{\bxi_{k,\delta}(\cdot): 1\le k\le N(n)\}$ is augmented by $(1,0)$ at $T_{n+1}$}\},\\
A_2(n+1)&:=\{\text{One of $\{\bxi_{k,\delta}(\cdot): 1\le k\le N(n)\}$ is augmented by $(0,1)$ at $T_{n+1}$}\},\\
M_1(n+1) 
&:= \{\text{A type 1 particle in $\{\bxi_{k,\delta}(\cdot): 1\le k\le N(n)\}$ splits into}\\
&\qquad\qquad \text{two type 1 particles and one type 2 particle at $T_{n+1}$}\}\\
&\quad\cup\{\text{A type I immigration event of $(1,1)$ arrives at $T_{n+1}$}\},\\
M_2(n+1) &:= \{\text{A type 2 particle in $\{\bxi_{k,\delta}(\cdot): 1\le k\le N(n)\}$ splits into}\\
&\qquad\qquad \text{two type 2 particles and one type 1 particle at $T_{n+1}$}\}\\
&\quad\cup\{\text{A type II immigration event of $(1,1)$ arrives at $T_{n+1}$}\}.
\end{align*}
Here events $A_1(n+1), A_2(n+1)$ correspond to edge addition at step $n+1$ of the network evolution, and $M_1(n+1), M_2(n+1)$ correspond to the creation of reciprocal/mutual edges at step $n+1$.
Let $\{Z^{(1)}_k:k\ge 1\}$ and $\{Z^{(2)}_k:k\ge 1\}$ be two independent sequences of iid Bernoulli random variables, which are also independent from the MBI processes $\{\bxi_{k,\delta}(\cdot)\}$, and satisfy \eqref{eq:prob_Z}.
Define 
\[
B_{n+1} = Z^{(1)}_{n+1}\left(\ind_{A_1(n+1)}+\ind_{M_1(n+1)}\right)
+Z^{(2)}_{n+1}\left(\ind_{A_2(n+1)}+\ind_{M_2(n+1)}\right),
\]
and we have
\begin{align}
\PP^{\mathcal{F}_{T_n^*}}&\left(B_{n+1}=1\middle|A_1(n+1)\right)
=\PP^{\mathcal{F}_{T_n^*}}\left(B_{n+1}=1\middle|M_1(n+1)\right)
= \frac{\alpha(1+\beta)}{\alpha+\beta} \nonumber\\
&=1- \PP^{\mathcal{F}_{T_n^*}}\left(B_{n+1}=0\middle| M_1(n+1)
\right)
=1- \PP^{\mathcal{F}_{T_n^*}}\left(B_{n+1}=0\middle| A_1(n+1)\right),\label{eq:Bn_alpha}\\
\PP^{\mathcal{F}_{T_n^*}}&\left(B_{n+1}=1\middle| A_2(n+1)\right)=\PP^{\mathcal{F}_{T_n^*}}\left(B_{n+1}=1\middle|M_2(n+1)\right)
= \frac{\gamma(1+\beta)}{\beta+\gamma}\nonumber\\
&= 1- \PP^{\mathcal{F}_{T_n^*}}\left(B_{n+1}=0\middle| M_2(n+1)\right)
 = 1- \PP^{\mathcal{F}_{T_n^*}}\left(B_{n+1}=0\middle| A_2(n+1)\right).
\label{eq:Bn_gamma}
\end{align}
Since for $n\ge 1$, we have
\begin{align*}
\PP^{\mathcal{F}_{T_n^*}}(A_1(n+1))&=(\alpha+\beta)(1-\rho)/(1+\beta),
\qquad \PP^{\mathcal{F}_{T_n^*}}(A_2(n+1))=(\beta+\gamma)(1-\rho)/(1+\beta),\\
\PP^{\mathcal{F}_{T_n^*}}(M_1(n+1))&=(\alpha+\beta)\rho/(1+\beta),
\qquad \PP^{\mathcal{F}_{T_n^*}}(M_2(n+1))=(\beta+\gamma)\rho/(1+\beta),
\end{align*}
then it follows from \eqref{eq:Bn_alpha} and \eqref{eq:Bn_gamma} that
$\PP(B_{n+1}=1)= 1-\beta = 1-\PP(B_{n+1}=0)$. Also, \eqref{eq:Bn_alpha} and \eqref{eq:Bn_gamma} show that $B_{n+1}$ is independent from $\mathcal{F}_{T_n^*}$, i.e. for $b_k\in \{0,1\}$, $1\le k\le n+1$,
\begin{align}
\PP\left(B_k=b_k, k=1,\ldots, n+1\right) &= \EE\left(\ind_{\{B_{n+1}=b_{n+1}\}}\PP^{\mathcal{F}_{T_n^*}}\left(B_k=b_k, k=1,\ldots, n\right)\right)\nonumber\\
&=\cdots = \prod_{k=1}^{n+1} \PP\left(B_k=b_k\right).\label{eq:prob_Bk}
\end{align}
 If $B_{n+1}=1$, we will initiate a new MBI process at $T_{n+1}$, and
$\{B_{n+1}=0\}$ corresponds to the $\beta$-scenario, where no new node is created at step $n+1$.

Similar to the $n=1$ case, we consider the following scenarios:
\begin{enumerate}
\item If event $A_1(n+1)$ happens at $T_{n+1}$ and $B_{n+1} =1$, then 
we initiate the process, $\{\bxi_{N(n)+1,\delta}(t-T_{n+1}):t\ge T_{n+1}\}$ with 
$\bxi_{N(n)+1,\delta}(0)=(0,1)$.
\item If event $A_2(n+1)$ happens at $T_{n+1}$ and $B_{n+1} =1$, then 
we initiate the process, $\{\bxi_{N(n)+1,\delta}(t-T_{n+1}):t\ge T_{n+1}\}$ with 
$\bxi_{N(n)+1,\delta}(0)=(1,0)$.
\item If either $M_1(n+1)$ or $M_2(n+1)$ happens at $T_{n+1}$ and $B_{n+1} =1$, then 
we initiate the process, $\{\bxi_{N(n)+1,\delta}(t-T_{n+1}):t\ge T_{n+1}\}$ with 
$\bxi_{N(n)+1,\delta}(0)=(1,1)$.
\item If event $A_1(n+1)$ happens at $T_{n+1}$ and $B_{n+1} =0$, then 
we do not count all following jumps of the MBI processes in \eqref{eq:MBIs} until one of them is
increased by $(0,1)$.
\item If event $A_2(n+1)$ happens at $T_{n+1}$ and $B_{n+1} =0$, then 
we do not count all following jumps of the MBI processes in \eqref{eq:MBIs} until one of them is
increased by $(1,0)$.
\item If event $M_1(n+1)$ happens at $T_{n+1}$ and $B_{n+1} =0$, then 
we do not count all following jumps of the MBI processes in \eqref{eq:MBIs} until one type 2 particle splits into two type 2 particles and one type 1 particle, or a $(1,1)$ immigration event of type II arrives.
\item If event $M_2(n+1)$ happens at $T_{n+1}$ and $B_{n+1} =0$, then 
we do not count all following jumps of the MBI processes in \eqref{eq:MBIs} until one type 1 particle splits into two type 1 particles and one type 2 particle, or a $(1,1)$ immigration event of type I arrives.
\end{enumerate}
Set $T_{n+1}^* = T_{n+1}$ if $B_{n+1}=1$.
Combining cases 1--3 gives that when the new MBI process $\{\bxi_{N(n)+1,\delta}(t-T^*_{n+1}):t\ge T^*_{n+1}\}$ is initiated, its initial value satisfies
\[
\EE\left(\bs^{\bxi_{N(n)+1,\delta}(0)}\middle|B_{n+1}=1\right) = \frac{\alpha(1-\rho)}{\alpha+\gamma}s_2 + \frac{\gamma(1-\rho)}{\alpha+\gamma}s_1 +\rho s_1s_2,\qquad \bs\in [0,1]^2.
\]
We set $\bxi_{N(n)+1,\delta}(0) = (0,0)$ if $B_{n+1}=0$, then
$\bxi_{N(n)+1,\delta}(0)$ is a random vector with generating function
\[
\EE\left(\bs^{\bxi_{N(n)+1,\delta}(0)}\right) = {\alpha(1-\rho)}s_2 + {\gamma(1-\rho)}s_1 
+(\alpha+\gamma)\rho s_1s_2 + \beta,\qquad \bs\in [0,1]^2.
\]

When $B_{n+1}=0$, we
let $T_{n+1}^*$ be the first time after $T_{n+1}$ when a jump of the desired type occurs among the MBI processes in \eqref{eq:MBIs}.
In addition, when $B_{n+1}=0$, let $T_{n+1}'$ be the last time before $T_{n+1}^*$ when a jump with an undesired type happens. Then for $\Delta T_{n+1}:= (1-B_{n+1})(T_{n+1}^*-T'_{n+1})$, we have
\[
\PP^{\mathcal{F}_{T_n^*}}\left(\Delta T_{n+1}>t\middle| B_{n+1}=0\right)
= \exp\left\{-t(1+\beta)(n+1+\delta N(n))\right\},
\]
Since we do not count all jumps until the desired type of jump appears, then during the time interval $(T_n^*, T_{n+1}^*]$, the effective amount of evolution time for the MBI processes in \eqref{eq:MBIs} is $T_{n+1}-T_n^*+ \Delta T_{n+1}$, and
\begin{align}
\label{eq:cond_expT}
\EE^{\mathcal{F}_{T_n^*}} (T_{n+1}-T_n^*+ \Delta T_{n+1}) = \frac{1}{n+1+\sum_{k=1}^n R_k+\delta N(n)}.
\end{align}

Write 
\[
{\bxi}^*_\delta(T^*_n):=\left(\bxi_{1,\delta}(T^*_n),\bxi_{2,\delta}(T^*_n-T^*_{S_2}),\ldots, \bxi_{N(n),\delta}(0),
(0,0),\ldots\right), \qquad n\ge 0,
\]
then the embedding framework described above shows that
$\{{\bxi}^*_\delta(T^*_n): n\ge 0\}$ is Markovian on $\left(\mathbb{N}^2\right)^\infty$.
In the following theorem, we embed the evolution of in- and out-degree processes into the MBI  framework, and the embedding results later play a key role in the derivation of asymptotic results in Theorems~\ref{thm:limitNij} and \ref{thm:MRV}.
\begin{Theorem}\label{thm:embed_MBI}
In $\left(\mathbb{N}^2\right)^\infty$, define the in- and out-degree sequences as
\[
\bD(n) := \left(\bigl(\Din_1(n),\Dout_1(n)\bigr),\ldots, \bigl(\Din_{|V(n)|}(n),\Dout_{|V(n)|}(n)\bigr),
(0,0),\ldots\right).
\]
Then for $\{T^*_k: k\ge 0\}$ and $\{\bxi_{k,\delta}(t-T^*_{S_k}): t\ge T^*_{S_k}\}_{k\ge 1}$ constructed above,
we have that in $\left(\left(\mathbb{N}^2\right)^\infty\right)^\infty$,
\begin{align*}
\bigl\{\bD(n): n\ge 0\bigr\} \stackrel{d}{=} \left\{{\bxi}^*_\delta(T^*_n): n\ge 0\right\}.
\end{align*}
\end{Theorem}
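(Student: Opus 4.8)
The plan is to establish the distributional identity by showing that both sides are Markov chains on $(\mathbb{N}^2)^\infty$ with the same initial distribution and the same one-step transition kernel; since the finite-dimensional laws of a Markov chain are determined by these two ingredients, this suffices. That $\{\bD(n):n\ge0\}$ is Markov is immediate from the description in Section~\ref{sec:model}: the rule producing $G(n+1)$ from $G(n)$ depends on $G(n)$ only through the current degree vector. That $\{\bxi^*_\delta(T^*_n):n\ge0\}$ is Markov on $(\mathbb{N}^2)^\infty$ was recorded just before the theorem. The initial laws agree because $G(0)$ is a single node with a self-loop, so $\bD(0)=((1,1),(0,0),\dots)$, while $\bxi^*_\delta(T^*_0)=(\bxi_{1,\delta}(0),(0,0),\dots)=((1,1),(0,0),\dots)$ by construction. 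Throughout I identify network node $k$ with the $k$-th MBI process $\bxi_{k,\delta}$, which is legitimate since labels are assigned in order of creation on both sides.

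The substance is the transition-kernel match, which I would prove by conditioning on $\mathcal{F}_{T^*_n}$ and enumerating the event occurring at $T_{n+1}$. The next jump among the processes in \eqref{eq:MBIs} is of type $A_1(n+1)$, $A_2(n+1)$, $M_1(n+1)$ or $M_2(n+1)$ and affects a particular process; the per-process jump probabilities \eqref{eq:jump1}--\eqref{eq:jump3}, together with the exponential-competition choice of the jumping process and the bookkeeping identities $|E(n)|=n+1+\sum_{l\le n}R_l$ and $|V(n)|=N(n)$, give that, e.g., an $A_1$ event affecting process $w$ occurs with probability $(\alpha+\beta)(1-\rho)(\Din_w(n)+\delta)$ over $(1+\beta)(|E(n)|+\delta|V(n)|)$, and an $M_1$ event affecting process $w$ with probability $\rho(\alpha+\beta)(\Din_w(n)+\delta)$ over $(1+\beta)(|E(n)|+\delta|V(n)|)$, with the $A_2,M_2$ analogues featuring $\Dout$ and $(\beta+\gamma)$. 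On $\{B_{n+1}=1\}$, combining these with \eqref{eq:Bn_alpha}--\eqref{eq:Bn_gamma}: an $A_1$ event yields total probability $\alpha(1-\rho)(\Din_w(n)+\delta)/(|E(n)|+\delta|V(n)|)$ and starts a new process at $(0,1)$, precisely the $\alpha$-scenario in which node $w$ is the attachment target and no reciprocal edge is added (newborn out-degree $1$); an $M_1$ (resp. $M_2$) event yields $\alpha\rho(\Din_w(n)+\delta)/(|E(n)|+\delta|V(n)|)$ (resp. $\gamma\rho(\Dout_w(n)+\delta)/(|E(n)|+\delta|V(n)|)$) and starts a new process at $(1,1)$, matching the $\alpha$- (resp. $\gamma$-) scenario with a reciprocal edge; the $Z^{(1)},Z^{(2)}$ coin flips in \eqref{eq:prob_Z} and the type-I/type-II labelling of $(1,1)$ immigrants are exactly what makes the newborn's offspring law the mixture in \eqref{eq:pgfxi0} and routes the two flavours of mutual event to the correct source scenario.

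The remaining case $\{B_{n+1}=0\}$ must reproduce the $\beta$-scenario, and I expect this to be the main obstacle, because the $\beta$-scenario attachment law is the product $(\Din_w(n)+\delta)(\Dout_v(n)+\delta)$ over $(|E(n)|+\delta|V(n)|)^2$, which is not the law of any single jump. The resolution, following the appendix computations (cases 4--7), is that a given $\beta$-edge $(v,w)$ is produced in two orderings: ``$A_1$ to $w$, then $B_{n+1}=0$, then the first subsequently counted jump is an $A_2$ to $v$'' and ``$A_2$ to $v$, then $B_{n+1}=0$, then the first counted jump is an $A_1$ to $w$'' (and likewise with $M_1,M_2$ in place of $A_1,A_2$ for a reciprocated $\beta$-edge). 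I would (i) use the geometric-series resummation over the skipped jumps to compute each ordering's probability, the key point being that a $(1,0)$ increment leaves every second coordinate unchanged (and dually), so the conditional law of which process receives the awaited jump is proportional to the correct degree-plus-$\delta$ weight and independent of the first node's choice; (ii) add the two orderings and check that the factors $(\beta+\gamma)$ and $(\alpha+\beta)$ coming from $\PP(B_{n+1}=0\mid\cdot)$ in \eqref{eq:Bn_alpha}--\eqref{eq:Bn_gamma} sum to $1+\beta$, cancelling the $1/(1+\beta)$ and leaving exactly $\beta(1-\rho)$ (resp. $\beta\rho$) times the product form; and (iii) confirm that allowing the second node to coincide with the first reproduces the $\beta$-scenario self-loop probabilities. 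Assembling the $B_{n+1}=1$ and $B_{n+1}=0$ cases shows the one-step kernels coincide on every reachable configuration, which together with the Markov reduction completes the proof.
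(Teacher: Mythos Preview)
Your proposal is correct and follows essentially the same route as the paper's proof: reduce to matching one-step transition kernels of two Markov chains on $(\mathbb{N}^2)^\infty$, handle the $B_{n+1}=1$ branch by combining the per-process jump probabilities \eqref{eq:jump1}--\eqref{eq:jump3} with the Bernoulli coin flips \eqref{eq:Bn_alpha}--\eqref{eq:Bn_gamma}, and for $B_{n+1}=0$ sum the two orderings (cases~4/5 and~6/7) via the geometric-series resummation, using $(1-\alpha)+(1-\gamma)=1+\beta$ to recover the $\beta$-scenario product form. One small caveat: in the $B_{n+1}=0$ analysis the skipped jumps are not only $(1,0)$ increments but also $(1,1)$ increments, so your stated ``key point'' does not by itself justify constant terms in the geometric series---but the paper's computation proceeds identically without elaborating this point either, so your outline matches it.
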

\begin{proof}



By the model description in Section~\ref{sec:model}, $\{\bD(n):n\ge 0\}$ is Markovian on $\left(\mathbb{N}^2\right)^\infty$, so
it suffices to check the transition probability from $\bD(n)$ to $\bD(n+1)$ agrees with that from 
${\bxi}^*_\delta(T^*_n)$ to ${\bxi}^*_\delta(T^*_{n+1})$.
Write
\begin{align*}
\bolde_w^{(1)} &:= \left(\bigl(0,0\bigr),\ldots,\bigl(0,0\bigr),\underbrace{\bigl(1,0\bigr)}_{\text{$w$-th entry}},\bigl(0,0\bigr),\ldots\right),\\
\bolde_w^{(2)} &:= \left(\bigl(0,0\bigr),\ldots,\bigl(0,0\bigr),\underbrace{\bigl(0,1\bigr)}_{\text{$w$-th entry}},\bigl(0,0\bigr),\ldots\right),\\
\bolde_w^{(3)} &:= \left(\bigl(0,0\bigr),\ldots,\bigl(0,0\bigr),\underbrace{\bigl(1,1\bigr)}_{\text{$w$-th entry}},\bigl(0,0\bigr),\ldots\right),
\end{align*}
and let $\mathcal{G}_n$ denote the $\sigma$-field generated by the history of the network up to $n$ steps.
Then we have
\begin{align}
\PP^{\mathcal{G}_n}\left(\bD(n+1)=\bD(n)+\bolde_w^{(1)} + \bolde_{|V(n)|+1}^{(2)}\right)
&= \frac{\alpha(1-\rho)(\Din_w(n)+\delta)}{|E(n)|+\delta |V(n)|},\label{eq:PAtrans1}\\
\PP^{\mathcal{G}_n}\left(\bD(n+1)=\bD(n)+\bolde_w^{(2)}+ \bolde_{|V(n)|+1}^{(1)}\right)
&= \frac{\gamma(1-\rho)(\Dout_w(n)+\delta)}{|E(n)|+\delta |V(n)|},\label{eq:PAtrans2}\\
\PP^{\mathcal{G}_n}\left(\bD(n+1)=\bD(n)+\bolde_w^{(3)}++ \bolde_{|V(n)|+1}^{(3)}\right)
&= \rho\frac{\alpha(\Din_w(n)+\delta)+\gamma(\Dout_w(n)+\delta)}{|E(n)|+\delta |V(n)|},
\label{eq:PAtrans3}\\
\intertext{and under $\PP^{\mathcal{G}_n}$, for $i,j\in V(n)$,}
\PP^{\mathcal{G}_n}\left(\bD(n+1)=\bD(n)+\bolde_i^{(1)} + \bolde_{j}^{(2)}\right)
&= \beta(1-\rho)\frac{\Din_i(n)+\delta}{|E(n)|+\delta |V(n)|}\frac{\Dout_j(n)+\delta}{|E(n)|+\delta |V(n)|},\label{eq:PAtrans4}\\
\PP^{\mathcal{G}_n}\left(\bD(n+1)=\bD(n)+\bolde_i^{(3)}+ \bolde^{(3)}_{j}\right)
&= \beta\rho\frac{\Din_i(n)+\delta}{|E(n)|+\delta |V(n)|}\frac{\Dout_j(n)+\delta}{|E(n)|+\delta |V(n)|},\label{eq:PAtrans5}
\end{align}
where $|E(n)|-(n+1)$ follows a binomial distribution with size $n$ and success probability $\rho$,
and $|V(n)|-1$ follows a binomial distribution with size $n$ and success probability $1-\beta$.

By \eqref{eq:prob_Bk}, we see that $N(n)$ has the same distribution as $|V(n)|$.
Also, the definition of $\{R_k:k\ge 1\}$ gives that for $r_k\in \{0,1\}$, $1\le k\le n$,
\begin{align*}
\PP\left(R_k=r_k, k=1,\ldots, n\right) &= \EE\left(\ind_{\{R_{n}=r_{n}\}}\PP^{\mathcal{F}_{T_{n-1}^*}}\left(R_k=r_k, k=1,\ldots, n-1\right)\right)\\
&=\cdots = \prod_{k=1}^{n} \PP\left(R_k=r_k\right),
\end{align*}
and $\PP(R_{n}=1) = \EE(\PP^{\mathcal{F}_{T^*_{n-1}}}(R_{n}=1)) = \rho = 1-\PP(R_{n}=0)$.
Therefore, $\{R_k: 1\le k\le n\}$ are iid Bernoulli random variables with $\PP(R_1=1)=\rho$, and
$\sum_{k=1}^n R_k$ has the same distribution as $|E(n)|-(n+1)$.
Furthermore, since for $n\ge 1$,
\begin{align*}
\PP^{\mathcal{F}_{T^*_{n-1}}}\left(B_n=1, R_n=1\right)
&= \PP\left(M_1(n)\cup \{Z^{(1)}_n=1\}\right) + \PP\left(M_2(n)\cup \{Z^{(2)}_n=1\}\right) \\
&= (1-\beta)\rho = \PP^{\mathcal{F}_{T^*_{n-1}}}\left(B_n=1\right)\PP^{\mathcal{F}_{T^*_{n-1}}}\left(R_n=1\right),
\end{align*}
then $B_n$ and $R_n$ are independent from each other under both $\PP$ and $\PP^{\mathcal{F}_{T^*_{n-1}}}$.

To check the agreement between transition probabilities, we consider the seven scenarios listed in the embedding framework. 
For case 1, conditioning on $\mathcal{F}_{T_n^*}$, we have for $1\le w\le N(n)$ that
\begin{align*}
&\PP^{\mathcal{F}_{T_n^*}}\left(\bxi^*_\delta(T^*_{n+1})=\bxi^*_\delta(T^*_{n}) + \bolde_w^{(1)} + \bolde_{N(n)+1}^{(2)}\right)\\
&= \PP^{\mathcal{F}_{T_n^*}}\left(A_1(n+1), B_{n+1}=1,\bxi_{w,\delta}(T^*_{n+1}-T^*_{S_w}) = \bxi_{w,\delta}(T^*_{n}-T^*_{S_w})+(1,0)\right)\\
&= \frac{\alpha(1+\beta)}{\alpha+\beta}(1-\rho)\frac{(\alpha+\beta)(\xi^{(1)}_{w,\delta}(T_n^*-T^*_{S_w})+\delta)}{(1+\beta)(n+1+\sum_{k=1}^n R_k+\delta N(n))}\\
&= \frac{\alpha(1-\rho)(\xi^{(1)}_{w,\delta}(T_n^*-T^*_{S_w})+\delta)}{n+1+\sum_{k=1}^n R_k+\delta N(n)},
\end{align*}
which agrees with the transition probability in \eqref{eq:PAtrans1}.
Similarly, for case 2, we have
\begin{align*}
&\PP^{\mathcal{F}_{T_n^*}}\left(\bxi^*_\delta(T^*_{n+1})=\bxi^*_\delta(T^*_{n}) + \bolde_w^{(2)} + \bolde_{N(n)+1}^{(1)}\right)\\
&= \PP^{\mathcal{F}_{T_n^*}}\left(A_2(n+1), B_{n+1}=1,\bxi_{w,\delta}(T^*_{n+1}-T^*_{S_w}) = \bxi_{w,\delta}(T^*_{n}-T^*_{S_w})+(0,1)\right)\\
&= \frac{\gamma(1-\rho)(\xi^{(2)}_{w,\delta}(T_n^*-T^*_{S_w})+\delta)}{n+1+\sum_{k=1}^n R_k+\delta N(n)},
\end{align*}
thus giving the agreement with \eqref{eq:PAtrans2}. In case 3, we consider the occurrence of either $M_1(n+1)$ or $M_2(n+1)$, which leads to 
\begin{align*}
&\PP^{\mathcal{F}_{T_n^*}}\left(\bxi^*_\delta(T^*_{n+1})=\bxi^*_\delta(T^*_{n}) + \bolde_w^{(3)} + \bolde_{N(n)+1}^{(3)}\right)\\
&= \PP^{\mathcal{F}_{T_n^*}}\left(M_1(n+1), B_{n+1}=1,\bxi_{w,\delta}(T^*_{n+1}-T^*_{S_w}) = \bxi_{w,\delta}(T^*_{n}-T^*_{S_w})+(1,1)\right)\\
&\qquad + \PP^{\mathcal{F}_{T_n^*}}\left(M_2(n+1), B_{n+1}=1,\bxi_{w,\delta}(T^*_{n+1}-T^*_{S_w}) = \bxi_{w,\delta}(T^*_{n}-T^*_{S_w})+(1,1)\right)\\
&= \rho\frac{\alpha(\xi^{(1)}_{w,\delta}(T_n^*-T^*_{S_w})+\delta)+\gamma(\xi^{(2)}_{w,\delta}(T_n^*-T^*_{S_w})+\delta)}{n+1+\sum_{k=1}^n R_k+\delta N(n)},
\end{align*}
which agrees with \eqref{eq:PAtrans3}.

In case 4, we see that for $i\neq j$, 
\begin{align*}
&\PP^{\mathcal{F}_{T_n^*}}\left(A_1(n+1), B_{n+1}=0,\bxi_{i,\delta}(T^*_{n+1}-T^*_{S_i}) = \bxi_{i,\delta}(T^*_{n}-T^*_{S_i})+(1,0),\,\right.\\
&\left.\qquad\qquad
 \bxi_{j,\delta}(T^*_{n+1}-T^*_{S_j}) = \bxi_{j,\delta}(T^*_{n}-T^*_{S_j})+(0,1) \right)\\
&= \frac{(1-\alpha)\beta}{\alpha+\beta}(1-\rho)\frac{(\alpha+\beta)(\xi^{(1)}_{i,\delta}(T_n^*-T^*_{S_i})+\delta)}{(1+\beta)(n+1+\sum_{k=1}^n R_k+\delta N(n))}\\
&\times\sum_{k=0}^\infty\left(1-(1-\rho)\frac{(\beta+\gamma)\sum_{k=1}^{N(n)}(\xi^{(2)}_{k,\delta}(T^*_n-T^*_{S_k})+\delta)}{(1+\beta)(n+1+\sum_{k=1}^n R_k+\delta N(n))}\right)^k \frac{(1-\rho)(\beta+\gamma)(\xi^{(2)}_{j,\delta}(T^*_n-T^*_{S_j})+\delta)}{(1+\beta)(n+1+\sum_{k=1}^n R_k+\delta N(n))}\\
&= \frac{(1-\alpha)\beta}{1+\beta}(1-\rho)\frac{\xi^{(1)}_{i,\delta}(T_n^*-T^*_{S_i})+\delta}{n+1+\sum_{k=1}^n R_k+\delta N(n)}\frac{\xi^{(2)}_{j,\delta}(T^*_n-T^*_{S_j})+\delta}{n+1+\sum_{k=1}^n R_k+\delta N(n)},
\end{align*}
and for $i=j$,
\begin{align*}
&\PP^{\mathcal{F}_{T_n^*}}\left(A_1(n+1), B_{n+1}=0,\bxi_{i,\delta}(T^*_{n+1}-T^*_{S_i}) = \bxi_{i,\delta}(T^*_{n}-T^*_{S_i})+(1,0)+(0,1)\right)\\
&= \frac{(1-\alpha)\beta}{1+\beta}(1-\rho)\frac{\xi^{(1)}_{i,\delta}(T_n^*-T^*_{S_i})+\delta}{n+1+\sum_{k=1}^n R_k+\delta N(n)}\frac{\xi^{(2)}_{i,\delta}(T^*_n-T^*_{S_i})+\delta}{n+1+\sum_{k=1}^n R_k+\delta N(n)}.
\end{align*}
Similarly, we have for case 5 and $i\neq j$,
\begin{align*}
&\PP^{\mathcal{F}_{T_n^*}}\left(A_2(n+1), B_{n+1}=0,\bxi_{j,\delta}(T^*_{n+1}-T^*_{S_j}) = \bxi_{j,\delta}(T^*_{n}-T^*_{S_j})+(0,1),\,\right.\\
&\left.\qquad\qquad
 \bxi_{i,\delta}(T^*_{n+1}-T^*_{S_i}) = \bxi_{i,\delta}(T^*_{n}-T^*_{S_i})+(1,0) \right)\\
&= \frac{(1-\gamma)\beta}{\beta+\gamma}(1-\rho)\frac{(\beta+\gamma)(\xi^{(2)}_{j,\delta}(T^*_{n}-T^*_{S_j})+\delta)}{(1+\beta)(n+1+\sum_{k=1}^n R_k+\delta N(n))}\\
\times&\sum_{k=0}^\infty\left(1-(1-\rho)\frac{(\alpha+\beta)\sum_{k=1}^{N(n)}(\xi^{(1)}_{k,\delta}(T^*_n-T^*_{S_k})+\delta)}{(1+\beta)(n+1+\sum_{k=1}^n R_k+\delta N(n))}\right)^k \frac{(1-\rho)(\alpha+\beta)(\xi^{(1)}_{i,\delta}(T^*_n-T^*_{S_i})+\delta)}{(1+\beta)(n+1+\sum_{k=1}^n R_k+\delta N(n))}\\
&= \frac{(1-\gamma)\beta}{1+\beta}(1-\rho)\frac{\xi^{(1)}_{i,\delta}(T_n^*-T^*_{S_i})+\delta}{n+1+\sum_{k=1}^n R_k+\delta N(n)}\frac{\xi^{(2)}_{j,\delta}(T^*_n-T^*_{S_j})+\delta}{n+1+\sum_{k=1}^n R_k+\delta N(n)},
\end{align*}
and for $i=j$,
\begin{align*}
&\PP^{\mathcal{F}_{T_n^*}}\left(A_2(n+1), B_{n+1}=0,\bxi_{i,\delta}(T^*_{n+1}-T^*_{S_i}) = \bxi_{i,\delta}(T^*_{n}-T^*_{S_i})+(1,0)+(0,1)\right)\\
&= \frac{(1-\gamma)\beta}{1+\beta}(1-\rho)\frac{\xi^{(1)}_{i,\delta}(T_n^*-T^*_{S_i})+\delta}{n+1+\sum_{k=1}^n R_k+\delta N(n)}\frac{\xi^{(2)}_{i,\delta}(T^*_n-T^*_{S_i})+\delta}{n+1+\sum_{k=1}^n R_k+\delta N(n)}.
\end{align*}
Hence, combining cases 4 and 5 gives that for $i\neq j$,
\begin{align}
&\PP^{\mathcal{F}_{T_n^*}}\left(B_{n+1}=0,\bxi_{i,\delta}(T^*_{n+1}-T^*_{S_i}) = \bxi_{i,\delta}(T^*_{n}-T^*_{S_i})+(1,0),\,\right.\nonumber\\
&\left.\qquad\qquad \bxi_{j,\delta}(T^*_{n+1}-T^*_{S_j}) = \bxi_{j,\delta}(T^*_{n}-T^*_{S_j})+(0,1) \right)\nonumber\\
& =\beta (1-\rho) \frac{\xi^{(1)}_{i,\delta}(T_n^*-T^*_{S_i})+\delta}{n+1+\sum_{k=1}^n R_k+\delta N(n)}\frac{\xi^{(2)}_{j,\delta}(T^*_n-T^*_{S_j})+\delta}{n+1+\sum_{k=1}^n R_k+\delta N(n)},
\label{eq:beta0}
\end{align}
which corresponds to creating a new edge $(j,i)$ between two existing nodes $i,j$ but not generating any reciprocal edge.
For $i=j$,
\begin{align}
&\PP^{\mathcal{F}_{T_n^*}}\left(B_{n+1}=0,\bxi_{i,\delta}(T^*_{n+1}-T^*_{S_i}) = \bxi_{i,\delta}(T^*_{n}-T^*_{S_i})+(1,0)+(0,1)\right)\nonumber\\
&= \left(\frac{(1-\alpha)\beta}{1+\beta}+\frac{(1-\gamma)\beta}{1+\beta}\right)(1-\rho)\frac{\xi^{(1)}_{i,\delta}(T_n^*-T^*_{S_i})+\delta}{n+1+\sum_{k=1}^n R_k+\delta N(n)}\frac{\xi^{(2)}_{i,\delta}(T^*_n-T^*_{S_i})+\delta}{n+1+\sum_{k=1}^n R_k+\delta N(n)}\nonumber\\
&= \beta(1-\rho)\frac{\xi^{(1)}_{i,\delta}(T_n^*-T^*_{S_i})+\delta}{n+1+\sum_{k=1}^n R_k+\delta N(n)}\frac{\xi^{(2)}_{i,\delta}(T^*_n-T^*_{S_i})+\delta}{n+1+\sum_{k=1}^n R_k+\delta N(n)},
\label{eq:beta0_self}
\end{align}
which corresponds to adding a self loop for an existing node $i$ without any reciprocal edge.
Combining \eqref{eq:beta0} with \eqref{eq:beta0_self} gives a transition probability agreeing with \eqref{eq:PAtrans4}, i.e. given $\mathcal{F}_{T_n^*}$, for $1\le i,j\le N(n)$,
\begin{align*}
&\PP^{\mathcal{F}_{T_n^*}}\left(\bxi^*_\delta(T^*_{n+1})=\bxi^*_\delta(T^*_{n}) + \bolde_i^{(1)} + \bolde_{j}^{(2)}\right)\\
&= \beta (1-\rho) \frac{\xi^{(1)}_{i,\delta}(T_n^*-T^*_{S_i})+\delta}{n+1+\sum_{k=1}^n R_k+\delta N(n)}\frac{\xi^{(2)}_{j,\delta}(T^*_n-T^*_{S_j})+\delta}{n+1+\sum_{k=1}^n R_k+\delta N(n)}.
\end{align*}

Meanwhile, for cases 6 and 7, we have that for $i\neq j$,
\begin{align*}
&\PP\left(M_1(n+1), B_{n+1}=0,\bxi_{i,\delta}(T^*_{n+1}-T^*_{S_i}) = \bxi_{i,\delta}(T^*_{n}-T^*_{S_i})+(1,1) ,\right.\\
&\left. \qquad \bxi_{j,\delta}(T^*_{n+1}-T^*_{S_j}) = \bxi_{j,\delta}(T^*_{n}-T^*_{S_j})+(1,1) \right)\\
&= \frac{(1-\alpha)\beta}{\alpha+\beta}\rho\frac{(\alpha+\beta)(\xi^{(1)}_{i,\delta}(T_n^*-T^*_{S_i})+\delta)}{(1+\beta)(n+1+\sum_{k=1}^n R_k+\delta N(n))}\\
&\quad\times\sum_{k=0}^\infty\left(1-\rho\frac{(\beta+\gamma)\sum_{k=1}^{N(n)}(\xi^{(2)}_{k,\delta}(T^*_n-T^*_{S_k})+\delta)}{(1+\beta)(n+1+\sum_{k=1}^n R_k+\delta N(n))}\right)^k \rho\frac{(\beta+\gamma)(\xi^{(2)}_{j,\delta}(T^*_n-T^*_{S_j})+\delta)}{(1+\beta)(n+1+\sum_{k=1}^n R_k+\delta N(n))}\\
&= \frac{(1-\alpha)\beta}{1+\beta}\rho\frac{\xi^{(1)}_{i,\delta}(T_n^*-T^*_{S_i})+\delta}{1+\sum_{k=1}^n R_k+\delta N(n)}\frac{\xi^{(2)}_{j,\delta}(T^*_n-T^*_{S_j})+\delta}{n+1+\sum_{k=1}^n R_k+\delta N(n)},
\end{align*} 
and 
\begin{align*}
&\PP\left(M_2(n+1), B_{n+1}=0,\bxi_{i,\delta}(T^*_{n+1}-T^*_{S_i}) = \bxi_{i,\delta}(T^*_{n}-T^*_{S_i})+(1,1) ,\right.\\
&\left. \qquad \bxi_{j,\delta}(T^*_{n+1}-T^*_{S_j}) = \bxi_{j,\delta}(T^*_{n}-T^*_{S_j})+(1,1) \right)\\
&= \frac{(1-\gamma)\beta}{\beta+\gamma}\rho\frac{(\beta+\gamma)(\xi^{(2)}_{j,\delta}(T^*_{n}-T^*_{S_j})+\delta)}{(1+\beta)(n+1+\sum_{k=1}^n R_k+\delta N(n))}\\
& \times\sum_{k=0}^\infty\left(1-(1-\rho)\frac{(\alpha+\beta)\sum_{k=1}^{N(n)}(\xi^{(1)}_{k,\delta}(T^*_n-T^*_{S_k})+\delta)}{(1+\beta)(n+1+\sum_{k=1}^n R_k+\delta N(n))}\right)^k \rho
\frac{(\alpha+\beta)(\xi^{(1)}_{i,\delta}(T^*_n-T^*_{S_i})+\delta)}{(1+\beta)(n+1+\sum_{k=1}^n R_k+\delta N(n))}\\
&= \frac{(1-\gamma)\beta}{1+\beta}\rho\frac{\xi^{(1)}_{i,\delta}(T_n^*-T^*_{S_i})+\delta}{n+1+\sum_{k=1}^n R_k+\delta N(n)}\frac{\xi^{(2)}_{j,\delta}(T^*_n-T^*_{S_j})+\delta}{n+1+\sum_{k=1}^n R_k+\delta N(n)}.
\end{align*}
When $i=j$, we have
\begin{align*}
&\PP^{\mathcal{F}_{T_n^*}}\left(M_1(n+1), B_{n+1}=0,\bxi_{i,\delta}(T^*_{n+1}-T^*_{S_i}) = \bxi_{i,\delta}(T^*_{n}-T^*_{S_i})+(1,1)+(1,1)\right)\\
&= \frac{(1-\alpha)\beta}{1+\beta}\rho\frac{\xi^{(1)}_{i,\delta}(T_n^*-T^*_{S_i})+\delta}{n+1+\sum_{k=1}^n R_k+\delta N(n)}\frac{\xi^{(2)}_{i,\delta}(T^*_n-T^*_{S_i})+\delta}{n+1+\sum_{k=1}^n R_k+\delta N(n)},
\end{align*}
and 
\begin{align*}
&\PP^{\mathcal{F}_{T_n^*}}\left(M_2(n+1), B_{n+1}=0,\bxi_{i,\delta}(T^*_{n+1}-T^*_{S_i}) = \bxi_{i,\delta}(T^*_{n}-T^*_{S_i})+(1,1)+(1,1)\right)\\
&= \frac{(1-\gamma)\beta}{1+\beta}\rho\frac{\xi^{(1)}_{i,\delta}(T_n^*-T^*_{S_i})+\delta}{n+1+\sum_{k=1}^n R_k+\delta N(n)}\frac{\xi^{(2)}_{i,\delta}(T^*_n-T^*_{S_i})+\delta}{n+1+\sum_{k=1}^n R_k+\delta N(n)}.
\end{align*}
Therefore, combining cases 6 and 7 gives for $i\neq j$,
\begin{align}
&\PP^{\mathcal{F}_{T_n^*}}\left(B_{n+1}=0,\bxi_{i,\delta}(T^*_{n+1}-T^*_{S_i}) = \bxi_{i,\delta}(T^*_{n}-T^*_{S_i})+(1,1),\,\right.\nonumber\\
&\left.\qquad\qquad \bxi_{j,\delta}(T^*_{n+1}-T^*_{S_j}) = \bxi_{j,\delta}(T^*_{n}-T^*_{S_j})+(1,1) \right)\nonumber\\
& =\beta \rho \frac{\xi^{(1)}_{i,\delta}(T_n^*-T^*_{S_i})+\delta}{n+1+\sum_{k=1}^n R_k+\delta N(n)}\frac{\xi^{(2)}_{j,\delta}(T^*_n-T^*_{S_j})+\delta}{n+1+\sum_{k=1}^n R_k+\delta N(n)},
\label{eq:beta1}
\end{align}
which corresponds to creating a new edge $(j,i)$ between two existing nodes $i,j$, together with a reciprocal edge.
Also, for $i=j$,
\begin{align}
&\PP^{\mathcal{F}_{T_n^*}}\left(B_{n+1}=0,\bxi_{i,\delta}(T^*_{n+1}-T^*_{S_i}) = \bxi_{i,\delta}(T^*_{n}-T^*_{S_i})+(1,1)+(1,1)\right)\nonumber\\
&= \left(\frac{(1-\alpha)\beta}{1+\beta}+\frac{(1-\gamma)\beta}{1+\beta}\right)\rho\frac{\xi^{(1)}_{i,\delta}(T_n^*-T^*_{S_i})+\delta}{n+1+\sum_{k=1}^n R_k+\delta N(n)}\frac{\xi^{(2)}_{i,\delta}(T^*_n-T^*_{S_i})+\delta}{n+1+\sum_{k=1}^n R_k+\delta N(n)}\nonumber\\
&= \beta\rho\frac{\xi^{(1)}_{i,\delta}(T_n^*-T^*_{S_i})+\delta}{n+1+\sum_{k=1}^n R_k+\delta N(n)}\frac{\xi^{(2)}_{i,\delta}(T^*_n-T^*_{S_i})+\delta}{n+1+\sum_{k=1}^n R_k+\delta N(n)},
\label{eq:beta1_self}
\end{align}
which corresponds to creating a self loop $(i,i)$ for an existing node $i$, together with a reciprocal edge $(i,i)$. 
Combining \eqref{eq:beta1} with \eqref{eq:beta1_self} gives a transition probability agreeing with \eqref{eq:PAtrans5}, i.e. given $\mathcal{F}_{T_n^*}$, for $1\le i,j\le N(n)$,
\begin{align*}
&\PP^{\mathcal{F}_{T_n^*}}\left(\bxi^*_\delta(T^*_{n+1})=\bxi^*_\delta(T^*_{n}) + \bolde_i^{(3)} + \bolde_{j}^{(3)}\right)\\
&= \beta \rho \frac{\xi^{(1)}_{i,\delta}(T_n^*-T^*_{S_i})+\delta}{n+1+\sum_{k=1}^n R_k+\delta N(n)}\frac{\xi^{(2)}_{j,\delta}(T^*_n-T^*_{S_j})+\delta}{n+1+\sum_{k=1}^n R_k+\delta N(n)}.
\end{align*}
This completes the proof of Theorem~\ref{thm:embed_MBI}.
\end{proof}

\section{Proof of Theorem~\ref{thm:limitNij}}\label{proof2}

Consider
\begin{align*}
\frac{1}{n}
&\sum_{v\in V(n)} \ind_{\left\{\bigl(\Din_v(n),\Dout_v(n) \bigr) = (m,l)\right\}}\\ 
&= \frac{1}{n}\sum_{v = 2}^{|V(n)|} \ind_{\left\{\bigl(\Din_v(n),\Dout_v(n)\bigr) = (m,l)\right\}}
+\frac{1}{n}\ind_{\left\{\bigl(\Din_v(n),\Dout_v(n)\bigr) = (m,l)\right\}},
\end{align*}
and we see that the second term on the right hand side goes to 0 a.s. as $n\to\infty$.
Thus, we only focus on the first term:
\begin{align}\label{eq:PA_counts_new}
\frac{1}{n}&\sum_{v = 2}^{|V(n)|} \ind_{\left\{\bigl(\Din_v(n),\Dout_v(n)\bigr) = (m,l)\right\}},\nonumber\\
\intertext{which by Theorem~\ref{thm:embed_MBI} has the same distribution as:}
\stackrel{d}{=} 
&\frac{1}{n}\sum_{v = 2}^{N(n)} \ind_{\left\{\bxi_{v,\delta}(T^*_n-T^*_{S_v}) = (m,l)\right\}}\\
= & 
\frac{1}{n}\sum_{v =1}^{n} B_v\ind_{\left\{\bxi_{N(v),\delta}(T^*_n-T^*_{v}) = (m,l)\right\}},
\end{align}
{where the coefficient $B_v$ in front of the indicator guarantees we
sum different MBI processes inside the indicator.}

Now we divide the quantity in \eqref{eq:PA_counts_new} into different parts:
\begin{align*}
\frac{1}{n}&\left[\sum_{v = 2}^{N(n)} \ind_{\left\{\bxi_{v,\delta}(T^*_n-T^*_{S_v}) = (m,l)\right\}}
- \sum_{v=1}^n B_v\ind_{\left\{\bxi_{N(v),\delta}\left(\frac{1}{1+\rho+\delta(1-\beta)}\log(n/v)\right) = (m,l)\right\}}  \right]\\
&+ \frac{1}{n}\sum_{v = 1}^{n}\left[ B_v
\ind_{\left\{\bxi_{N(v),\delta}\left(\frac{1}{1+\rho+\delta(1-\beta)}\log(n/v)\right) = (m,l)\right\}} 
\right.\\& \left. \quad 
-\, (1-\beta)\PP\left(\bxi_{N(v),\delta}\left(\frac{1}{1+\rho+\delta(1-\beta)}\log(n/v)\right)= (m,l)\right)
\right]\\
& + \left[\frac{1-\beta}{n}\sum_{v = 1}^{n} 
\PP\left(\bxi_{N(v),\delta}\left(\frac{1}{1+\rho+\delta(1-\beta)}\log(n/v)\right)= (m,l)\right)
\right.\\& \left. \quad 
-\,(1-\beta) \int_0^1 \PP\left(-\widetilde{\bxi}_{\delta}\left(\frac{1}{1+\rho+\delta(1-\beta)}\log t\right)= (m,l)\right)\dd t
\right]\\
& + (1-\beta)\int_0^1 \PP\left(\widetilde{\bxi}_{\delta}\left(-\frac{1}{1+\rho+\delta(1-\beta)}\log t\right)= (m,l)\right)\dd t\\
&=: C_1(n) + C_2(n) + C_3(n) + C_4.
\end{align*}
Here we will show that $C_1(n)\convp 0$, $C_2(n)\convas 0$, and $C_3(n)\to 0$, as $n\to\infty$.

Since we do not count all jumps in the MBI processes over the time period $\bigcup_{k:B_k=0}(T_k, T'_k]$, then we define the effective amount of evolution time of $\{\bxi_{k,\delta}(\cdot)\}$ up to time $T^*_n$ as
\[
\widetilde{T}^*_n:=\sum_{k=1}^{n}\left(T_{k}-T^*_{k-1}+(1-B_{k})(T^*_{k}-T'_{k})\right).
\]
By \eqref{eq:cond_expT}, we apply \cite[Theorem III.9.1, Page 119]{athreya:ney:1972} to obtain that
\begin{align*}
\widetilde{T}^*_n &- \sum_{k=0}^{n-1}\frac{1}{k+1+\delta N(k)}
\end{align*}
is $L_2$-bounded martingale with respect to 
$\{\mathcal{F}_{T^*_n}: n\ge 1\}$, so
converges a.s.. Then by \cite[Corollary 2.1(iii)]{athreya:ghosh:sethuraman:2008}, we have
for $\eta>0$, 
\begin{align}
\sup_{n\eta\le v\le  n}\left|\widetilde{T}^*_n - \widetilde{T}^*_v
-\frac{1}{1+\rho+\delta (1-\beta)}\log(n/k)\right| &\convas 0,
\label{eq:conv_T}
\end{align}
as $n\to\infty$.
Also, note that 
\begin{align*}
|C_1(n)|&\le \frac{1}{n}\sum_{v=1}^{n}B_v\left|\ind_{\left\{\xi^{(1)}_{N(v),\delta}(\widetilde{T}^*_n - \widetilde{T}^*_v)= m\right\}}
\ind_{\left\{\xi^{(2)}_{N(v),\delta}(\widetilde{T}^*_n - \widetilde{T}^*_v) = l\right\}}\right.\\
 & \left. \qquad \qquad - \ind_{\left\{\xi^{(1)}_{N(v),\delta}\left(\frac{1}{1+\rho+\delta(1-\beta)}\log(n/v)\right) = m\right\}} 
\ind_{\left\{\xi^{(2)}_{N(v),\delta}\left(\frac{1}{1+\rho+\delta(1-\beta)}\log(n/v)\right) = l\right\}}  \right|\\
&\le \frac{1}{n}\sum_{v=1}^{n}B_v\left|\ind_{\left\{\xi^{(1)}_{N(v),\delta}\left(\widetilde{T}^*_n - \widetilde{T}^*_v\right)= m\right\}} - \ind_{\left\{\xi^{(1)}_{N(v),\delta}\left(\frac{1}{1+\rho+\delta(1-\beta)}\log(n/v)\right) = m\right\}} \right|\\
&\qquad +\frac{1}{n}\sum_{v=1}^{n}B_v \left|\ind_{\left\{\xi^{(2)}_{N(v),\delta}\left(\widetilde{T}^*_n - \widetilde{T}^*_v\right) = l\right\}} - \ind_{\left\{\xi^{(2)}_{N(v),\delta}\left(\frac{1}{1+\rho+\delta(1-\beta)}\log(n/v)\right) = l\right\}} \right|.
\end{align*}
Therefore,
\begin{align}
\EE |C_1(n)| &\le \frac{1}{n}\EE\left(\sum_{v=1}^{n}B_v\left|\ind_{\left\{\xi^{(1)}_{N(v),\delta}(\widetilde{T}^*_n - \widetilde{T}^*_v)= m\right\}} - \ind_{\left\{\xi^{(1)}_{N(v),\delta}\left(\frac{1}{1+\rho+\delta(1-\beta)}\log(n/v)\right) = m\right\}} \right|\right)\nonumber\\
 + &\frac{1}{n}\EE\left(\sum_{v=1}^{n}B_v\left|\ind_{\left\{\xi^{(2)}_{N(v),\delta}(\widetilde{T}^*_n - \widetilde{T}^*_v) = l\right\}} - \ind_{\left\{\xi^{(2)}_{N(v),\delta}\left(\frac{1}{1+\rho+\delta(1-\beta)}\log(n/v)\right) = l\right\}} \right|\right).
 \label{eq:R1_bound}
\end{align}
Applying the a.s. convergence results in \eqref{eq:conv_T} and using the methods from \cite[Theorem 1.2, pp 489--490]{athreya:ghosh:sethuraman:2008}, we have
 \begin{align*}
 &\left|\ind_{\{\xi^{(1)}_{N(v),\delta}(\widetilde{T}^*_n-\widetilde{T}^*_{v})=m\}}
-\ind_{\{\xi^{(1)}_{N(v),\delta}\bigl(\log(n/v)/(1+\rho+\delta(1-\beta))\bigr)=l\}}\right|\\
&\le \sup_{t\in [0,-\log\eta/(1+\rho+\delta(1-\beta))]}\PP\left(\xi^{(1)}_{2,\delta}(t+\epsilon)-
 \xi^{(1)}_{2,\delta}\bigl((t-\epsilon)\wedge 0)\bigr)\ge 1\right)\\
&\quad +\PP\left(\sup_{n\eta \le v\le n}\left|\widetilde{T}^*_n-\widetilde{T}^*_{v}-\frac{1}{1+\rho+\delta(1-\beta)}\log(n/v)\right|\ge \epsilon\right)=: p_1(\epsilon,\eta).
 \end{align*}
 Similarly,
 \begin{align*}
 &\left|\ind_{\{\xi^{(2)}_{N(v),\delta}(\widetilde{T}^*_n-\widetilde{T}^*_{v})=l\}}
-\ind_{\{\xi^{(2)}_{N(v),\delta}\bigl(\log(n/v)/(1+\rho+\delta(1-\beta))\bigr)=l\}}\right|\\
&\le \sup_{t\in [0,-\log\eta/(1+\rho+\delta(1-\beta))]}\PP\left(\xi^{(2)}_{2,\delta}(t+\epsilon)-
 \xi^{(2)}_{2,\delta}\bigl((t-\epsilon)\wedge 0)\bigr)\ge 1\right)\\
&\quad +\PP\left(\sup_{n\eta \le v\le n}\left|\widetilde{T}^*_n-\widetilde{T}^*_{v}-\frac{1}{1+\rho+\delta(1-\beta)}\log(n/v)\right|\ge \epsilon\right)=: p_2(\epsilon,\eta).
 \end{align*}
 Therefore, we see from \eqref{eq:R1_bound} that
 \begin{align*}
 \EE|C_1(n)|&\le 2\cdot\frac{1}{n}\cdot n\eta+\frac{1}{n}(1-\eta)n\bigl(p_1(\epsilon,\eta)+p_2(\epsilon,\eta)\bigr),
 \end{align*}
 which implies $\lim_{n\to\infty}\EE|C_1(n)|=0$. Therefore, $C_1(n)\convp 0$.

To prove $C_2(n)\convas 0$, first consider for $v\ge 1$, 
\begin{align*}
X_v& := B_v
\ind_{\left\{\bxi_{N(v),\delta}\left(\frac{1}{1+\rho+\delta(1-\beta)}\log\left(\frac{n}{v}\right)\right) = (m,l)\right\}} \\
&\qquad - (1-\beta)\PP\left(\bxi_{N(v),\delta}\left(\frac{1}{1+\rho+\delta(1-\beta)}\log\left(\frac{n}{v}\right)\right) = (m,l)\right)\\
&= B_v\ind_{\left\{\bxi_{N(v),\delta}\left(\frac{1}{1+\rho+\delta(1-\beta)}\log\left(\frac{n}{v}\right)\right) = (m,l)\right\}} \\
&\qquad - (1-\beta)\PP\left(\widetilde{\bxi}_{\delta}\left(\frac{1}{1+\rho+\delta(1-\beta)}\log\left(\frac{n}{v}\right)\right) = (m,l)\right).
\end{align*}
Note that $E(X_v) =0$ for all $v\ge 1$. 
Define
\begin{align*}
\widetilde{X}_v&:= 
\ind_{\left\{\bxi_{N(v),\delta}\left(\frac{1}{1+\rho+\delta(1-\beta)}\log\left(\frac{n}{v}\right)\right) = (m,l)\right\}},
\end{align*}
and for $k\ge j\ge 1$, 
we have 
\begin{align*}
\EE&(\widetilde{X}_k^3 \widetilde{X}_j) =
\PP\left(B_k\widetilde{X}_k=1, B_j\widetilde{X}_j=1\right)
= \EE\left(\ind_{\{B_k=1\}}\ind_{\{B_j=1\}}
\PP^{\{B_i\}_{i=1}^k}\left(\widetilde{X}_k=1, \widetilde{X}_j=1\right)
\right),\\
\intertext{where applying the independence among $\{\bxi_{v,\delta}(\cdot)\}_{v\ge 2}$ and $\{B_v:v\ge 1\}$, gives}
&= \EE\left(\ind_{\{B_k=1\}}
\PP^{\{B_i\}_{i=1}^k}\left(\widetilde{X}_k=1\right)
\ind_{\{B_j=1\}}\PP^{\{B_i\}_{i=1}^k}\left(\widetilde{X}_j=1\right)
\right)
\\
&= (1-\beta)^2 \PP\left[\widetilde{\bxi}_{\delta}\left(\frac{\log(n/k)}{1+\rho+\delta(1-\beta)}\right) = (m,l)\right]\PP\left[\widetilde{\bxi}_{\delta}\left(\frac{\log(n/j)}{1+\rho+\delta(1-\beta)}\right) = (m,l)\right]\\
&= \EE(\widetilde{X}_k^3)\EE( \widetilde{X}_j).
\end{align*}
Therefore, for $k\neq j$, we have
$\EE(X_k^3 X_j) = 0$. 
Similarly, for $k\neq l\neq i\neq j$, we have $\EE(X_k^2 X_l X_i)=\EE(X_k X_l X_i X_j)=0$.
Then by the Markov's inequality, for any $\epsilon>0$,
\begin{align*}
\PP&\left(\left|\frac{1}{n}\sum_{k=1}^n X_k\right|\ge \epsilon\right) \le \frac{1}{n^4 \epsilon^4} \EE \left(\sum_{k=1}^nX_k\right)^4\\
&= \frac{1}{n^4 \epsilon^4} \EE\left(\sum_{k=1}^n X_k^4 + 4\sum_{k\neq l} X_k X_l^3
+ 3\sum_{k\neq l} X_k^2 X_l^2
+ 6 \sum_{k\neq l\neq i} X_k^2 X_l X_i
+ \sum_{k\neq l\neq i\neq j} X_k X_l X_i X_j
\right)\\
& = \frac{1}{n^4 \epsilon^4} \EE\left(\sum_{k=1}^n X_k^4 
+ 3\sum_{k\neq l} X_k^2 X_l^2
\right),\\
\intertext{since $|X_k|\le 1$ for $k\ge 1$, we have}
&\le \frac{1}{n^3\epsilon^4} + \frac{3}{n^2\epsilon^2}.
\end{align*}
Then by Borel-Cantelli lemma, we have
$\frac{1}{n}\sum_{k=1}^n X_k \convas 0$,
which gives $C_2(n)\convas 0$.

For $C_3(n)$, since the function $\PP[\bxi_{2,\delta}(t)=(m,l)]$ is bounded and continuous in $t$, 
 then $C_3(n)\to 0$ by the Riemann integrability of $\PP[\bxi_{2,\delta}(-\log t/(1+\rho+\delta(1-\beta)))=(m,l)]$,
 thus completing the proof of \eqref{eq:Nij}.

\section{Generalized Breiman's Theorem}\label{breiman}
We now give the generalized Breiman's theorem \cite{breiman:1965} which is useful to show the MRV of $(\mathcal{I},\mathcal{O})$ in Theorem~\ref{thm:MRV}. This result
about products has spawned many proofs and generalizations. See for
instance
\cite{resnickbook:2007,
  kulik:soulier:2020,
  fougeres:mercadier:2012,
  maulik:resnick:rootzen:2002, chen:chen:gao:2019,
  basrak:davis:mikosch:2002b}. Here we only present 
the result proved in \cite[Theorem 3]{wang:resnick:2021b}.

\begin{Theorem}\label{th:extendBrei}
Suppose $\{\bxi(t): t\geq 0\}$ is an $\RR_+^p$-valued stochastic
process 
for some $p\geq 1$.  Let $X$ be a positive random variable with
regularly varying distribution satisfying for some scaling function $b(t)$,
$$\lim_{t\to\infty} t\PP( X/b(t) >x) =x^{-c} =:\nu_c \bigl((x,\infty)\bigr), \quad x>0, c>0.$$
Further suppose
\begin{enumerate}
\item For some finite and positive random vector $\bxi_\infty$,
  $$\lim_{ t \to \infty} {\bxi(t)} =\bxi_\infty \quad (\text{almost surely});$$
  \item The random variable $X$ and the process $\bxi (\cdot)$ are independent.
  \end{enumerate}
  Then:

  (i) In $\mathbb{M}(\RR_+^p \times (\RR_+\setminus \{0\}))$,
  \begin{equation}\label{e:beforeMult}
    t\PP\Bigl[ \Bigl({\bxi(X)}, \frac{X}{b(t)}\Bigr) \in \cdot \,\Bigr]
    \longrightarrow \PP(\bxi_\infty \in \cdot \,) \times
    \nu_c (\cdot)=:\eta(\cdot).\end{equation}
  If $\bxi_\infty$ is of the form $\bxi_\infty =:L\bv$ where $L>0$
  almost surely and $\bv \in (0,\infty)^p$, then $\eta (\cdot)$
  concentrates on the subcone $\mathcal{L}\times (\RR_+\setminus
  \{0\})$ where $\mathcal{L}=\{\theta \bv : \theta>0\}$.

  (ii) If additionally, for some $c'>c$ we have the condition
  \begin{equation}\label{e:extraCond}
  \kappa:=  \sup_{t\geq 0}   \EE\left[ \Bigl(  { \|\bxi(t) \|}  \Bigr)^{c'}\right]
    <\infty,
  \end{equation}
  for some $L_p$ norm $\|\cdot\|$,  then the product 
  of components in
  \eqref{e:beforeMult}, $\bxi(X)X $, has a regularly varying distribution with
  scaling function $b(t)$ and 
    in $\mathbb{M}(\RR_+^p \setminus
\{\bzero\})$, 
  \begin{equation}\label{e:prodOK}
    t\PP\Bigl[\frac{X\bxi(X)}{b(t)} \in \cdot \,\Bigr]
    \longrightarrow \left(\PP(\bxi_\infty \in \cdot \,) \times \nu_c\right) \circ h^{-1},
  \end{equation}
where $h(\by,x)=x\by$.
\end{Theorem}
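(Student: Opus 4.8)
The plan is to establish part~(i) — the joint $\mathbb{M}$-convergence of $(\bxi(X),X/b(t))$ — first, and then to deduce part~(ii) by pushing the product map $h(\by,x)=x\by$ through it, reserving the moment hypothesis \eqref{e:extraCond} for the very last step.

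For part~(i), the idea I would use is to exploit the independence of $X$ and $\bxi(\cdot)$ by conditioning on the whole sample path of $\bxi(\cdot)$: since $\bxi_\infty=\lim_{s\to\infty}\bxi(s)$ is $\sigma(\bxi(\cdot))$-measurable, it is independent of $X$, and once the path is fixed, $\bxi_\infty$ is a constant, $\bxi(s)\to\bxi_\infty$, while $X$ still has its regularly varying law so $t\PP(X/b(t)\in\cdot)\to\nu_c$ in $\mathbb{M}(\RR_+\setminus\{0\})$ by hypothesis. For a product set $B\times(x,\infty)$ with $B\subset\RR_+^p$ and $\bxi_\infty\notin\partial B$, I would argue that on $\{X>xb(t)\}$ the (random) argument $X$ is eventually large enough that $\ind_B(\bxi(X))=\ind_B(\bxi_\infty)$, giving $t\PP(\bxi(X)\in B,\ X/b(t)>x)\to\ind_B(\bxi_\infty)x^{-c}$, and then integrate back over the law of the path (bounded convergence, since $t\PP(X>xb(t))$ stays bounded in $t$) to obtain $\PP(\bxi_\infty\in B)\,\nu_c((x,\infty))$. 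As such rectangles, with $B$ drawn from a $\pi$-system of $\PP(\bxi_\infty\in\cdot)$-continuity sets, are convergence-determining for $\mathbb{M}$-convergence on $\RR_+^p\times(\RR_+\setminus\{0\})$, this yields \eqref{e:beforeMult}; an equivalent route tests against $f\in\mathcal{C}(\RR_+^p\times(\RR_+\setminus\{0\}))$, writing $t\EE f(\bxi(X),X/b(t))=\int_{(\epsilon,\infty)}\EE f(\bxi(b(t)u),u)\,\mu_t(\dd u)$ with $\mu_t=t\PP(X/b(t)\in\cdot)$ and $\epsilon$ chosen so that $f\equiv0$ on $\{x\le\epsilon\}$, and combining $\EE f(\bxi(b(t)u),u)\to\EE f(\bxi_\infty,u)$ with $\mu_t\to\nu_c$ via a Pratt-type lemma. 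The concentration statement when $\bxi_\infty=L\bv$ with $L>0$ a.s.\ is then immediate, since $\PP(\bxi_\infty\in\cdot)$ sits on the ray $\mathcal{L}=\{\theta\bv:\theta>0\}$.

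For part~(ii), the obstruction is that $h$ is not proper for the relevant cones: the $h$-preimage of a set bounded away from $\bzero$ in $\RR_+^p$ can approach $\RR_+^p\times\{0\}$ (small $x$, large $\|\by\|$), so for $f\in\mathcal{C}(\RR_+^p\setminus\{\bzero\})$ vanishing near $\bzero$ the composition $f\circ h$ is not admissible in~(i). I would therefore insert a cutoff $\phi_\epsilon$, with $\phi_\epsilon\equiv1$ on $[\epsilon,\infty)$ and $\phi_\epsilon\equiv0$ on $[0,\epsilon/2]$, so that $(\by,x)\mapsto f(x\by)\phi_\epsilon(x)$ does lie in $\mathcal{C}(\RR_+^p\times(\RR_+\setminus\{0\}))$; part~(i) then gives $t\EE[f(X\bxi(X)/b(t))\phi_\epsilon(X/b(t))]\to\EE\bigl[\int_0^\infty f(y\bxi_\infty)\phi_\epsilon(y)\,\nu_c(\dd y)\bigr]$, whose right side increases as $\epsilon\downarrow0$ to $\EE\bigl[\int_0^\infty f(y\bxi_\infty)\,\nu_c(\dd y)\bigr]$, finite because $\EE\|\bxi_\infty\|^{c'}\le\liminf_t\EE\|\bxi(t)\|^{c'}\le\kappa<\infty$ by Fatou and $c'>c$. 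The crux, and the place where \eqref{e:extraCond} enters, is to show the discarded tail is uniformly negligible, i.e.\ $\lim_{\epsilon\downarrow0}\limsup_t t\EE[f(X\bxi(X)/b(t))(1-\phi_\epsilon(X/b(t)))]=0$: on $\{X/b(t)<\epsilon\}$ the integrand is nonzero only when $(X/b(t))\|\bxi(X)\|\ge\delta$, so a Markov bound of order $c'$ together with the independence of $X$ and $\bxi(\cdot)$ controls the expectation by $\|f\|_\infty\kappa\,\delta^{-c'}\,t\,b(t)^{-c'}\EE[X^{c'}\ind_{\{X<\epsilon b(t)\}}]$, and Karamata's theorem on truncated moments of a regularly varying tail gives $t\,b(t)^{-c'}\EE[X^{c'}\ind_{\{X<\epsilon b(t)\}}]\to\frac{c}{c'-c}\epsilon^{c'-c}\to0$ as $\epsilon\downarrow0$. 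A routine limsup-then-limit triangle-inequality argument then identifies $\lim_t t\EE f(X\bxi(X)/b(t))$ with $\EE\bigl[\int_0^\infty f(y\bxi_\infty)\,\nu_c(\dd y)\bigr]=\int f\,\dd\bigl((\PP(\bxi_\infty\in\cdot)\times\nu_c)\circ h^{-1}\bigr)$, which is \eqref{e:prodOK}. I expect this uniform-negligibility estimate — controlling the mass $X\bxi(X)$ deposits near the origin from small $X$ but large $\|\bxi(X)\|$ — to be the main obstacle, and it is exactly what the extra moment condition is designed to handle.
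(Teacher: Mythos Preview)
The paper does not supply its own proof of this theorem: Appendix~\ref{breiman} merely states the result and attributes it to \cite[Theorem~3]{wang:resnick:2021b}, with pointers to the broader Breiman literature. So there is nothing in the present paper to compare your argument against line by line.

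That said, your proposal is a correct and standard route to this type of result. The structure---first establishing the joint $\mathbb{M}$-convergence \eqref{e:beforeMult} by conditioning on the path of $\bxi(\cdot)$ and using dominated convergence (the envelope $t\PP(X>xb(t))$ is bounded in $t$), then pushing through $h$ with a cutoff $\phi_\epsilon$ and disposing of the small-$x$/large-$\|\by\|$ remainder via the $c'$-moment bound and Karamata---is exactly the template used in the references the paper cites (e.g.\ \cite{maulik:resnick:rootzen:2002, resnickbook:2007}). Your identification of the uniform-negligibility estimate as the crux, and your Karamata computation $t\,b(t)^{-c'}\EE[X^{c'}\ind_{\{X<\epsilon b(t)\}}]\to \tfrac{c}{c'-c}\epsilon^{c'-c}$, are both on the mark. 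One small point worth tightening in a full write-up: in the conditioning step for part~(i), make explicit that for a fixed path with $\bxi_\infty\in B^o$ there is a (path-dependent) threshold $s_0$ beyond which $\bxi(s)\in B$, and that $xb(t)>s_0$ eventually, so the sandwich $t\PP(X>\max(s_0,xb(t)))\le t\PP(\bxi(X)\in B,\,X/b(t)>x)\le t\PP(X/b(t)>x)$ closes; this is implicit in what you wrote but deserves one sentence.
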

\end{appendix}

\end{document}